\documentclass[envcountsame]{article}
\usepackage{ucs}
\usepackage[utf8x]{inputenc}
\usepackage{amsfonts,amsmath,amssymb,yhmath, amsbsy, amsthm}
\usepackage{mathrsfs}
\usepackage{color}
\usepackage[all]{xy}
\usepackage{stmaryrd}
\usepackage{tikz}
\usepackage[english]{babel}
\usepackage{url}

\newtheorem{lemma}{\textsc{Lemma}}
\newtheorem{theorem}{\textsc{Theorem}}
\newtheorem{definition}{\textsc{Definition}}
\newtheorem{proposition}{\textsc{Proposition}}

\definecolor{Red}{cmyk}{0,1,1,0}
\definecolor{Gray}{cmyk}{0,0,0,.3}


\newcommand{\Cwf}{\ensuremath{\mathbf{CwF}}}
\newcommand{\Cat}{\ensuremath{\mathbf{Cat}}}
\newcommand{\Ind}{\ensuremath{\mathbf{Ind}}}

\newcommand{\FL}{\ensuremath{\mathbf{FL}}}
\newcommand{\Fam}{\ensuremath{\mathbf{Fam}}}
\newcommand{\ML}{\ensuremath{\mathbf{ML}}}
\newcommand{\LCC}{\ensuremath{\mathbf{LCC}}}

\newcommand{\CwF}{\Cwf}
\newcommand{\CWFLCC}{\CwF^{\Id_{\mathrm{ext}}\Sigma\Pi}_{\mathrm{dem}}}
\newcommand{\CWFFL}{\CwF^{\Id_{\mathrm{ext}}\Sigma}_{\mathrm{dem}}}

\newcommand{\C}{\mathbb{C}}
\newcommand{\D}{\mathbb{D}}
\newcommand{\E}{\mathbb{E}}

\newcommand{\dom}{\mathrm{dom}}
\newcommand{\cod}{\mathrm{cod}}
\newcommand{\cext}{\!\cdot\!}

\newcommand{\tto}{\Rightarrow}
\newcommand{\ext}[1]{\langle #1 \rangle}
\newcommand{\subst}[1]{\langle #1 \rangle}
\renewcommand{\bar}[1]{\overline{#1}}

\newcommand{\iso}{\cong}
\newcommand{\natto}{\stackrel{\bullet}{\to}}
\newcommand{\modto}{\stackrel{\bullet\bullet}{\to}}
\newcommand{\arrow}[1]{{#1}^\rightarrow}

\newcommand{\dep}{\mathbf{T}}
\newcommand{\hof}[1]{T^{#1}}
\newcommand{\hf}[1]{\overrightarrow{#1}}
\renewcommand{\hat}[1]{\widehat{#1}}
\newcommand{\applyopen}[2]{\{ #1 \}  #2 }
\newcommand{\substopen}[2]{ #1 ^*#2}

\newcommand{\adj}{\dashv}

\newcommand{\nilc}{[]}

\newcommand{\p}{\mathrm{p}}
\newcommand{\q}{\mathrm{q}}

\newcommand{\snocvs}[2]{\langle #1, #2 \rangle}
\newcommand{\Te}[2]{#1 \vdash #2}
\newcommand{\comp}{\circ}

\newcommand{\id}{\mathrm{id}}
\newcommand{\Ty}{\mathrm{Type}}
\newcommand{\Sub}[2]{#1[#2]}
\newcommand{\sub}[2]{#1[#2]}
\newcommand{\ap}{\mathrm{ap}}
\newcommand{\indexed}[1]{\boldsymbol{#1}}
\newcommand{\Id}{\mathrm{I}}
\newcommand{\refl}{\mathrm{r}}
\newcommand{\pair}{\mathrm{pair}}

\newcommand{\Inverse}{\mathrm{Inv}}

\title{The Biequivalence\\ of Locally Cartesian Closed Categories\\ and
Martin-L\"of Type Theories}
\author{Pierre Clairambault and Peter Dybjer\\
University of Bath and Chalmers University of
  Technology}
\date{}
\begin{document}
\maketitle
\begin{abstract} {Seely's paper \emph{Locally cartesian closed
categories and type theory} contains a well-known result in
categorical type theory: that the category of locally cartesian closed
categories is equivalent to the category of Martin-L\"of type theories
with $\Pi, \Sigma$, and extensional identity types. However, Seely's
proof relies on the problematic assumption that substitution in types
can be interpreted by pullbacks. Here we prove a corrected version of
Seely's theorem: that the B\'enabou-Hofmann interpretation of
Martin-L\"of type theory in locally cartesian closed categories yields
a biequivalence of 2-categories. To facilitate the technical
development we employ categories with families as a substitute for
syntactic Martin-L\"of type theories. As a second result we prove that
if we remove $\Pi$-types the resulting categories with families are
biequivalent to left exact categories.

(The present paper without the appendix appears in the Proceedings of 
Typed Lambda Calculus and Applications, Novi Sad, Serbia, 1-3 June 2011.)
  }
\end{abstract}
\section{Introduction}

It is ``well-known" that locally cartesian closed categories (lcccs) are
equivalent to Martin-L\"of's intuitionistic type theory \cite{martinlof:hannover,martinlof:padova}. But how
{\em known} is it really? Seely's original proof \cite{seely:lccc} contains
a flaw, and the papers by Curien \cite{CurienPL:subi} and Hofmann
\cite{hofmann:csl} who address this flaw only show that Martin-L\"of type
theory can be interpreted in locally cartesian closed categories, but
not that this interpretation is an equivalence of categories provided
the type theory has $\Pi, \Sigma,$ and extensional identity types. Here we
complete the work and fully rectify Seely's result except that we do
not prove an equivalence of categories but a \emph{biequivalence}
of 2-categories. In fact, a significant part of the endeavour has been
to find an appropriate formulation of the result, and in particular to
find a suitable notion analogous to Seely's ``interpretation of
Martin-L\"of theories".

\paragraph{Categories with families and democracy.} Seely turns a given Martin-L\"of theory into a category where the objects are {\em closed} types and the morphisms from type $A$ to type $B$ are functions of type $A \to B$. Such categories are the objects of Seely's ``category of Martin-L\"of theories". 

Instead of syntactic Martin-L\"of theories we shall employ {\em
  categories with families (cwfs)} \cite{dybjer:torino}. A cwf is a
pair $(\C,T)$ where $\C$ is the category of contexts and explicit
substitutions, and $T : \C^{op} \to \Fam$ is a functor, where
$T(\Gamma)$ represents the family of sets of terms indexed by types in context
$\Gamma$ and $T(\gamma)$ performs the substitution of $\gamma$ in
types and terms. Cwf is an appropriate substitute for syntax for
dependent types: its definition unfolds to a variable-free calculus of
explicit substitutions \cite{dybjer:torino}, which is like
Martin-L\"of's \cite{martinlof:gbg92,tasistro:lic} except that
variables are encoded by projections. 
One advantage of this approach compared to Seely's is that we get a natural definition of morphism of cwfs, which preserves the structure of cwfs up to isomorphism. In contrast Seely's notion of ``interpretation of Martin-L\"of  theories'' is defined indirectly via the construction of an lccc associated with a Martin-L\"of theory, and basically amounts to a functor preserving structure between the corresponding lcccs, rather than directly as something which preserves all the ``structure" of Martin-L\"of theories.

To prove our biequivalences we require that our cwfs are {\em democratic}. This means that each context is {\em represented} by a type. Our results require us to build local cartesian closed structure in the category of contexts. To this end we use available constructions on types and terms, and by democracy such constructions can be moved back and forth between types and contexts. Since Seely works with closed types only he has no need for democracy.

\paragraph{The coherence problem.}
Seely interprets type substitution in Martin-L\"of theories as pullbacks in lcccs. However, this is problematic, since type substitution is already defined by induction on the structure of types, and thus fixed by the interpretation of the other constructs of type theory. It is not clear that the pullbacks can be chosen to coincide with this interpretation. 




In the paper \emph{Substitution up to isomorphism}  \cite{CurienPL:subi} Curien
describes the fundamental nature of this problem. He sets out
\begin{quotation}
  ... to solve a difficulty arising from a
  mismatch between syntax and semantics: in locally cartesian closed
  categories, substitution is modelled by pullbacks (more generally
  pseudo-functors), that is, only up to isomorphism, unless split
  fibrational hypotheses are imposed.  ... but not all semantics do
  satisfy them, and in particular not the general description of the
  interpretation in an arbitrary locally cartesian closed category. In
  the general case, we have to show that the isomorphisms between
  types arising from substitution are \emph{coherent} in a sense familiar to category theorists.
\end{quotation}
To solve the problem Curien introduces a calculus with explicit substitutions for Martin-L\"of type theory, with special terms witnessing applications of the type equality rule. In this calculus type equality can be interpreted as isomorphism in lcccs. The remaining coherence problem is to show that Curien's calculus is equivalent to the usual formulation of Martin-L\"of type theory, and Curien proves this result by cut-elimination.

Somewhat later, Hofmann \cite{hofmann:csl} gave an alternative
solution based on a technique which had been used by B\'enabou
\cite{BenabouJ:fibcfn} for constructing a {\em split} fibration from
an arbitrary fibration. In this way Hofmann constructed a model of
Martin-L\"of type theory with $\Pi$-types, $\Sigma$-types, and
(extensional) identity types from a locally cartesian closed
category. Hofmann used categories with attributes (cwa) in the sense
of Cartmell \cite{cartmell:apal} as his notion of model. In fact, cwas
and cwfs are closely related: the notion of cwf arises by
reformulating the axioms of cwas to make the connection with the usual
syntax of dependent type theory more transparent. Both cwas and cwfs
are split notions of model of Martin-L\"of type theory, hence the
relevance of B\'enabou's construction.

However, Seely wanted to prove an equivalence of categories. Hofmann
conjectured \cite{hofmann:csl}:
\begin{quotation}
We have now constructed a cwa over $\cal{C}$ which can be shown to be equivalent to $\cal{C}$ in some suitable 2-categorical sense.
\end{quotation}
Here we spell out and prove this result, and thus fully rectify Seely's theorem.  It should be apparent from what follows that this is not a trivial exercise. In our setting the result is a biequivalence analogous to B\'enabou's (much simpler) result: that the 2-category of fibrations (with non-strict morphisms) is biequivalent to the 2-category of split fibrations (with non-strict morphisms).

While carrying out the proof we noticed that if we remove $\Pi$-types the resulting 2-category of cwfs is biequivalent to the 2-category of left exact (or finitely complete) categories. We present this result in parallel with the main result.

\paragraph{Plan of the paper.} 
An equivalence of categories consists of a pair of functors which are
inverses up to natural isomorphism. Biequivalence is the appropriate
notion of equivalence for bicategories \cite{leinster:bicategories}. Instead of functors we have
{\em pseudofunctors} which only preserve identity and composition up
to isomorphism. Instead of natural isomorphisms we have {\em pseudonatural transformations} which are inverses up to {\em invertible
  modification}.

A 2-category is a strict bicategory, and the remainder of the paper
consists of constructing two biequivalences of 2-categories. In
Section 2 we introduce cwfs and show how to turn a cwf into an indexed
category. In Section 3 we define the 2-categories $\CWFFL$ of
democratic cwfs which support extensional identity types and
$\Sigma$-types and $\CWFLCC$ which also support $\Pi$-types. We
also define the notions of pseudo cwf-morphism and pseudo
cwf-transformation.  In Section 4 we define the 2-categories $\FL$ of
left exact categories and $\LCC$ of locally cartesian closed
categories. We show that there are forgetful 2-functors $U :
\CWFFL \to \FL$ and $U : \CWFLCC \to \LCC$. In section 5 we construct
the pseudofunctors $H : \FL \to \CWFFL$ and $H : \LCC \to \CWFLCC$
based on the B\'enabou-Hofmann construction. In section 6 we prove
that $H$ and $U$ give rise to the biequivalences of $\FL$ and $\CWFFL$
and of $\LCC$ and $\CWFLCC$.

An appendix containing the full proof of the biequivalences can be found at
\url{http://www.cse.chalmers.se/~peterd/papers/categorytypetheory.html/}.

\paragraph{Acknowledgement.} We are grateful to the anonymous
reviewers for several useful remarks which have helped us improve the
paper. We would also like to acknowledge the support of the
(UK) EPSRC grant RC-CM1025 for the first author and of the (Swedish) Vetenskapsr\aa det grant ``Types for Proofs and Programs'' for the second author.

\section{Categories with Families}

\subsection{Definition}

\begin{definition}
Let $\Fam$ be the category of families of sets defined as follows. 
An object is a pair $(A,B)$ where $A$ is a set and $B(x)$ is a family of sets indexed by $x \in A$. A morphism with source $(A,B)$
and target $(A',B')$
is a pair consisting of a function $f:A\to A'$ and a family of functions $g(x):B(x) \to B'(f(x))$ indexed by $x\in A$.
\end{definition}

Note that $\Fam$ is equivalent to the arrow category $\mathbf{Set}^{\to}$.

\begin{definition}A \textbf{category with families (cwf)} consists of the following data:
\begin{itemize}
\item A base category $\C$. Its objects represent \emph{contexts} and
  its morphisms represent \emph{substitutions}. The identity map is
  denoted by $\id : \Gamma \to \Gamma$ and the composition of maps
  $\gamma : \Delta \to \Gamma$ and $\delta : \Xi \to \Delta : \Xi \to
  \Gamma$ is denoted by $\gamma \circ \delta$ or more briefly by
  $\gamma \delta : \Xi \to \Gamma$.
\item A functor $T:\C^{op}\to \Fam$. $T(\Gamma)$ is a pair, where the
  first component represents the set $\Ty(\Gamma)$ of {\em types} in context
  $\Gamma$, and the second component represents the type-indexed
  family $(\Gamma\vdash A)_{A\in \Ty(\Gamma)}$ of sets of {\em terms} in
  context $\Gamma$. We write $a : \Gamma \vdash A$ for a term $a \in
  \Gamma \vdash A$. Moreover, if $\gamma$ is a morphism in $\C$, then
  $T(\gamma)$ is a pair consisting of the {\em type substitution} function
  $A \mapsto A[\gamma]$ and the type-indexed family of {\em term
  substitution} functions $a \mapsto a[\gamma]$.
\item A \emph{terminal object} $\nilc$ of $\C$ which represents the
  \emph{ empty context} and a terminal map $\subst{} : \Delta \to
  \nilc$ which represents the \emph{empty substitution}.
\item A \emph{context comprehension} which to an object $\Gamma$ in
  $\C$ and a type $A \in \Ty(\Gamma)$ associates an
  object $\Gamma\cext A$ of $\C$, a morphism $\p_A:\Gamma\cext A\to
  \Gamma$ of $\C$ and a term $\q\in \Gamma\cext A\vdash A[\p]$ such the
  following universal property holds: for each object $\Delta$ in
  $\C$, morphism $\gamma : \Delta \to \Gamma$, and term $a\in \Delta
  \vdash A[\gamma]$, there is a unique morphism $\theta =
  \ext{\gamma,a} : \Delta \to \Gamma\cext A$, such that $\p_A\circ\theta
  = \gamma$ and $\q[\theta] = a$. (We remark that a related notion of comprehension for hyperdoctrines was introduced by Lawvere \cite{lawvere:hyperdoctrines}.)
\end{itemize}
\end{definition}

The definition of cwf can be presented as a system of axioms and inference rules for a variable-free generalized algebraic formulation of the most basic rules of dependent type theory \cite{dybjer:torino}. The correspondence with standard syntax is explained by Hofmann \cite{hofmann:cambridge} and the equivalence is proved in detail by Mimram \cite{mimram:cwf}. 
The easiest way to understand this correspondence might be as a translation between the standard lambda calculus based syntax of dependent type theory and the language of cwf-combinators. In one direction the key idea is to translate a variable (de Bruijn number) to a projection of the form $\q[\p^n ]$. In the converse direction, recall that the cwf-combinators yield a calculus of explicit substitutions whereas substitution is a meta-operation in usual lambda calculus. When we translate cwf-combinators to lambda terms, we execute the explicit substitutions, using the equations for substitution in types and terms as rewrite rules.
The equivalence proof is similar to the proof of the equivalence of cartesian closed categories and the simply typed lambda calculus.

We shall now define what it means that a cwf supports extra structure corresponding to the rules for the various type formers of Martin-L\"of type theory.

\begin{definition}
A cwf {\em supports (extensional) identity types} provided the following conditions hold:
\begin{description}
\item[Form.] If $A \in \Ty(\Gamma)$ and $a, a' : \Gamma\vdash A$, there is $\Id_A(a, a') \in \Ty(\Gamma)$;
\item[Intro.] If $a:\Gamma\vdash A$, there is $\refl_{A,a}:\Gamma\vdash \Id_A(a, a)$;
\item[Elim.] If $c:\Gamma\vdash \Id_{A}(a,a')$ then $a=a'$ and $c= \refl_{A, a}$.
\end{description}
Moreover, we have stability under substitution: if
$\delta : \Delta \to \Gamma$ then
\begin{eqnarray*}
\Id_A(a,a')[\delta] &=& \Id_{A[\delta]}(a[\delta], a'[\delta])\\
\refl_{A, a}[\delta] &=& \refl_{A[\delta], a[\delta]}
\end{eqnarray*}
\end{definition}

\begin{definition}
A cwf {\em supports $\Sigma$-types} iff the following conditions hold:
\begin{description}
\item[Form.] If $A\in \Ty(\Gamma)$ and $B\in \Ty(\Gamma \cext A)$, there is $\Sigma(A,B)\in \Ty(\Gamma)$,
\item[Intro.] If $a:\Gamma\vdash A$ and $b:\Gamma\vdash B[\ext{\id,a}]$, there is $\pair(a,b):\Gamma\vdash \Sigma(A,B)$,
\item[Elim.] If $a:\Gamma \vdash \Sigma(A,B)$, there are $\pi_1(a):\Gamma\vdash A$ and $\pi_2(a):\Gamma\vdash B[\ext{\id,\pi_1(a)}]$ such that
\begin{eqnarray*}
\pi_1(\pair(a,b)) &=& a\\
\pi_2(\pair(a,b)) &=& b\\
\pair(\pi_1(c),\pi_2(c)) &=& c
\end{eqnarray*}
\end{description}
Moreover, we have stability under substitution:
\begin{eqnarray*}
\Sigma(A,B)[\delta] &=& \Sigma(A[\delta],B[\ext{\delta\circ \p,\q}])\\
\pair(a,b)[\delta] &=& \pair(a[\delta],b[\delta])\\
\pi_1(c)[\delta] &=& \pi_1(c[\delta])\\
\pi_2(c)[\delta] &=& \pi_2(c[\delta])
\end{eqnarray*}
\end{definition}
Note that in a cwf which supports extensional identity types and $\Sigma$-types surjective pairing, $\pair(\pi_1(c),\pi_2(c)) = c$, follows from the other conditions \cite{martinlof:padova}.

\begin{definition}
A cwf {\em supports $\Pi$-types} iff the following conditions hold:
\begin{description}
\item[Form.] If $A\in \Ty(\Gamma)$ and $B\in \Ty(\Gamma \cext A)$, there is $\Pi(A,B)\in \Ty(\Gamma)$.
\item[Intro.] If $b: \Te{{\Gamma}\cext{A}}{B}$, there is $\lambda(b) : \Te{\Gamma}{\Pi (A,B)}$.
\item[Elim.] If $c : \Te{\Gamma}{\Pi(A,B)}$ and $a:\Te{\Gamma}{A}$ then there is a term $\ap(c,a) :
\Te{\Gamma}{\Sub{B}{\snocvs{\id}{a}}}$ such that
\begin{eqnarray*}
\ap(\lambda (b),a) &=& \sub{b}{\snocvs{\id}{a}} : \Te{\Gamma}{\Sub{B}{\snocvs{\id}{a}}}\\
c &=& \lambda (\ap(\sub{c}{\p},\q )) :  \Te{\Gamma}{\Pi
(A,B)}
\end{eqnarray*}
\end{description}
Moreover, we have stability under substitution:
\begin{eqnarray*}
\Sub{\Pi (A,B)}{\gamma} &=&
\Pi (\Sub{A}{\gamma},\Sub{B}{\snocvs{\gamma \comp \p}{\q}})\\
\sub{\lambda (b)}{\gamma} &=& \lambda (\sub{b}{\snocvs{\gamma \comp \p}{\q}})\\
\sub{\ap(c,a)}{\gamma} &=& \ap(\sub{c}{\gamma},\sub{a}{\gamma})
\end{eqnarray*}
\end{definition}

\begin{definition}
A cwf $(\C, T)$ is \emph{democratic} iff for each object $\Gamma$ of $\C$ there is $\bar{\Gamma} \in \Ty(\nilc)$ and an isomorphism
 $\Gamma \iso_{\gamma_\Gamma} \nilc \cext \bar{\Gamma}$. Each substitution $\delta : \Delta \to \Gamma$ can then be represented
by the term $\bar{\delta} = q[\gamma_\Gamma\delta\gamma_\Delta^{-1}] : \nilc \cext \bar{\Delta}\vdash \bar{\Gamma}[p]$.
\end{definition}
Democracy does not correspond to a rule of Martin-L\"of type theory. However, a cwf generated inductively by the standard rules of Martin-L\"of type theory with a one element type  $\mathrm{N_1}$ and $\Sigma$-types is democratic, since we can associate $\mathrm{N_1}$ to the empty context and the closed type $\Sigma x_1 : A_1. \cdots .\Sigma x_n : A_n$ to a context $x_1 : A_1, \ldots , x_n : A_n$ by  induction on $n$.

\subsection{The Indexed Category of Types in Context}

We shall now define the indexed category associated with a cwf. This will play a crucial role and in particular introduce the notion of \emph{isomorphism} of types.

\begin{proposition}[The Context-Indexed Category of Types]
If $(\C, T)$ is a cwf, then we  can define a functor $\indexed{T}: \C^{op} \to \Cat$ as follows:
\begin{itemize}
\item The objects of $\indexed{T}(\Gamma)$ are types in $\Ty(\Gamma)$. If $A,B \in \Ty(\Gamma)$, then a morphism in $\indexed{T}(\Gamma)(A,B)$ is a morphism 
$\delta: \Gamma\cext A\to \Gamma\cext B$ in $\C$ such that $\p\delta = \p$.
\item If $\gamma : \Delta \to \Gamma$ in $\C$, then $\indexed{T}(\gamma) : \Ty(\Gamma) \rightarrow \Ty(\Delta)$ maps an object $A \in \Ty(\Gamma)$ to $A[\gamma]$ and a
morphism $\delta  : \Gamma \cext A \to \Gamma \cext B$ to $\ext{\p, \q[\delta\ext{\gamma \circ \p,\q}]} : \Delta \cext A[\gamma] \to \Delta\cext B[\gamma]$.
\end{itemize}
\end{proposition}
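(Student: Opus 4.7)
The plan is to verify, in order, the three assertions implicit in the statement: (i) each $\indexed{T}(\Gamma)$ is a category; (ii) each assignment $\gamma \mapsto \indexed{T}(\gamma)$ is a functor between the corresponding fibre categories; and (iii) $\indexed{T}$ itself is functorial, i.e.\ $\indexed{T}(\id_\Gamma) = \id_{\indexed{T}(\Gamma)}$ and $\indexed{T}(\gamma \comp \gamma') = \indexed{T}(\gamma') \comp \indexed{T}(\gamma)$. Throughout, the single technical tool is the universal property of context comprehension, used mainly in its \emph{uniqueness} form: any $\theta : \Delta \to \Gamma \cext A$ equals $\ext{\p\comp\theta,\q[\theta]}$.

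For (i), the identity on $A$ is $\id_{\Gamma \cext A}$, which trivially satisfies $\p\comp\id = \p$; closure under composition follows since $\p\comp\delta = \p$ and $\p\comp\delta' = \p$ give $\p\comp(\delta'\comp\delta) = \p$, and associativity and unit laws are inherited from $\C$. For (ii), I first check that $\indexed{T}(\gamma)(\delta) = \ext{\p,\q[\delta\ext{\gamma\comp\p,\q}]}$ lies in the correct hom-set, which is immediate from $\p\comp\ext{\p,-} = \p$. Preservation of identities reduces to $\indexed{T}(\gamma)(\id) = \ext{\p,\q[\ext{\gamma\comp\p,\q}]} = \ext{\p,\q}$, and uniqueness gives $\ext{\p,\q} = \id$. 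Preservation of composition is the one genuinely delicate calculation: writing $\phi = \indexed{T}(\gamma)(\delta)$ and comparing $\p$- and $\q$-components, one establishes the auxiliary identity
\[
\ext{\gamma\comp\p,\q} \comp \phi \;=\; \delta \comp \ext{\gamma\comp\p,\q},
\]
whose verification uses only $\p\comp\delta = \p$ together with the standard equations $\p\comp\ext{\sigma,t} = \sigma$ and $\q[\ext{\sigma,t}] = t$. Post-composing with $\delta'$ and applying the comprehension equations to both sides then yields $\indexed{T}(\gamma)(\delta')\comp\indexed{T}(\gamma)(\delta) = \indexed{T}(\gamma)(\delta'\comp\delta)$.

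For (iii), on objects functoriality is inherited directly from $T:\C^{op}\to\Fam$, since $A[\id] = A$ and $A[\gamma\comp\gamma'] = A[\gamma][\gamma']$. For morphisms, the identity case $\indexed{T}(\id)(\delta) = \ext{\p,\q[\delta\ext{\p,\q}]} = \ext{\p,\q[\delta]} = \ext{\p\comp\delta,\q[\delta]} = \delta$ is again a direct appeal to uniqueness, using $\p\comp\delta = \p$. The composition case rests on a second auxiliary identity,
\[
\ext{\gamma\comp\p,\q} \comp \ext{\gamma'\comp\p,\q} \;=\; \ext{(\gamma\comp\gamma')\comp\p,\q},
\]
which is again a one-line $\p$-$\q$ check. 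Substituting this into the expansion of $\indexed{T}(\gamma')(\indexed{T}(\gamma)(\delta)) = \ext{\p,\q[\phi\comp\ext{\gamma'\comp\p,\q}]}$ with $\phi$ as above gives exactly $\indexed{T}(\gamma\comp\gamma')(\delta) = \ext{\p,\q[\delta\comp\ext{(\gamma\comp\gamma')\comp\p,\q}]}$.

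The main obstacle is the composition case in (ii); everything else is either pure $T$-functoriality or an immediate invocation of the uniqueness clause of the comprehension axiom. It is worth remarking that all equalities here are \emph{on the nose}, not merely up to isomorphism: that is possible precisely because a cwf, unlike an lccc, is a split notion of model in which type and term substitution are literally functorial, which is why the coherence issues discussed in the introduction do not arise at this point in the development.
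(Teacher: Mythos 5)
Your proof is correct: the paper states this proposition without proof (neither the main text nor the appendix verifies it), and your direct verification via the uniqueness clause of context comprehension — including the two auxiliary identities $\ext{\gamma\p,\q}\circ\indexed{T}(\gamma)(\delta) = \delta\circ\ext{\gamma\p,\q}$ and $\ext{\gamma\p,\q}\circ\ext{\gamma'\p,\q} = \ext{\gamma\gamma'\p,\q}$ — is exactly the routine argument the authors are implicitly relying on. All steps check out.
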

We write $A \iso_\theta B$ if $\theta: A\to B$ is an isomorphism in $\indexed{T}(\Gamma)$.
If $a : \Gamma\vdash A$, we write $\{\theta\}(a) = q[\theta\subst{id, a}] : \Gamma\vdash B$ for the \emph{coercion} of $a$ to type $B$ and $a =_\theta b$ if $a = \{\theta\}(b)$. Moreover, we get an alternative formulation of democracy.
\begin{proposition} $(\C, T)$ is democratic iff the functor from
  $\indexed{T}(\nilc)$ to $\C$, which maps a closed type $A$ to the
  context $\nilc \cext A$, is an equivalence of categories. 
\end{proposition}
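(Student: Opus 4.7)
Write $F : \indexed{T}(\nilc) \to \C$ for the functor in question: on objects $F(A) = \nilc \cext A$, and on morphisms $F(\delta) = \delta$, viewing a morphism $\delta : \nilc \cext A \to \nilc \cext B$ satisfying $\p\delta = \p$ simply as a morphism of $\C$. The plan is to show that $F$ is always fully faithful (independently of any democracy hypothesis), so that being an equivalence of categories reduces to essential surjectivity, which is literally the definition of democracy.

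First I would observe that $F$ is fully faithful. A morphism in $\indexed{T}(\nilc)(A,B)$ is by definition a morphism $\delta : \nilc \cext A \to \nilc \cext B$ in $\C$ with $\p\delta = \p$, where $\p : \nilc \cext B \to \nilc$. But $\nilc$ is terminal, so every pair of morphisms with codomain $\nilc$ is equal; hence the condition $\p\delta = \p$ is vacuous, and $\indexed{T}(\nilc)(A,B) = \C(\nilc \cext A, \nilc \cext B)$. Thus $F$ is fully faithful on the nose, and in particular reflects isomorphisms.

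Next I would handle the two implications. For the forward direction, assume $(\C,T)$ is democratic. Then for every $\Gamma \in \C$ the data $\bar\Gamma \in \Ty(\nilc)$ together with $\gamma_\Gamma : \Gamma \iso \nilc \cext \bar\Gamma$ exhibit $\Gamma$ as isomorphic to $F(\bar\Gamma)$, so $F$ is essentially surjective and therefore an equivalence. For the converse, assume $F$ is an equivalence; then it is essentially surjective, so given any $\Gamma \in \C$ we may pick some $A \in \Ty(\nilc)$ together with an iso $\Gamma \iso \nilc \cext A$ in $\C$, and setting $\bar\Gamma := A$ (together with this chosen iso as $\gamma_\Gamma$) witnesses democracy.

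There is no real obstacle here: the only point worth emphasising is the fully-faithfulness observation, which rests entirely on $\nilc$ being terminal. The represented substitutions $\bar\delta$ mentioned in the definition of democracy need not be discussed explicitly, since under the identification above they correspond to morphisms in $\indexed{T}(\nilc)$ via the fully faithful $F$; once existence of the iso $\gamma_\Gamma$ is secured, this representation is automatic.
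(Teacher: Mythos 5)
Your proof is correct: the paper states this proposition without an explicit proof, and your argument --- $F$ is fully faithful because $\nilc$ is terminal so the condition $\p\delta=\p$ is vacuous, and essential surjectivity is literally the definition of democracy --- is exactly the intended one, mirroring the fully-faithful-plus-essentially-surjective argument the paper's appendix uses for the analogous equivalence $\indexed{T}(\Gamma)\simeq\C/\Gamma$. Your closing remark that the represented substitutions $\bar\delta$ are automatic once the isomorphisms $\gamma_\Gamma$ exist is also consistent with the paper, which treats that clause as a derived notation rather than part of the definition of democracy.
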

Seely's category $\ML$ of Martin-L\"of theories \cite{seely:lccc} is
essentially the category of categories $\indexed{T}(\nilc)$ of closed
types.

\paragraph{Fibres, slices and lcccs.} Seely's interpretation of type
theory in lcccs relies on the idea that a type $A\in \Ty(\Gamma)$ can
be interpreted as its \emph{display map}, that is, a morphism with
codomain $\Gamma$. For instance, the type $\texttt{list}(n)$ of lists
of length $n:\mathtt{nat}$ would be mapped to the function $l:
\mathtt{list} \to \mathtt{nat}$ which to each list associates its
length. Hence, types and terms in context $\Gamma$ are
interpreted in the \emph{slice category} $\C/\Gamma$, since terms are interpreted as global sections. Syntactic types
are connected with types-as-display-maps by the following result, an analogue of which was one of the cornerstones of Seely's paper.

\begin{proposition}
  If $(\C, T)$ is democratic and supports extensional identity and $\Sigma$-types, then $\indexed{T}(\Gamma)$ and $\C/\Gamma$ are equivalent
  categories for all $\Gamma$.
\label{fibres_vs_slices}
\end{proposition}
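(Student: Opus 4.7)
Plan: The functor $F: \indexed{T}(\Gamma) \to \C/\Gamma$ sends each $A \in \Ty(\Gamma)$ to the display map $(\Gamma\cext A, \p_A)$ and acts as the identity on arrows. By definition, a morphism $A\to B$ in $\indexed{T}(\Gamma)$ is a $\C$-morphism $\delta: \Gamma\cext A \to \Gamma\cext B$ satisfying $\p_B\delta = \p_A$, which is exactly a morphism $\p_A \to \p_B$ in $\C/\Gamma$, so $F$ is tautologically fully faithful. The substantive content of the proposition is essential surjectivity: for each $f: \Delta\to\Gamma$ we must produce a type $A_f \in \Ty(\Gamma)$ together with an isomorphism $\phi: \Gamma\cext A_f \iso \Delta$ in $\C$ satisfying $f\phi = \p_{A_f}$.

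To build $A_f$, invoke democracy to obtain closed types $\bar\Delta, \bar\Gamma \in \Ty(\nilc)$, isomorphisms $\gamma_\Delta: \Delta \iso \nilc\cext\bar\Delta$ and $\gamma_\Gamma: \Gamma \iso \nilc\cext\bar\Gamma$, and the term $\bar f \in \nilc\cext\bar\Delta \vdash \bar\Gamma[\p]$ representing $f$. Informally $A_f$ should be the fibre $\Sigma\, y{:}\bar\Delta.\,\Id_{\bar\Gamma}(\bar f(y), x)$ with $x$ ranging over $\Gamma$; concretely, set
\[
A_f \;=\; \Sigma\bigl(\,\bar\Delta[\nilvs],\;\Id_{\bar\Gamma[\nilvs]}\bigl(\bar f[\ext{\nilvs,\q}],\;\q[\gamma_\Gamma\p]\bigr)\bigr) \;\in\; \Ty(\Gamma),
\]
where the outer $\nilvs$ is the unique map $\Gamma \to \nilc$ and the inner one is its weakening. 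Well-formedness follows from the cwf substitution laws together with stability of $\Sigma$ and $\Id$.

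The forward map is $\phi = \gamma_\Delta^{-1} \circ \ext{\nilvs, \pi_1(\q)}: \Gamma\cext A_f \to \Delta$, extracting the first projection and transporting across democracy. The inverse is $\psi = \ext{f,\, \pair(\q[\gamma_\Delta], \refl)}: \Delta \to \Gamma\cext A_f$, whose second component is well-typed precisely because $f(y) = f(y)$ reflexively (after simplification, $\bar f[\gamma_\Delta] = \q[\gamma_\Gamma f]$ on the nose via the universal property of context comprehension). The commuting triangle $f\phi = \p_{A_f}$ is where extensionality does the critical work: $\pi_2(\q)$ inhabits an extensional identity type, so its two arguments are \emph{literally} equal, which unpacks to the required equation. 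That $\phi$ and $\psi$ are mutually inverse reduces, via the universal property of context comprehension, to checking the $\p$- and $\q$-projection equations, which follow from $\Sigma$-$\beta$, surjective pairing, and the fact that every term of an extensional $\Id$-type is $\refl$.

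The main obstacle is a routine but fiddly bookkeeping of substitutions: the formula for $A_f$ hides a web of nested substitutions through the democracy isomorphisms, and establishing $f\phi = \p_{A_f}$ \emph{strictly} (as opposed to up to iso) demands threading the stability equations carefully. The conceptual nerve of the proof is that extensional identity types collapse ``$\bar f$ of the first projection equals the current point of $\Gamma$'' from a propositional assertion to a definitional equality, which is exactly what enables $\phi$ to be a genuine $\C/\Gamma$-morphism. Consistent with the hypotheses, no use is made of $\Pi$-types.
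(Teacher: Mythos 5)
Your proof is correct and follows essentially the same route as the paper: the tautologically fully faithful functor $A \mapsto \p_A$, essential surjectivity via the inverse image type $\Sigma(\bar\Delta[\ext{}], \Id_{\bar\Gamma[\ext{}]}(-,-))$, and the same pair of mutually inverse maps $\gamma_\Delta^{-1}\ext{\ext{},\pi_1(\q)}$ and $\ext{f,\pair(\q[\gamma_\Delta],\refl)}$, with extensionality and uniqueness of identity proofs doing the same work in verifying the triangle and the inverse equations. The only (immaterial) difference is the order of the arguments to the identity type.
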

\begin{proof}
To each object (type) $A$ in $\indexed{T}(\Gamma)$ we associate the
object $\p_{A}$ in $\C/\Gamma$. A morphism from $A$ to $B$ in
$\indexed{T}(\Gamma)$ is by definition a morphism from $\p_{A}$ to
$\p_{B}$ in $\C/\Gamma$.

Conversely, to each object (morphism) $\delta: \Delta \to
\Gamma$ of $\C/\Gamma$ we associate a type in $\Ty(\Gamma)$. This is the inverse image $x : \Gamma \vdash \Inverse(\delta)(x)$ which is defined type-theoretically by
$$
\Inverse(\delta)(x) = \Sigma y : \bar{\Delta}. \Id_{\bar{\Gamma}}(\bar{x},\bar{\delta}(y))
$$
written in ordinary notation. In cwf combinator notation it becomes
$$
\Inverse(\delta) =
\Sigma(\bar{\Delta}[\ext{}],
\Id_{\bar{\Gamma}[\ext{}]}(\q[\gamma_\Gamma\p],\bar{\delta}[\ext{\ext{},\q}]) \in
\Ty(\Gamma)
$$
These associations yield an equivalence of categories since $\p_{\Inverse(\delta)}$ and $\delta$ are isomorphic in $\C/\Gamma$.
\end{proof}

It is easy to see that $\indexed{T}(\Gamma)$ has binary products if the cwf supports $\Sigma$-types and exponentials if it supports $\Pi$-types. Simply define $A \times B = \Sigma(A,B[\p])$ and $B^A = \Pi(A,B[\p])$. Hence by Proposition 9 it follows that $\C/\Gamma$ has products and $\C$ has finite limits in any democratic cwf which supports extensional identity types and $\Sigma$-types. If it supports $\Pi$-types too, then $\C/\Gamma$ is cartesian closed and $\C$ is locally cartesian closed.


\section{The 2-Category of Categories with Families}

\subsection{Pseudo Cwf-Morphisms}

A notion of {\em strict cwf-morphism} between cwfs $(\C, T)$ and $(\C', T')$ was defined by Dybjer \cite{dybjer:torino}. It is a pair $(F, \sigma )$, where 
$F: \C \to \C'$ is a functor and 
$\sigma : T \natto T'F$ is a natural transformation of family-valued functors,
such that terminal objects and context comprehension are preserved on the nose.
Here we need a weak version where the terminal object, context comprehension, and substitution of types and terms of a cwf are only preserved up to isomorphism. The pseudo-natural transformations needed to prove our biequivalences will be families of cwf-morphisms which do not preserve cwf-structure on the nose.

The definition of pseudo cwf-morphism will be analogous to that of
\emph{strict} cwf-morphism, but cwf-structure will only be preserved
up to coherent isomorphism.

\begin{definition}A \textbf{pseudo cwf-morphism} from $(\C, T)$ to $(\C', T')$ is a pair $(F, \sigma)$ where:
\begin{itemize}
\item $F: \C \to \C'$ is a functor,
\item For each context $\Gamma$ in $\C$, $\sigma_\Gamma$ is a $\Fam$-morphism from $T \Gamma$ to $T' F \Gamma$. We will write $\sigma_\Gamma(A) : \Ty'(F\Gamma)$ for the type component and $\sigma^A_\Gamma(a) : F\Gamma \vdash \sigma_\Gamma(A)$ for the term component of this morphism.
\end{itemize}
The following preservation properties must be satisfied:
\begin{itemize}
\item Substitution is preserved: For each context $\delta : \Delta \to \Gamma$ in $\C$ and $A\in \Ty(\Gamma)$, there is an isomorphism of types
$\theta_{A, \delta} : \sigma_\Gamma(A)[F\delta] \to \sigma_\Delta(A[\delta])$ such that substitution on terms is also preserved, that is,
$\sigma_\Delta^{A[\gamma]}(a[\gamma]) =_{\theta_{A,\gamma}} \sigma_\Gamma^A(a) [F \gamma]$.
\item The terminal object is preserved: $F\nilc$ is terminal.
\item Context comprehension is preserved: $F(\Gamma\cext A)$ with the projections $F(\p_A)$ and $\{\theta_{A, \p}^{-1}\}(\sigma_{\Gamma\cext A}^{A[p]}(\q_A))$
is a context comprehension of $F\Gamma$ and $\sigma_{\Gamma}(A)$. Note that the universal property on context comprehensions provides a unique
isomorphism $\rho_{\Gamma, A}: F(\Gamma\cext A) \to F\Gamma\cext \sigma_\Gamma(A)$ which preserves projections.
\end{itemize}
These data must satisfy naturality and coherence laws which amount to the fact that if we extend $\sigma_\Gamma$ to a functor
$\indexed{\sigma}_\Gamma : \indexed{T}(\Gamma) \to \indexed{T'}F(\Gamma)$, then $\indexed{\sigma}$ is a pseudo natural transformation from $\indexed{T}$ to $\indexed{T'}F$.  This functor is defined by
$\indexed{\sigma}_\Gamma(A) = \sigma_\Gamma(A)$ on an object $A$ and $\indexed{\sigma}_\Gamma(f) = \rho_{\Gamma, B} F(f) \rho^{-1}_{\Gamma, A}$ on a morphism $f: A\to B$.
\end{definition}

A consequence of this definition is that all cwf structure is preserved.

\begin{proposition}
Let $(F, \sigma)$ be a pseudo cwf-morphism from $(\C, T)$ to $(\C', T')$.
\begin{itemize}
\item[(1)] Then substitution extension is preserved: for all $\delta : \Delta \to \Gamma$ in $\C$ and $a:\Delta \vdash A[\delta]$, we have $F(\subst{\delta,a}) = \rho_{\Gamma,A}^{-1}\subst{F\delta,\applyopen{\theta_{A,\delta}^{-1}}{(\sigma_\Delta^{A[\delta]}(a))}}$.
\item[(2)] Redundancy terms/sections: for all $a\in \Gamma\vdash A$, $\sigma_\Gamma^A(a) = \q[\rho_{\Gamma, A} F(\ext{\id, a})]$.
\end{itemize}
\label{properties}
\end{proposition}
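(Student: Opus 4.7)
The plan is to prove both parts by exploiting the universal property of context comprehension at $F\Gamma \cext \sigma_\Gamma(A)$, combined with the defining equations of $\rho_{\Gamma,A}$ and the coherence built into $\indexed{\sigma}$ being a pseudo natural transformation.

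For part (1), I would observe that to establish $F(\subst{\delta,a}) = \rho_{\Gamma,A}^{-1}\subst{F\delta,\applyopen{\theta_{A,\delta}^{-1}}{(\sigma_\Delta^{A[\delta]}(a))}}$ it suffices, by the uniqueness clause of the universal property, to check that the composite $\rho_{\Gamma,A}\comp F(\subst{\delta,a})$ satisfies the two defining equations of $\subst{F\delta,\applyopen{\theta_{A,\delta}^{-1}}{(\sigma_\Delta^{A[\delta]}(a))}}$. The projection equation is immediate: since $\rho_{\Gamma,A}$ preserves projections and $F$ is a functor,
\[
\p_{\sigma_\Gamma(A)}\comp\rho_{\Gamma,A}\comp F(\subst{\delta,a}) = F(\p_A)\comp F(\subst{\delta,a}) = F(\p_A\comp\subst{\delta,a}) = F\delta.
\]

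The term equation is the content of the argument. By the definition of $\rho_{\Gamma,A}$ as the mediating iso, $\q_{\sigma_\Gamma(A)}[\rho_{\Gamma,A}] = \applyopen{\theta_{A,\p}^{-1}}{(\sigma_{\Gamma\cext A}^{A[\p]}(\q_A))}$. Substituting along $F(\subst{\delta,a})$ and applying the substitution preservation law for terms of $\sigma$ (instantiated at $A[\p]$, $\q_A$ and $\subst{\delta,a}$, using $A[\p][\subst{\delta,a}] = A[\delta]$ and $\q_A[\subst{\delta,a}] = a$), one obtains
\[
\q_{\sigma_\Gamma(A)}[\rho_{\Gamma,A}\comp F(\subst{\delta,a})] =_{\theta'} \sigma_\Delta^{A[\delta]}(a),
\]
where $\theta'$ is the isomorphism produced by combining $\theta_{A,\p}$ transported along $F(\subst{\delta,a})$ with $\theta_{A[\p],\subst{\delta,a}}$. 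The pseudo-naturality coherence for $\indexed{\sigma}$ applied to the composable pair $\subst{\delta,a}$ and $\p_A$ (whose composite is $\delta$) identifies $\theta'$ with $\theta_{A,\delta}$, so the term equation becomes exactly $\applyopen{\theta_{A,\delta}^{-1}}{(\sigma_\Delta^{A[\delta]}(a))}$, completing (1).

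Part (2) is then a corollary: taking $\delta = \id_\Gamma$ in (1) and using the unit coherence of $\indexed{\sigma}$, which forces $\theta_{A,\id}$ to be the identity on $\sigma_\Gamma(A)$, we obtain $F(\subst{\id,a}) = \rho_{\Gamma,A}^{-1}\subst{\id_{F\Gamma},\sigma_\Gamma^A(a)}$; composing with $\rho_{\Gamma,A}$ and projecting gives $\q[\rho_{\Gamma,A}\comp F(\subst{\id,a})] = \sigma_\Gamma^A(a)$.

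The main obstacle will be the bookkeeping of the various $\theta_{-,-}$ isomorphisms and making precise the invocation of the pseudo-naturality coherence, in particular matching $\theta_{A[\p],\subst{\delta,a}}$ composed with $\theta_{A,\p}$ (suitably transported) to $\theta_{A,\delta}$. Everything else is a mechanical unwinding of definitions.
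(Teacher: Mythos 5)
Your proposal is correct and follows essentially the same route as the paper's own proof: reduce part (1) via the universal property of $F\Gamma\cext\sigma_\Gamma(A)$ to the projection equation and the $\q$-component equation, prove the latter by combining the defining equation of $\rho_{\Gamma,A}$, preservation of term substitution applied to $\q_A$ along $\subst{\delta,a}$, and the coherence law $\theta_{A,\delta}=\theta_{A[\p],\subst{\delta,a}}\circ\indexed{T'}(F\subst{\delta,a})(\theta_{A,\p})$ together with composition of coercions; then deduce part (2) by setting $\delta=\id$ and using $\theta_{A,\id}=\id$. This matches the paper's appendix argument step for step.
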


If $(F, \sigma) : (\C_0, T_0) \to (\C_1, T_1)$ and $(G, \tau) : (\C_1, T_1) \to (\C_2, T_2)$ are two pseudo cwf-morphisms, we define their composition
$(G, \tau)(F, \sigma)$ as $(GF, \tau\sigma)$ where:
\begin{eqnarray*}
(\tau\sigma)_\Gamma(A) &=& \tau_{F\Gamma}(\sigma_\Gamma(A))\\
(\tau\sigma)_\Gamma^A(a) &=& \tau_{F\Gamma}^{\sigma_\Gamma(A)}(\sigma_\Gamma^A(a))
\end{eqnarray*}
The families $\theta^{GF}$ and $\rho^{GF}$ are obtained from $\theta^F, \theta^G$ and $\rho^F$ and $\rho^G$ in the obvious way.
The fact that these data satisfy the necessary coherence and naturality conditions basically amounts to the stability of pseudonatural transformation under composition.
There is of course an identity pseudo cwf-morphism whose components are all identities, which is obviously neutral for composition. So, there is a category of cwfs and pseudo cwf-morphisms.

Since the isomorphism $(\Gamma \cext A) \cext B \iso \Gamma \cext \Sigma(A, B)$ holds in an arbitrary cwf which supports $\Sigma$-types, it follows that pseudo cwf-morphisms automatically preserve $\Sigma$-types, since they preserve context comprehension.
However, if cwfs support other structure, we need to define what it means that cwf-morphisms preserve this extra structure up to isomorphism.

\begin{definition}
Let $(F, \sigma)$ be a pseudo cwf-morphism between cwfs $(\C,T)$ and $(\C',T')$ which support identity types, $\Pi$-types, and democracy, respectively.
\begin{itemize}
\item $(F, \sigma)$ {\em preserves identity types} provided $\sigma_\Gamma(\Id_A(a,a')) \cong \Id_{\sigma_\Gamma(A)}(\sigma_\Gamma^A(a),\sigma_\Gamma^A(a))$;
\item $(F, \sigma)$ {\em preserves $\Pi$-types} provided $\sigma_\Gamma(\Pi(A,B)) \cong \Pi(\sigma_\Gamma(A),\sigma_{\Gamma \cext A}(B)[\rho_{\Gamma,A}^{-1}])$;
\item $(F, \sigma)$ {\em preserves democracy} provided $\sigma_{\nilc}(\bar{\Gamma}) \cong_{d_\Gamma} \bar{F\Gamma}[\subst{}]$,
and the following diagram commutes:
\[
\xymatrix@R=10pt{
F\Gamma	\ar[rr]^{F\gamma_\Gamma}
	\ar[d]_{\gamma_{F\gamma}}&&
F(\nilc\cext \bar{\Gamma})
	\ar[d]^{\rho_{\nilc, \bar{\Gamma}}}\\
\nilc\cext\bar{F\Gamma}
	\ar@{<-}[r]^{\subst{\subst{}, \q}}&
F\nilc\cext \bar{F\Gamma}[\subst{}]
	\ar@{<-}[r]^{d_\Gamma}&
F\nilc\cext \sigma_{\nilc}(\bar{\Gamma})
}
\]
\end{itemize}
\end{definition}
These preservation properties are all stable under composition and thus yield several different 2-categories of structure-preserving pseudo cwf-morphisms.

\subsection{Pseudo Cwf-Transformations} 

\begin{definition}[Pseudo cwf-transformation]
Let $(F, \sigma)$ and $(G, \tau)$ be two cwf-morphisms from $(\C, T)$ to $(\C', T')$. A \emph{pseudo cwf-transformation} from  $(F, \sigma)$ to 
$(G, \tau)$ is a pair $(\phi, \psi)$ where $\phi: F \natto G$ is a natural transformation, and for each $\Gamma$ in $\C$ and $A\in \Ty(\Gamma)$,
a morphism $\psi_{\Gamma, A} : \sigma_\Gamma(A) \to \tau_\Gamma(A)[\phi_\Gamma]$ in $\indexed{T'}(F\Gamma)$, natural in $A$ and 
such that the following diagram commutes:
\[
\xymatrix@C=60pt@R=20pt{
\sigma_\Gamma(A)[F\delta]
	\ar[r]^{\indexed{T'}(F\delta)(\psi_{\Gamma, A})}
	\ar[d]^{\theta_{A, \delta}}&
\tau_\Gamma(A)[\phi_\Gamma F(\delta)]
	\ar[d]^{\indexed{T'}(\phi_\Delta)(\theta'_{A, \delta})}\\
\sigma_\Delta(A[\delta])
	\ar[r]_{\psi_{\Delta, A[\delta]}}&
\tau_\Delta(A[\delta])[\phi_\Delta]
}
\]
where $\theta$ and $\theta'$ are the isomorphisms witnessing preservation of substitution in types in the definition of pseudo cwf-morphism.
\end{definition}

Pseudo cwf-transformations can be composed both vertically (denoted by $(\phi', \psi')(\phi, \psi)$) and
horizontally (denoted by $(\phi', \psi')\star(\phi, \psi)$), and these compositions
are associative and satisfy the interchange law.  Note that just
as coherence and naturality laws for pseudo cwf-morphisms ensure that
they give rise to pseudonatural transformations (hence morphisms of
indexed categories) $\indexed{\sigma}$ to $\indexed{\tau}$, this definition
exactly amounts to the fact that pseudo cwf-transformations between $(F, \sigma)$ and $(F, \tau)$ correspond to
modifications from $\indexed{\sigma}$ to $\indexed{\tau}$.

\subsection{2-Categories of Cwfs with Extra Structure}

\begin{definition} Let $\CWFFL$ be the 2-category of small democratic categories with families which support
  extensional identity types and $\Sigma$-types. The 1-cells are
  cwf-morphisms preserving democracy and extensional identity types (and
  $\Sigma$-types automatically) and the 2-cells are pseudo
  cwf-transformations.

  Moreover, let $\CWFLCC$ be the sub-2-category of $\CWFFL$ where also
  $\Pi$-types are supported and preserved.
\end{definition}

\section{Forgetting Types and Terms}
\begin{definition} Let $\FL$ be the 2-category of small categories with finite
  limits (left exact categories). The $1$-cells are functors preserving finite
  limits (up to isomorphism) and the $2$-cells are natural transformations.

Let $\LCC$ be the 2-category of small locally cartesian closed categories. The $1$-cells are functors preserving local cartesian closed structure (up to isomorphism), and the $2$-cells are natural transformations.
\end{definition}

$\FL$ is a sub(2-)category of the 2-category of categories: we do not provide a choice of finite limits. Similarly, $\LCC$ is a sub(2-)category of $\FL$.
The first component of our biequivalences will be \emph{forgetful}
$2$-functors.

\begin{proposition}\label{forgetful-functors}
The forgetful $2$-functors 
\begin{eqnarray*}
U &:& \CWFFL \to \FL\\
U &:& \CWFLCC \to \LCC
\end{eqnarray*}
defined as follows on 0-, 1-, and 2-cells
\begin{eqnarray*}
U(\C, T) &=& \C\\
U(F, \sigma) &=& F\\
U(\phi, \psi) &=& \phi
\end{eqnarray*}
are well-defined.
\end{proposition}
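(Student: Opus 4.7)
The plan is to verify, for each of the two forgetful assignments, four clauses: (i) for every object $(\C,T)$ of the source, $\C$ is an object of the target; (ii) for every 1-cell $(F,\sigma)$ of the source, the underlying functor $F$ preserves finite limits, and in the $\LCC$ case also exponentials in slices, up to isomorphism; (iii) for every 2-cell $(\phi,\psi)$, the first component $\phi$ is a 2-cell of the target; and (iv) $U$ strictly preserves identities, composition of 1-cells, and both vertical and horizontal composition of 2-cells.

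Clauses (i), (iii) and (iv) are essentially immediate. For (i), the observation recorded just after Proposition~\ref{fibres_vs_slices} already does the job: in a democratic cwf with extensional identity and $\Sigma$-types each $\indexed{T}(\Gamma)$ has binary products (via $\Sigma$), which transport through the equivalence $\indexed{T}(\Gamma)$ with $\C/\Gamma$ of Proposition~\ref{fibres_vs_slices} to show that each slice $\C/\Gamma$ has binary products; together with the terminal object $\nilc$ of $\C$ this yields all finite limits in $\C$, and adding $\Pi$-types gives exponentials in each slice, hence makes $\C$ locally cartesian closed. Clause (iii) is trivial since the first component of a pseudo cwf-transformation is \emph{defined} to be a natural transformation between the underlying functors. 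Clause (iv) is trivial because the composition and identity of pseudo cwf-morphisms (resp.\ pseudo cwf-transformations) were explicitly designed so that their first components are the ordinary composition and identity of functors (resp.\ of natural transformations).

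The substantive content is clause (ii). I focus on the lex case, the lcc case being analogous. Preservation of a terminal object is an explicit axiom of pseudo cwf-morphism: $F\nilc$ is terminal. For pullbacks I work through Proposition~\ref{fibres_vs_slices}. Given $\gamma:\Gamma'\to\Gamma$ and $\delta:\Delta\to\Gamma$, a concrete realisation of their pullback in $\C$ (up to iso) is $\Gamma' \cext \Inverse(\delta)[\gamma]$, where $\Inverse(\delta)\in \Ty(\Gamma)$ is the inverse-image type assembled in the proof of Proposition~\ref{fibres_vs_slices} from democracy, $\Sigma$- and identity types. A pseudo cwf-morphism in $\CWFFL$ preserves each of these ingredients up to a canonical iso: context comprehension through $\rho$, type substitution through $\theta$, $\Sigma$-types automatically (as noted in the paragraph on $\Sigma$-preservation), and identity types and democracy by hypothesis. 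Pasting these isomorphisms together, the image of the pullback construction under $F$ is canonically isomorphic to the analogous construction in $\C'$, which is a pullback there. Hence $F$ preserves pullbacks up to iso and is lex. For the lcc case, exponentials in $\C/\Gamma$ are obtained via the equivalence with $\indexed{T}(\Gamma)$ from $\Pi$-, $\Sigma$- and identity types, and each of these is preserved by a 1-cell of $\CWFLCC$, so $F$ preserves local cartesian closed structure.

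The only mild obstacle is bookkeeping: the various preservation isomorphisms $\rho$, $\theta$, and the isomorphisms witnessing preservation of identity types (and $\Pi$-types, in the lcc case) must be composed into a single canonical iso between the two pullback (resp.\ exponential) constructions. The coherence and naturality laws built into the definition of pseudo cwf-morphism (phrased as the requirement that $\indexed{\sigma}$ be a pseudonatural transformation of indexed categories) are exactly what make this pasted iso canonical, so no new coherence reasoning is needed beyond diagram chasing.
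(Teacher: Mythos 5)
Your proposal follows the paper's proof essentially verbatim: 0-cells are dispatched by the corollaries of Proposition~\ref{fibres_vs_slices}, 2-cells are trivial, and the substantive 1-cell case is reduced, exactly as in the paper, to showing that the inverse-image construction (which realises finite limits and local exponentials) is carried to the corresponding construction up to isomorphism, i.e.\ $F(\Gamma\cext\Inverse(\delta))\iso F\Gamma\cext\Inverse(F\delta)$, by pasting the preservation data $\rho$, $\theta$ and the isomorphisms for $\Sigma$-types, identity types and democracy. The paper likewise defers the resulting ``long but mostly direct calculations'' (in particular the term-level verification that $\sigma$ sends $\bar{\delta}$ to $\bar{F\delta}$ up to coercion, which uses the democracy-preservation coherence square) to its appendix, so your level of detail matches its main-text proof.
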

\begin{proof}
By definition $U$ is a $2$-functor from $\Cwf$ to $\Cat$, it remains to prove that it sends a cwf in $\CWFFL$ to $\FL$ and
a cwf in $\CWFLCC$ to $\LCC$, along with the corresponding properties for $1$-cells and $2$-cells.

For 0-cells we already proved as corollaries of Proposition \ref{fibres_vs_slices} that if $(\C, T)$ supports $\Sigma$-types, identity types and democracy, 
then $\C$ has finite limits; and if $(\C, T)$ also supports $\Pi$-types, then $\C$ is an lccc. 

For 1-cells we need to prove that if $(F, \sigma)$ preserves identity types and democracy, then $F$ preserves finite limits; 
and if $(F, \sigma)$ also preserves $\Pi$-types then $F$ preserves local exponentiation. 
Since finite limits and local exponentiation in $\C$ and $\C'$ can be defined by the inverse image construction, these two statements boil down
to the fact that if $(F, \sigma)$ preserves identity types and democracy then inverse images are preserved. Indeed we have an isomorphism 
$F(\Gamma\cext \Inverse(\delta)) \iso F\Gamma \cext \Inverse(F \delta)$. This
can be proved by long but mostly direct
calculations involving all components and coherence laws of pseudo cwf-morphisms. 



There is nothing to prove for $2$-cells.

\end{proof}

\section{Rebuilding Types and Terms}

Now, we turn to the reverse construction. We use the Bénabou-Hofmann construction to build a cwf from any finitely complete category,
then generalize this operation to functors and natural transformations, and show that this gives rise to a pseudofunctor.

\begin{proposition}
  There are pseudofunctors 
\begin{eqnarray*}
H &:&  \FL \to \CWFFL \\
H &:& \LCC \to \CWFLCC
\end{eqnarray*}
defined by
  \begin{eqnarray*}
    H \C &=& (\C, T_\C)\\
    H F &=& (F, \sigma_F)\\
    H \phi &=& (\phi, \psi_\phi)
  \end{eqnarray*}
on 0-cells, 1-cells, and 2-cells, respectively, and where $T_\C, \sigma_F,$ and $\psi_\phi$ are defined in the following three subsections.
\end{proposition}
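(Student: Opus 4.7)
\medskip
\noindent\textbf{Proof plan.} The plan follows the three announced subsections, and splits into constructing the data $T_\C$, $\sigma_F$, $\psi_\phi$, then verifying the pseudofunctor laws.

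\emph{Stage 1 (0-cells).} Given a left exact $\C$, define $T_\C$ by the B\'enabou--Hofmann construction: a type in context $\Gamma$ is a display map $A \to \Gamma$ equipped with, for every $\gamma : \Delta \to \Gamma$, a coherently chosen reindexing $\gamma^* A \to \Delta$, designed so that substitution is strictly functorial on the nose (as ordinary pullbacks compose only up to isomorphism). Terms are sections. Context comprehension is the domain, the empty context is the terminal object, and the universal property of context comprehension reduces to the universal property of the pullback. Identity types are built from equalizers, $\Sigma$-types from composition of display maps, and (in the lcc case) $\Pi$-types from local exponentials; stability under substitution is enforced by the Hofmann splitting. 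Democracy holds because every $\Gamma$ is represented by $\id_\Gamma : \Gamma \to \Gamma$ viewed as a closed type, giving $\nilc \cext \bar{\Gamma} \iso \Gamma$ canonically.

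\emph{Stage 2 (1-cells).} Given $F : \C \to \C'$ preserving finite limits, define $\sigma_F$ by sending a display map $A \to \Gamma$ to $FA \to F\Gamma$ and a section to its $F$-image. Since $F$ preserves pullbacks only up to isomorphism, comparing the split reindexing in $\C'$ of $FA \to F\Gamma$ along $F\delta$ with $F$ applied to the split reindexing in $\C$ yields the required iso $\theta_{A,\delta} : \sigma_F(A)[F\delta] \iso \sigma_F(A[\delta])$. Preservation of the terminal object, of context comprehension, of identity types, of democracy, and (in the lcc case) of $\Pi$-types, each reduce to $F$ preserving the corresponding universal construction in $\C$ up to iso. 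The required naturality and coherence of $\indexed{\sigma}_F$ follow from the uniqueness parts of these universal properties.

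\emph{Stage 3 (2-cells).} Given a natural transformation $\phi : F \tto G$, naturality at a display map $A \to \Gamma$ furnishes a square whose universal factorization through the chosen pullback of $GA \to G\Gamma$ along $\phi_\Gamma$ gives exactly a morphism $\psi_{\phi,\Gamma,A} : \sigma_F(A) \to \tau_G(A)[\phi_\Gamma]$ in $\indexed{T'}(F\Gamma)$. Naturality of $\psi$ in $A$ and the coherence hexagon relating $\psi$ to $\theta$, $\theta'$ both follow from the universal property of the pullback used in the definition. Finally, pseudofunctoriality of $H$ requires invertible cwf-transformations $H(GF) \iso H(G) H(F)$ and $H(\id_\C) \iso \id_{H\C}$ satisfying the associativity and unit coherence axioms; the underlying functor part is strictly equal, so all content sits in the type component, where it reduces to comparing the split reindexings of $GFA \to GF\Gamma$ with the composition of the split reindexings for $G$ applied to those for $F$, which agree up to canonical iso.

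\emph{Main obstacle.} The entire proof is a study in coherence: the B\'enabou--Hofmann splitting is designed to make substitution strictly functorial inside a single cwf, but that strictness is lost as soon as functors and natural transformations are brought in, so essentially every square, hexagon, and pentagon in the definitions of pseudo cwf-morphism, pseudo cwf-transformation, and pseudofunctor must be individually reduced to a universal property of a pullback or a local exponential. The work is voluminous but largely mechanical; I would organize it by first fixing a translation between the cwf operations on $T_\C$ and the pullback/section operations in $\C$, then checking each coherence law in the pullback language, where it becomes a uniqueness statement. Verifying pseudofunctoriality in the $\LCC$ case carries the extra burden of tracking $\Pi$-types through the comparison isos, which is the most delicate point.
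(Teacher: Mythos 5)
Your plan follows essentially the same route as the paper: functorial families (the B\'enabou--Hofmann splitting) for 0-cells, the ``forget to the display map, apply $F$, regenerate'' construction for 1-cells with the comparison isomorphisms $\theta_{A,\delta}$ produced as unique mediating maps between pullbacks, completion of a natural transformation to a pseudo cwf-transformation for 2-cells, and pseudofunctoriality witnessed by the canonical invertible 2-cell comparing $\sigma_G\sigma_F$ with $\sigma_{GF}$. The paper packages the recurring uniqueness arguments into two reusable lemmas (completion of pseudo cwf-morphisms from $(F,\sigma,\rho)$ alone, and commutation of 2-cell completion with vertical and horizontal composition), which is exactly the organization you sketch informally.

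One concrete slip: your description of democracy is wrong as stated. A closed type of $(\C,T_\C)$ is a functorial family over the terminal object $\nilc$, hence is generated by a display map with codomain $\nilc$; the identity $\id_\Gamma : \Gamma \to \Gamma$ is a display map over $\Gamma$ and is not a closed type. The correct choice is $\bar{\Gamma} = \hat{\subst{}}$ where $\subst{} : \Gamma \to \nilc$ is the terminal projection, which gives $\nilc \cext \bar{\Gamma} = \dom(\hat{\subst{}}(\id)) = \Gamma$ on the nose, so $\gamma_\Gamma = \id_\Gamma$. (Relatedly, the unit comparison cell of $H$ can be taken to be an equality, not merely an isomorphism, since $\sigma_{\id_\C}$ regenerates each functorial family from its own display map.) With that correction your argument goes through as in the paper.
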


\begin{proof}
The remainder of this Section contains the proof. We will in turn show the action on 0-cells, 1-cells, 2-cells, and then prove pseudofunctoriality of $H$.
\end{proof}

\subsection{Action on 0-Cells}

As explained before, it is usual (going back to Cartmell \cite{cartmell:apal}) to represent a type-in-context $A \in \Ty(\Gamma)$ in a category as a 
{\em display map} \cite{taylor:pfm}, that is, as an object $\p_{A}$ in $\C/\Gamma$. A term $\Te{\Gamma}{A}$ is then represented as a section of the 
display map for $A$, that is, a morphism $a$ such that $\p_{A} \circ a = \id_\Gamma$.
Substitution in types is then represented by pullback. This is essentially the technique used by Seely for interpreting Martin-L\"of
type theory in lcccs. However, as we already mentioned, it leads to a coherence problem.

To solve this problem Hofmann \cite{hofmann:csl} used a construction due to B\'enabou \cite{BenabouJ:fibcfn}, which from any fibration builds an equivalent \emph{split}
fibration. Hofmann used it to build a category with attributes (cwa) \cite{cartmell:apal} from a locally cartesian closed category. He then showed that this cwa supports $\Pi,\Sigma$, and extensional identity types. This technique essentially
amounts to associating to a type $A$, not only a display map, but a whole family of display maps, one for each substitution instance $A[\delta]$. In other words, we choose a pullback
square for every possible substitution and this choice is split, hence solving the coherence problem.
As we shall explain below this family takes the form of a functor, and we refer to it as a {\em functorial family}.

Here we reformulate Hofmann's construction using cwfs. See Dybjer \cite{dybjer:torino} for the correspondence between cwfs and cwas.

\begin{lemma}
  Let $\C$ be a category with terminal object. Then we can build a
  democratic cwf $(\C,T_\C)$ which supports $\Sigma$-types. If $\C$
  has finite limits, then $(\C,T_\C)$ also supports extensional
  identity types. If $\C$ is locally cartesian closed, then $(\C, T_\C)$
  also supports $\Pi$-types.
\end{lemma}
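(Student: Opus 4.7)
The plan is to carry out the Bénabou--Hofmann splitting construction and then verify that the additional type formers come along for the ride. A type $A \in \Ty(\Gamma)$ will be defined as a \emph{functorial family} of display maps: a choice, for every $\delta \colon \Delta \to \Gamma$, of an object $A(\delta)$ equipped with a display morphism $A(\delta) \to \Delta$, such that for composable $\gamma, \delta$ the display map $A(\gamma \circ \delta)$ coincides strictly with the substitution of $A(\gamma)$ along $\delta$ (i.e.\ a split cleavage of the codomain fibration restricted to the slice over $\Gamma$). A term $a : \Gamma \vdash A$ will be a corresponding natural family of sections $a(\delta)$ of each $A(\delta)$. Type and term substitution along $\gamma$ are then \emph{precomposition}, and are therefore strictly associative and unital by construction; this is precisely what sidesteps the coherence problem.

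Next I would verify the cwf axioms. The terminal object of $\C$ serves as $\nilc$. Context comprehension $\Gamma \cext A$ is the domain of $A(\id_\Gamma)$, with $\p_A$ being the chosen display map and $\q$ the section of $A[\p_A]$ obtained from the diagonal of the canonical pullback of $A(\id_\Gamma)$ against itself. The universal property of $\ext{\gamma, a}$ then follows from naturality of the functorial family together with the pullback property at $\gamma$. For democracy, given a context $\Gamma$ in $\C$, I would let $\bar\Gamma \in \Ty(\nilc)$ be the functorial family generated by the unique morphism $\Gamma \to \nilc$; one then checks $\nilc \cext \bar\Gamma \iso \Gamma$ canonically.

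For $\Sigma$-types, given $A \in \Ty(\Gamma)$ and $B \in \Ty(\Gamma \cext A)$, I would define the display map of $\Sigma(A,B)$ at $\id_\Gamma$ to be the composite $\p_A \circ \p_B$ and extend to arbitrary $\delta$ through the splitting. Pairing, projections, and the associated $\beta,\eta$ equations follow from the uniqueness of sections of a composite morphism. For identity types, assuming $\C$ has finite limits, I would define $\Id_A(a,a')$ at each $\delta$ as the equalizer of the parallel sections $a[\delta]$ and $a'[\delta]$ of $A(\delta)$; extensionality is immediate because any global section of such an equalizer witnesses $a = a'$, and $\refl_{A,a}$ is then the canonical inclusion into the equalizer of $a$ with itself.

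For $\Pi$-types, assuming $\C$ is locally cartesian closed so that $\p_A^*$ has a right adjoint $\Pi_{\p_A}$, I would define the display map of $\Pi(A,B)$ at $\id_\Gamma$ to be $\Pi_{\p_A}(\p_B)$, again extending through the splitting. The combinators $\lambda$ and $\ap$ arise from the unit and counit of the adjunction, and the $\beta,\eta$ laws from the triangle identities. The main obstacle I expect is stability of $\Pi$ under substitution: the right adjoint commutes with pullback only up to a canonical Beck--Chevalley isomorphism, so the bulk of the work is to exploit the split functorial family to convert these canonical isomorphisms into the strict equalities a cwf demands. The same coherence-bookkeeping pattern, on a smaller scale, is what makes the $\Sigma$ and identity clauses go through, and is really the whole technical content of the lemma.
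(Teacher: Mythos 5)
Your proposal follows essentially the same route as the paper: types as functorial families (the B\'enabou--Hofmann splitting), comprehension as the domain of the display map at the identity, democracy via the terminal projection, $\Sigma$ by composition of display maps, $\Id$ by equalizers, and $\Pi$ by the right adjoint to pullback with Beck--Chevalley handling the coherence of the chosen pullback squares. The only (inessential) variation is that you take a term to be a natural family of sections rather than a single section of $\hf{A}(\id_\Gamma)$ with substitution derived from the pullback property, which is an equivalent packaging of the same data.
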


\begin{proof}
We only show the definition of types and terms in $T_\C(\Gamma)$. This construction is essentially the same as Hofmann's \cite{hofmann:csl}.

A \emph{type} in $\Ty_\C(\Gamma)$ is a \emph{functorial family}, that is, a functor 
$\hf{A}:\C/\Gamma\to \arrow{\C}$ 
such that $\cod\circ \hf{A} = \dom$ and if
\vspace{-15pt} 
$\xymatrix@R=5pt@C=0pt{
\Omega\ar[dr]_{\delta\alpha}\ar[rr]^{\alpha}&&\Delta\ar[dl]^{\delta}\\
&\Gamma}$
is a morphism in $\mathbb{C}/\Gamma$, then $\hf{A}(\alpha)$ is a pullback square:
\[
\xymatrix{
\ar[r]^{\hf{A}(\delta,\alpha)}
                             \ar[d]_{\hf{A}(\delta \alpha)}	&
\ar[d]^{\hf{A}(\delta)}\\
\Omega\ar[r]_\alpha & \Delta
}
\]
Following Hofmann, we denote the upper arrow of the square  by $\hf{A}(\delta,\alpha)$.


A {\em term} $a:\Gamma\vdash \hf{A}$ is a section of $\hf{A}(\id_\Gamma)$, that is, a morphism $a:\Gamma\to \Gamma\cext \hf{A}$ such that 
$\hf{A}(\id_\Gamma) a = \id_\Gamma$, where we have defined context extension by
$\Gamma\cext \hf{A} = \dom(\hf{A}(\id_\Gamma))$. Interpreting types as functorial families makes it easy to define
substitution in types. Substitution in terms is obtained by exploiting the universal property of pullback squares,
yielding a functor $T_\C : \C^{op} \to \Fam$.
\end{proof}

Note that $(\C, T_\C)$ is a \emph{democratic} cwf since to any context $\Gamma$ we can associate a functorial family $\hat{\subst{}}: \C/\nilc \to \arrow{C}$, where
$\subst{}: \Gamma \to \nilc$ is the terminal projection. The isomorphism $\gamma_\Gamma : \Gamma \to \nilc\cext \hat{\subst{}}$ is just $\id_\Gamma$.

\subsection{Action on 1-Cells}

Suppose that $\C$ and $\C'$ have finite limits and that $F:\C \to \C'$ preserves them. As described in the previous section, $\C$ and $\C'$ give
rise to cwfs $(\C, T_\C)$ and $(\C', T_{\C'})$. In order to extend $F$ to a pseudo cwf-morphism, we need to define, for each object $\Gamma$ in $\C$,
a $\Fam$-morphism $(\sigma_F)_\Gamma : T_\C(\Gamma) \to T_{\C'} F(\Gamma)$. Unfortunately, unless $F$ is full, it does not seem possible
to embed faithfully a functorial family $\hf{A}: \C/\Gamma \to \arrow{\C}$ into a functorial family over $F\Gamma$ in $\C'$. However, there is such an
embedding for display maps (just apply $F$) from which we will freely regenerate a functorial family from the obtained display map.

\paragraph{The ``hat" construction.}
As remarked by Hofmann, any morphism $f : \Delta \to \Gamma$ in a category $\C$ with a (not necessarily split) choice of finite limits generates a functorial family $\hat{f}:\C/\Gamma\to
\C^\rightarrow $. If $\delta : \Delta \to \Gamma$ then $\hat{f}(\delta) = \delta^*(f)$, where $\delta^*(f)$ is obtained by taking the pullback of $f$
along $\delta$ ($\delta^*$ is known as the \emph{pullback functor}):
\[
\xymatrix{
~       \ar[d]_{\delta^*(f)}
        \ar[r]&
~       \ar[d]^{f}\\
\Delta  \ar[r]_\delta&
\Gamma
}
\]
Note that we can always choose pullbacks such that $\hat{f}(\id_\Gamma) = \id_\Gamma^*(f) = f$. If 
$\xymatrix@R=5pt@C=5pt{
\Omega\ar[dr]_{\delta \alpha}\ar[rr]^{\alpha}&&\Delta\ar[dl]^{\delta}\\
&\Gamma
}$is a morphism in $\C/\Gamma$, we define $\hat{f}(\alpha)$ as the left square in the following diagram:
\[
\xymatrix{
~       \ar[r]^{\hat{f}(\delta,\alpha)}
        \ar[d]_{\hat{f}(\delta\alpha)}&
~       \ar[d]^{\hat{f}(\delta)}
        \ar[r]&
~       \ar[d]^f\\
\Delta' \ar[r]_\alpha&
\Delta  \ar[r]_\delta&
\Gamma
}
\]
This is a pullback, since both the outer square and the right square
are pullbacks.

\paragraph{Translation of types.}
The hat construction can be used to extend $F$ to types:
\[
\sigma_F(\hf{A}) = \hat{F(\hf{A}(\id))}
\]
Note that $F(\Gamma\cext \hf{A}) = F(\dom(\hf{A}(\id))) = \dom(F(\hf{A}(\id))) = \dom(\sigma_\Gamma(\hf{A})(\id)) = F\Gamma \cext \sigma_\Gamma(\hf{A})$, so context comprehension is preserved on the
nose. However, substitution on types is \emph{not} preserved on the nose. Hence we have to define a coherent family of isomorphisms $\theta_{\hf{A}, \delta}$. 

\paragraph{Completion of cwf-morphisms.} Fortunately, whenever $F$ preserves finite limits there is a canonical way to generate all the remaining data. 

\begin{lemma}[Generation of isomorphisms]
  Let $(\C, T)$ and $(\C', T')$ be two cwfs, $F: \C \to \C'$ a functor
  preserving finite limits, $\sigma_\Gamma : \Ty(\Gamma) \to
  \Ty'(F\Gamma)$ a family of functions, and $\rho_{\Gamma, A} :
  F(\Gamma\cext A) \to F\Gamma \cext \sigma_\Gamma(A)$ a family of
  isomorphisms such that $\p \rho_{\Gamma, A} = F \p$. Then there
  exists an unique choice of functions $\sigma_\Gamma^A$ on terms and
  of isomorphisms $\theta_{A, \delta}$ such that $(F, \sigma)$ is a
  pseudo cwf-morphism.
\label{completion_cwfmorphisms}
\end{lemma}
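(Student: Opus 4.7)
The plan is to define the missing data in the only way consistent with being a pseudo cwf-morphism, and then verify that this definition actually works. For terms, Proposition \ref{properties}(2) already forces $\sigma_\Gamma^A(a) = \q[\rho_{\Gamma, A} F(\ext{\id, a})]$, so I take this as the definition. The right-hand side is a legitimate term of type $\sigma_\Gamma(A)$ because $\rho_{\Gamma, A} F(\ext{\id, a})$ is a section of $\p_{\sigma_\Gamma(A)}$: indeed, $\p \, \rho_{\Gamma, A} F(\ext{\id, a}) = F(\p) F(\ext{\id, a}) = F(\p \ext{\id, a}) = F(\id) = \id$, using the hypothesis $\p \rho = F\p$.

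To construct $\theta_{A, \delta}$, I exploit the preservation of pullbacks by $F$. In $\C$, the square formed by $\ext{\delta \p, \q} : \Delta \cext A[\delta] \to \Gamma \cext A$ and the two projections is a pullback of $\p_A$ along $\delta$; applying the finite-limit-preserving $F$ yields a pullback of $F\p_A$ along $F\delta$ in $\C'$. Transporting along the isomorphisms $\rho_{\Gamma, A}$ and $\rho_{\Delta, A[\delta]}$ (both compatible with projections by hypothesis), and comparing with the canonical pullback $F\Delta \cext \sigma_\Gamma(A)[F\delta]$ supplied by the cwf structure of $(\C', T')$, the universal property of pullbacks produces a unique isomorphism over $F\Delta$ from $F\Delta \cext \sigma_\Gamma(A)[F\delta]$ to $F\Delta \cext \sigma_\Delta(A[\delta])$, i.e.\ an isomorphism $\theta_{A, \delta}$ in $\indexed{T'}(F\Delta)$.

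Uniqueness is then essentially automatic. The formula for $\sigma_\Gamma^A$ is forced by Proposition \ref{properties}(2). For $\theta$, the preservation-of-term-substitution axiom $\sigma_\Delta^{A[\delta]}(a[\delta]) =_{\theta_{A, \delta}} \sigma_\Gamma^A(a)[F\delta]$, instantiated at the generic term $a = \q : \Gamma \cext A \vdash A[\p]$ substituted along the canonical morphism $\ext{\delta\p, \q}$, pins down the value of $\q[\theta_{A, \delta}]$; together with the constraint $\p \, \theta_{A, \delta} = \p$ coming from $\theta_{A, \delta}$ being a morphism in $\indexed{T'}(F\Delta)$, the universal property of context comprehension then determines $\theta_{A, \delta}$ completely.

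What remains is to verify that the data genuinely assemble into a pseudo cwf-morphism, i.e.\ that $\indexed{\sigma}$ is pseudonatural. The terminal and context-comprehension clauses hold by hypothesis; the substitution-on-terms equation holds by construction of $\theta$; and the coherence pentagons for identities and for composition of substitutions reduce, via uniqueness of pullback comparison maps, to routine diagram chases. The main obstacle, and really the only labour-intensive part, is this bookkeeping: but each required equality of morphisms into a context comprehension can be checked by comparing $\p$- and $\q$-projections, and in every case both sides are canonical pullback-induced maps, so they must agree.
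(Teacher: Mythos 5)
Your proposal is correct and follows essentially the same route as the paper: the term action is forced by the redundancy of terms and sections, and $\theta_{A,\delta}$ is obtained as the unique comparison isomorphism between the canonical substitution pullback over $F\delta$ and the $F$-image of the substitution pullback in $\C$ transported along the $\rho$'s, with coherence and naturality falling out of uniqueness of mediating maps. Your uniqueness argument for $\theta$ (pinning down $\q[\theta_{A,\delta}]$ from the term-substitution axiom at the generic term $\q$) is the same idea as the paper's observation that any pseudo cwf-morphism satisfies $\rho_{\Gamma,A}F(\subst{\delta\p,\q})\rho_{\Delta,A[\delta]}^{-1}\theta_{A,\delta}=\subst{(F\delta)\p,\q}$, though note that making it precise requires routing through the coherence law for $\theta$ over the composite $\delta\circ\p = \p\circ\subst{\delta\p,\q}$, as the paper does in its appendix lemma.
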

\begin{proof}
By item \emph{(2)} of Proposition \ref{properties}, the unique way to extend $\sigma$ to terms is to
set $\sigma_\Gamma^A(a) = \q[\rho_{\Gamma, A} F(\ext{\id, a})]$. To generate $\theta$, we use the two squares below:
\[\hspace{-5pt}
\xymatrix@R=20pt@C=35pt{
F\Delta \cext \sigma_\Gamma({A})[F\delta]       \ar[r]^{\ext{(F \delta) \p\q}}
        \ar[d]_{\p}&
F\Gamma\cext \sigma_\Gamma({A}) \ar[d]^{\p}\\
F\Delta \ar[r]_{F\delta}&
F\Gamma
}
\xymatrix@R=20pt@C=80pt{
F\Delta\cext \sigma_\Delta({A}[\delta]) \ar[r]^{\rho_{\Gamma, A} F(\ext{\delta \p, \q}) \rho_{\Delta, A[\delta]}^{-1}}
        \ar[d]_{\p}&
F\Gamma\cext \sigma_\Gamma(A)   \ar[d]^{\p}\\
F\Delta \ar[r]_{F\delta}&
F\Gamma
}
\]
The first square is a substitution pullback. The second is a pullback because $F$ preserves finite limits and $\rho_{\Gamma, A}$ and $\rho_{\Delta, A[\delta]}$ are isomorphisms.
The isomorphism $\theta_{A, \delta}$ is defined as the unique mediating morphism from the first to the second. It follows from the universal property of pullbacks that
the family $\theta$ satisfies the necessary naturality and coherence conditions.
There is no other choice for $\theta_{A, \delta}$, because if $(F, \sigma)$ is 
a pseudo cwf-morphism with families of isomorphisms $\theta$ and $\rho$, then 
$\rho_{\Gamma, A} F(\subst{\delta \p, \q}) \rho_{\Delta, A[\delta]}^{-1} \theta_{A, \delta} = \subst{(F\delta)\p, \q}$.
Hence if $F$ preserves finite limits, $\theta_{A, \delta}$ must coincide with the mediating morphism.
\end{proof}

\paragraph{Preservation of additional structure.} As a pseudo cwf-morphism, $(F, \sigma_F)$ automatically preserves $\Sigma$-types. Since the democratic structure of $(\C, T_\C)$ and $(\C', T_{\C'})$ is trivial
it is clear that it is preserved by $(F, \sigma_F)$. To prove that it also preserves type constructors, we use the following proposition.

\begin{proposition}
Let $(F, \sigma)$ be a pseudo cwf-morphism between $(\C, T)$ and $(\C', T')$ supporting $\Sigma$-types and democracy. Then:
\begin{itemize}
\item If $(\C, T)$ and $(\C', T')$ both support identity types, then $(F, \sigma)$ preserves identity types provided $F$ preserves finite limits.
\item If $(\C, T)$ and $(\C', T')$ both support $\Pi$-types, then $(F, \sigma)$ preserves $\Pi$-types provided $F$ preserves local exponentiation.
\end{itemize}
\label{preservation_structure}
\end{proposition}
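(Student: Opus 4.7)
The plan is to reduce both preservation statements to purely categorical facts about slice categories via Proposition~\ref{fibres_vs_slices}. Both type constructors admit clean categorical characterizations through the inverse-image equivalence $\indexed{T}(\Gamma) \simeq \C/\Gamma$: for extensional identity types, $\p_{\Id_A(a,a')}$ is the equalizer in $\C$ of the parallel pair $\ext{\id,a}, \ext{\id,a'} : \Gamma \rightrightarrows \Gamma\cext A$ (using extensionality essentially, so that $\Gamma\cext \Id_A(a,a')$ classifies those substitutions making $a$ and $a'$ agree); for $\Pi$-types, $\p_{\Pi(A,B)}$ is the local exponential $(\p_A)_*(\p_B)$ in $\C/\Gamma$, as seen by currying the $\lambda/\ap$ rules together with $\eta$. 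In each case I would apply $F$ to these universal diagrams, invoke the relevant preservation hypothesis, and translate the result back into $\indexed{T'}(F\Gamma)$ using the comprehension iso $\rho$.

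For the identity type case, since $F$ preserves finite limits, $F(\Gamma\cext \Id_A(a,a'))$ is the equalizer in $\C'$ of $F\ext{\id,a}$ and $F\ext{\id,a'}$. On the other hand, $F\Gamma \cext \Id_{\sigma_\Gamma(A)}(\sigma^A_\Gamma(a),\sigma^A_\Gamma(a'))$ is the equalizer of $\ext{\id, \sigma^A_\Gamma(a)}$ and $\ext{\id, \sigma^A_\Gamma(a')}$. Item~(2) of Proposition~\ref{properties} gives $\sigma^A_\Gamma(a) = \q[\rho_{\Gamma,A}\, F\ext{\id,a}]$, which together with $\p \rho_{\Gamma,A} = F\p_A$ tells us that the second parallel pair is obtained from the first by post-composition with the iso $\rho_{\Gamma,A}$. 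Since equalizers are preserved by post-composition with an iso, the universal property yields a canonical iso between the two apexes; composing with $\rho_{\Gamma,\Id_A(a,a')}$ gives an iso over $F\Gamma$ between $F\Gamma \cext \sigma_\Gamma(\Id_A(a,a'))$ and $F\Gamma \cext \Id_{\sigma_\Gamma(A)}(\sigma^A_\Gamma(a),\sigma^A_\Gamma(a'))$, which by Proposition~\ref{fibres_vs_slices} is the required iso of types.

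For $\Pi$-types the argument is analogous: preservation of local exponentials yields $F(\p_{\Pi(A,B)}) \iso (F\p_A)_*(F\p_B)$ in $\C'/F\Gamma$, and the isos $\rho_{\Gamma,A}$ and $\rho_{\Gamma\cext A, B}$ rewrite $F\p_A$ as $\p_{\sigma_\Gamma(A)}$ and $F\p_B$ as a pullback of $\p_{\sigma_{\Gamma\cext A}(B)[\rho_{\Gamma,A}^{-1}]}$ (the substitution by $\rho_{\Gamma,A}^{-1}$ appears precisely because a display map over $F(\Gamma\cext A)$ must be re-expressed as one over $F\Gamma \cext \sigma_\Gamma(A)$). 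These identifications turn the local exponential of $F\p_B$ along $F\p_A$ into $\p_{\Pi(\sigma_\Gamma(A),\sigma_{\Gamma\cext A}(B)[\rho_{\Gamma,A}^{-1}])}$, and Proposition~\ref{fibres_vs_slices} concludes.

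I expect the main obstacle to be the bookkeeping of the interaction between the $\rho$ family and the universal properties in $\C'$: the equalizer and local-exponential characterizations live in the slices of $\C'$ over $F\Gamma$ and $F(\Gamma\cext A)$, whereas the iso demanded by the definition of preservation lives in the fibre $\indexed{T'}(F\Gamma)$ and must commute strictly with $\p$. This translation is forced by item~(2) of Proposition~\ref{properties} together with $\p\rho_{\Gamma,A} = F\p_A$; everything else is a diagram-chase. A secondary point worth flagging is that extensionality of identity types is essential, since it is what collapses $\Id$ to a mere equalizer; the argument would not transfer to the intensional setting.
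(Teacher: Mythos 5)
Your proposal is correct and follows essentially the same route as the paper: the identity-type case is handled by observing that $\p_{\Id_A(a,a')}$ is the equalizer of $\ext{\id,a}$ and $\ext{\id,a'}$, that $F(\ext{\id,a}) = \rho_{\Gamma,A}^{-1}\ext{\id,\sigma_\Gamma^A(a)}$, and that equalizers are unique up to isomorphism; the $\Pi$-type case is handled by the uniqueness up to isomorphism of the local exponential/dependent product (which the paper packages as an auxiliary ``$\Pi$-object'' lemma, but the content is the same adjoint characterization you invoke). The only cosmetic difference is that you route the translation between display maps and types through Proposition~\ref{fibres_vs_slices}, whereas the paper works directly with the comprehension isomorphisms $\rho$.
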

\begin{proof}
For the first part it remains to prove that if $F$ preserves finite limits, then 
  $(F, \sigma)$ preserves identity types. Since  $a, a'\in \Gamma\vdash
  A$, $\p_{\Id_A(a, a')}:\Gamma\cext \Id_A(a, a') \to \Gamma$ is an
  equalizer of $\subst{\id, a}$ and $\subst{\id, a'}$ and $F$ preserves equalizers, it follows that $F(\p_{\Id_A(a, a')})$ is an equalizer of
  $\subst{\id, \sigma_\Gamma^A(a)}$ and $\subst{\id,
    \sigma_\Gamma^A(a')}$, and by uniqueness of equalizers it is
  isomorphic to $\Id_{\sigma_\Gamma(A)}(\sigma_\Gamma^A(a),
  \sigma_\Gamma^A(a'))$.  

The proof of preservation of $\Pi$-types
  exploits in a similar way the uniqueness (up to iso) of
  ``$\Pi$-objects" of $A\in \Ty(\Gamma)$ and $B\in \Ty(\Gamma\cext
  A)$.
\end{proof}

\subsection{Action on $2$-Cells}

Similarly to the case of $1$-cells, under some conditions a natural transformation $\phi: F\natto G$ where $(F, \sigma)$ and $(G, \tau)$ are
pseudo cwf-morphisms can be completed to a pseudo cwf-transformation $(\phi, \psi_\phi)$, as stated below.

\begin{lemma}[Completion of pseudo cwf-transformations]
Suppose $(F, \sigma)$ and $(G, \tau)$ are pseudo cwf-morphisms from  $(\C, T)$ to $(\C', T)$  such that $F$ and $G$ preserve finite limits and $\phi: F\natto G$
is a natural transformation, then there exists a family of morphisms $(\psi_\phi)_{\Gamma, A} : \sigma_\Gamma(A) \to \tau_\Gamma(A)[\phi_\Gamma]$
such that $(\phi, \psi_\phi)$ is a pseudo cwf-transformation from $(F, \sigma)$ to $(G, \tau)$.

\label{completion_transformations}
\end{lemma}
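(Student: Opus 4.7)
The plan is to define $\psi_{\Gamma,A}$ as the mediating morphism into a substitution pullback and then verify the two coherence conditions (naturality in $A$ and the substitution square of Definition~13) via the universal property of that pullback.

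First I would construct $\psi_{\Gamma, A}$ as follows. Write $\rho$ and $\rho'$ for the comprehension-preserving isomorphisms associated with $(F, \sigma)$ and $(G, \tau)$ respectively. Consider the composite
\[
\rho'_{\Gamma, A} \circ \phi_{\Gamma \cext A} \circ \rho_{\Gamma, A}^{-1} \;:\; F\Gamma \cext \sigma_\Gamma(A) \longrightarrow G\Gamma \cext \tau_\Gamma(A).
\]
Using $\p \rho_{\Gamma, A} = F\p$, $\p \rho'_{\Gamma, A} = G\p$ and naturality of $\phi$ on $\p$, one checks that this composite projects to $\phi_\Gamma \p$. By definition the object $F\Gamma \cext \tau_\Gamma(A)[\phi_\Gamma]$, equipped with $\p$ and $\ext{\phi_\Gamma \p, \q}$, is the pullback of $\p : G\Gamma \cext \tau_\Gamma(A) \to G\Gamma$ along $\phi_\Gamma$, so there is a unique $\psi_{\Gamma, A} : F\Gamma \cext \sigma_\Gamma(A) \to F\Gamma \cext \tau_\Gamma(A)[\phi_\Gamma]$ with $\p \,\psi_{\Gamma, A} = \p$ and $\ext{\phi_\Gamma \p, \q}\, \psi_{\Gamma, A} = \rho'_{\Gamma, A}\, \phi_{\Gamma \cext A}\, \rho_{\Gamma, A}^{-1}$. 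The first equation exhibits $\psi_{\Gamma, A}$ as a morphism in $\indexed{T'}(F\Gamma)$ from $\sigma_\Gamma(A)$ to $\tau_\Gamma(A)[\phi_\Gamma]$.

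Second, I would check naturality in $A$: for $f : A \to B$ in $\indexed{T}(\Gamma)$, both $\psi_{\Gamma, B} \circ \indexed{\sigma}_\Gamma(f)$ and $\indexed{\tau}_\Gamma(f)[\phi_\Gamma] \circ \psi_{\Gamma, A}$ are morphisms into the pullback $F\Gamma \cext \tau_\Gamma(B)[\phi_\Gamma]$. Using the formula $\indexed{\sigma}_\Gamma(f) = \rho_{\Gamma, B} F(f) \rho_{\Gamma, A}^{-1}$ (and similarly for $\tau$), comparing postcompositions with the two pullback legs reduces the equality to naturality of $\phi$ on $f$; uniqueness of the mediating morphism then forces the required equation.

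Third, the substitution coherence square of Definition~13 is proved in the same spirit, applied at the pullback $F\Delta \cext \tau_\Delta(A[\delta])[\phi_\Delta]$. One checks that the two composites agree after postcomposition with $\p$ (which is automatic, as both sides live in $\indexed{T'}(F\Delta)$) and with $\ext{\phi_\Delta \p, \q}$. Unfolding the latter requires the explicit description of $\theta_{A, \delta}$ and $\theta'_{A, \delta}$ as unique mediating morphisms afforded by Lemma~\ref{completion_cwfmorphisms}---this is where the hypothesis that $F$ and $G$ preserve finite limits enters---together with the naturality of $\phi$ at the substitution morphism, $G(\ext{\delta \p, \q}) \circ \phi_{\Delta \cext A[\delta]} = \phi_{\Gamma \cext A} \circ F(\ext{\delta \p, \q})$. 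Both composites then simplify to the same formal string of $\rho$'s, $\rho'$'s and components of $\phi$. This verification is the main obstacle: it is a diagram chase threading several nested pullback squares, routine but delicate, and it really does require the finite-limit hypothesis on $F$ and $G$, since otherwise the $\theta$'s would be opaque structural data with no canonical description to chase against.
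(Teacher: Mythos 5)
Your construction of $\psi_{\Gamma,A}$ is exactly the paper's: the unique map over $F\Gamma$ with $\ext{\phi_\Gamma\p,\q}\,\psi_{\Gamma,A}=\rho'_{\Gamma,A}\,\phi_{\Gamma\cext A}\,\rho_{\Gamma,A}^{-1}$ is precisely the combinator $\subst{\p,\q[\rho'_{\Gamma,A}\phi_{\Gamma\cext A}\rho_{\Gamma,A}^{-1}]}$ used there, and the paper likewise verifies the coherence square by comparing the two paths against a composite pullback (a substitution pullback pasted with the image under $G$ of $\subst{\delta\p,\q}$, which is where preservation of finite limits enters). The proposal is correct and follows essentially the same route as the paper's proof.
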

\begin{proof}
We set $\psi_{\Gamma, A} = \subst{\p, \q[\rho'_{\Gamma, A}\phi_{\Gamma\cext A}\rho_{\Gamma, A}^{-1}]}: F\Gamma\cext \sigma_\Gamma A\to F\Gamma\cext \tau_\Gamma(A)[\phi_\Gamma]$.
To check the coherence law, we apply the universal property of a well-chosen pullback square (exploiting the fact that $G$ preserves finite limits).
\end{proof}

This completion operation on $2$-cells commutes with units and both notions of composition, as will be crucial to prove pseudofunctoriality of $H$:

\begin{lemma}
If $\phi: F \natto G$ and $\phi' : G\natto H$, then
\begin{eqnarray*}
(\phi', \psi_{\phi'})(\phi, \psi_\phi) &=& (\phi'\phi, \psi_{\phi'\phi})\\
(\phi, \psi_\phi)\star 1 &=& (\phi\star 1, \psi_{\phi\star 1})\\
1\star (\phi, \psi_\phi) &=& (1\star \phi, \psi_{1\star \phi})\\
(\phi', \psi_{\phi'})\star(\phi, \psi_\phi)&=& (\phi'\star\phi, \psi_{\phi'\star\phi})
\end{eqnarray*}
\label{comm_completion}
whenever these expressions typecheck.
\end{lemma}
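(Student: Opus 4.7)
All four equations assert equality of pseudo cwf-transformations whose first components coincide by construction (both sides have the same underlying natural transformation, namely $\phi'\phi$, $\phi \star 1$, $1 \star \phi$, or $\phi' \star \phi$). It therefore suffices to verify that the $\psi$-components agree. My plan is to use the explicit formula from Lemma~\ref{completion_transformations},
$$(\psi_\phi)_{\Gamma, A} = \subst{\p,\ \q[\rho^G_{\Gamma, A}\, \phi_{\Gamma \cext A}\, (\rho^F_{\Gamma, A})^{-1}]},$$
on both sides of each equation, and to invoke the universal property of context comprehension: a morphism into $F\Gamma \cext \tau_\Gamma(A)[\phi_\Gamma]$ is determined by its projection and its $\q$-substitution, so agreement of the two sides reduces to agreement on $\p$ (immediate, since both are equal to $\p$) and on the $\q$-substituted term.

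For the vertical composition equation, I would first unfold $(\phi', \psi_{\phi'})(\phi, \psi_\phi)$ as pointwise composition of the $\psi$-components after reindexing $\psi_{\phi'}$ along $\phi$, then substitute the explicit formula for each factor. The intermediate $\rho^G$ and $(\rho^G)^{-1}$ cancel, leaving $\subst{\p,\ \q[\rho^H_{\Gamma, A}\, \phi'_{\Gamma \cext A}\, \phi_{\Gamma \cext A}\, (\rho^F_{\Gamma, A})^{-1}]}$, which matches the completion formula applied to $\phi'\phi$ since the latter is pointwise composition of natural transformations.

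For the horizontal cases, the key ingredient is the explicit description of the $\rho$-component of a composite pseudo cwf-morphism, namely $\rho^{GF}_{\Gamma, A} = \rho^G_{F\Gamma,\, \sigma_\Gamma(A)}\, G(\rho^F_{\Gamma, A})$, obtained from the construction of composition of pseudo cwf-morphisms in Section~3.1. Plugging this into the completion formula for $\phi' \star \phi$ and comparing with the definition of horizontal composition of pseudo cwf-transformations, both sides unfold to the same expression once we apply the naturality square of $\phi'$ at the extended context $F\Gamma \cext \sigma_\Gamma(A)$. The two whiskering equations are then instances of this general computation in which one of the natural transformations is an identity.

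The main obstacle is the bookkeeping in the horizontal case: one must carefully track in which context each instance of $\rho$, $\phi$, and $\phi'$ lives, and verify that the functorial images $G(\rho^F_{\Gamma, A})$ and the naturality squares for $\phi'$ at context extensions combine correctly. No deep idea is required beyond repeated use of naturality of $\phi$ and $\phi'$ and the universal property of context comprehension to identify morphisms into $\cext$-objects by their projection and $\q$-substitution.
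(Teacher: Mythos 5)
Your proposal is correct and takes essentially the same route as the paper's own (appendix) proof: a direct calculation reducing each equation to its $\psi$-component via the explicit completion formula $\subst{\p,\q[\rho'\,\phi_{\Gamma\cext A}\,\rho^{-1}]}$, the composite $\rho^{GF}_{\Gamma,A}=\rho^G_{F\Gamma,\sigma_\Gamma A}\,G(\rho^F_{\Gamma,A})$, naturality of $\phi$ and $\phi'$, and the universal property of comprehension. The only point you gloss over is that the horizontal composite in the 2-category of indexed categories also inserts the coherence isomorphism $\theta'_{\tau_\Gamma A,\phi_\Gamma}$, which the paper eliminates using the identity $F'(\subst{\phi_\Gamma\p,\q})=(\rho')^{-1}\subst{(F'\phi_\Gamma)\p,\q}(\theta')^{-1}\rho'$; this is routine but must appear in the full calculation.
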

\begin{proof}
Direct calculations.
\end{proof}

\subsection{Pseudofunctoriality of $H$}

Note that $H$ is \emph{not} a functor, because for any $F: \C\to \D$ with finite limits and functorial family $\hf{A}$ over $\Gamma$ (in $\C$), 
$\sigma_\Gamma(\hf{A})$ forgets all information on $\hf{A}$ except its display map $\hf{A}(\id)$, and later extends $F(\hf{A}(\id))$ to 
an independent functorial family.
However if $F: \C\to \D$ and $G: \D \to \E$ preserve finite limits, the two pseudo cwf-morphisms $(G, \sigma^G)\circ (F, \sigma^F) = (GF, \sigma^G \sigma^F)$ and $(GF, \sigma^{GF})$
are related by the pseudo cwf-transformation $(1_{GF}, \psi_{1_{GF}})$, which is obviously an isomorphism. The coherence laws only involve vertical 
and horizontal compositions of units and pseudo cwf-transformations obtained by completion, hence they are easy consequences of Lemma \ref{comm_completion}.

\section{The Biequivalences}

\begin{theorem}
We have the following biequivalences of $2$-categories.

\[
\begin{array}{lcr}
\xymatrix{
\FL\ 
\ar@<0.5ex>[r]^{H\ \ \ \ }&
\ \CWFFL
\ar@<0.5ex>[l]^{U\ \ \ \ }
}&~~~~~~~~&
\xymatrix{
\LCC\ 
\ar@<0.5ex>[r]^{H\ \ \ \ }&
\ \CWFLCC
\ar@<0.5ex>[l]^{U\ \ \ \ }
}
\end{array}
\]
\end{theorem}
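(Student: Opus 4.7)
The biequivalence between $\FL$ (resp.\ $\LCC$) and $\CWFFL$ (resp.\ $\CWFLCC$) requires, in each direction, an invertible pseudonatural transformation connecting a composite of $H$ and $U$ to the identity $2$-functor. One direction is immediate: the composite $UH$ is \emph{strictly} the identity, because $UH(\C) = \C$, $UH(F) = F$, $UH(\phi) = \phi$, and the pseudofunctoriality constraint of $H$ is sent by $U$ to the identity natural transformation on the underlying functors. Thus the substantial task is to construct an invertible pseudonatural transformation $\eta$ from the identity on $\CWFFL$ to $HU$, and to check that the construction restricts to $\CWFLCC$.

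For each democratic cwf $(\C,T)$ in $\CWFFL$, define the component $\eta_{(\C,T)} : (\C, T) \to (\C, T_\C) = HU(\C,T)$ as a pseudo cwf-morphism $(\id_\C, \sigma)$ built on the identity functor. On a type $A \in \Ty(\Gamma)$, set $\sigma_\Gamma(A) = \hat{\p_A}$, the functorial family generated from the display map by the hat construction, which is well-defined since $\C$ is left exact by Proposition~\ref{forgetful-functors}. On a term $a : \Gamma \vdash A$, set $\sigma_\Gamma^A(a) = \ext{\id, a}$, which is precisely a section of $\hat{\p_A}(\id_\Gamma) = \p_A$ as required. Context comprehension is preserved on the nose, since $\Gamma\cext_{T_\C} \hat{\p_A} = \dom(\p_A) = \Gamma\cext_T A$, so the coherence data $(\theta, \rho)$ are uniquely forced and exist by Lemma~\ref{completion_cwfmorphisms}. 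Preservation of identity types, democracy, and (in the lccc case) $\Pi$-types then follow from Proposition~\ref{preservation_structure} applied to the identity functor.

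To see that $\eta_{(\C,T)}$ is an equivalence, exhibit a pseudo-inverse $\epsilon_{(\C,T)} : (\C, T_\C) \to (\C, T)$, again on the identity functor, whose type component sends a functorial family $\hf{A}$ to the inverse-image type $\Inverse(\hf{A}(\id_\Gamma)) \in \Ty(\Gamma)$ from the proof of Proposition~\ref{fibres_vs_slices}. The two composites $\epsilon_{(\C,T)}\eta_{(\C,T)}$ and $\eta_{(\C,T)}\epsilon_{(\C,T)}$ are connected to the respective identity cwf-morphisms by pseudo cwf-transformations obtained (via Lemma~\ref{completion_transformations}) from the identity natural transformation on $\id_\C$; their invertibility reduces contextwise to the categorical equivalence $\indexed{T}(\Gamma) \simeq \C/\Gamma$ of Proposition~\ref{fibres_vs_slices}. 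Pseudonaturality of $\eta$ along a 1-cell $(F, \tau)$ is filled by a modification also obtained by completion, invertible by the same fibrewise argument, since both ways around the naturality square agree on underlying functors (both equal $F$).

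The main obstacle is not any single ingenious step but cumulative bookkeeping: every clause of the pseudo cwf-morphism and pseudo cwf-transformation axioms for $\eta$ and $\epsilon$, the triangle identities for the equivalence, and the modification coherences for pseudonaturality must be verified by diagram chases involving the families $(\theta, \rho, \psi)$ on both sides. Each such chase ultimately reduces either to the universal property of pullbacks in $\C$, to Lemma~\ref{comm_completion} on compatibility of completion with composition, or to the fibrewise equivalence of Proposition~\ref{fibres_vs_slices}; assembling these into the required coherent diagrams is precisely the task delegated to the appendix. The $\LCC/\CWFLCC$ case adds only the observation that $\eta_{(\C,T)}$ preserves $\Pi$-types via the second clause of Proposition~\ref{preservation_structure}.
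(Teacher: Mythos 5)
Your proposal is correct and follows essentially the same route as the paper: $UH$ is strictly the identity, and $\eta$, $\epsilon$, the invertible modifications, and the pseudonaturality $2$-cells are all built on identity base functors via the completion lemmas (Lemma \ref{completion_cwfmorphisms} and Lemma \ref{completion_transformations}), with invertibility coming from the inverse-image equivalence of Proposition \ref{fibres_vs_slices} and all coherence delegated to Lemma \ref{comm_completion}. The only deviation is your definition $\sigma^\eta_\Gamma(A) = \hat{\p_A}$ via the hat construction, which needs an arbitrary choice of pullbacks in $\C$; the paper instead takes the canonical functorial family $\delta \mapsto \p_{A[\delta]}$ with squares $\ext{\gamma\p,\q}$ supplied by the cwf's own split substitution, which is isomorphic to yours and satisfies the same hypotheses of Lemma \ref{completion_cwfmorphisms}, so the difference is immaterial.
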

\begin{proof}
Since $UH = \mathrm{Id}$ (the identity 2-functor) it suffices to construct pseudonatural transformations of pseudofunctors:
\[
\xymatrix{
\mathrm{Id}
\ar@<0.5ex>[r]^{\eta}&
HU
\ar@<0.5ex>[l]^{\epsilon}
}
\]
which are inverse up to invertible modifications. Since $HU(\C, T) = (\C, \hof{\C})$, these pseudonatural transformations are families of equivalences of cwfs:
\[
\xymatrix{
(\C, T)
\ar@<0.5ex>[r]^{\eta_{(\C, T)}}&
(\C, \hof{\C})
\ar@<0.5ex>[l]^{\epsilon_{(\C, T)}}
}
\]
which satisfy the required conditions for pseudonatural transformations.


\paragraph{Construction of $\eta_{(\C, T)}$.} Using Lemma \ref{completion_cwfmorphisms}, we just need to define a base functor, which will be $\mathrm{Id}_\C$,
and a family $\sigma^\eta_\Gamma$ which translates types (in the sense of $T$) to functorial families. This is easy, since types in the cwf $(\C, T)$
come equipped with a chosen behaviour under substitution.
Given $A\in \Ty(\Gamma)$, we define:
\begin{eqnarray*}
\sigma^\eta_\Gamma(A)(\delta) &=& \p_{A[\delta]}\\
\sigma^\eta_\Gamma(A)(\delta, \gamma) &=& \ext{\gamma \p, \q}
\end{eqnarray*}
For each pseudo cwf-morphism $(F, \sigma)$, the pseudonaturality square relates two pseudo cwf-morphisms whose base functor
is $F$. Hence, the necessary invertible pseudo cwf-transformation is obtained using Lemma \ref{completion_transformations} from
the identity natural transformation on $F$. The coherence conditions are straightforward consequences of Lemma \ref{comm_completion}.

\paragraph{Construction of $\epsilon_{(\C, T)}$.} As for $\eta$, the base functor for $\epsilon_{(\C, T)}$ is $\mathrm{Id}_\C$. 
Using Lemma \ref{completion_cwfmorphisms} again we need, for each context $\Gamma$, a function $\sigma^\epsilon_\Gamma$
which given a functorial family $\hf{A}$ over $\Gamma$ will build a syntactic type $\sigma^\epsilon_\Gamma(\hf{A}) \in \Ty(\Gamma)$.
In other terms, we need to find a syntactic representative of an arbitrary display map, that is, an arbitrary morphism in $\C$. We use the inverse image:
\[
\sigma^\epsilon_\Gamma(\hf{A}) = \Inverse(\hf{A}(\id)) \in \Ty(\Gamma)
\]
The family $\epsilon$ is pseudonatural for the same reason as $\eta$ above.

\paragraph{Invertible modifications.} For each cwf $(\C, T)$, we need to define invertible pseudo cwf-transformations $m_{(\C, T)} : (\epsilon\eta)_{(\C, T)} \to \id_{(\C, T)}$ and
$m'_{(\C, T)} : (\eta\epsilon)_{(\C, T)} \to \id_{(\C, T)}$. As pseudo cwf-transformations between pseudo cwf-morphisms with the same base functor, their first component will be the identity natural
transformation, and the second will be generated by Lemma \ref{completion_transformations}. The coherence law for modifications is a consequence
of Lemma \ref{comm_completion}.
\end{proof}




\appendix
\section{Proofs of Section 2}

\begin{lemma}[Composition of coercions]
If $(\C, T)$ is a cwf, $\Gamma$ a context in $\C$ and $\theta_1: A\to B$ and $\theta_2 : B \to C$ are isomorphisms in $\indexed{T}(\Gamma)$, then
for all $a: \Gamma \vdash A$,
\[
\{\theta_2\}(\{\theta_1\}(a)) = \{\theta_2\theta_1\}(a)
\]
\label{composition_coercions_appendix}
\end{lemma}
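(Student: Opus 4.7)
The plan is to unfold both sides of the desired equation using the definition $\{\theta\}(a) = \q[\theta\subst{\id, a}]$ and reduce to a single key identity about morphisms in the fibre category.

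First I would unfold:
\[
\{\theta_2\}(\{\theta_1\}(a)) = \q[\theta_2 \subst{\id, \{\theta_1\}(a)}],\quad \{\theta_2\theta_1\}(a) = \q[\theta_2\theta_1 \subst{\id, a}].
\]
So it suffices to show $\theta_2 \theta_1 \subst{\id, a} = \theta_2 \subst{\id, \{\theta_1\}(a)}$, and for that it is enough to establish the more basic identity
\[
\theta_1 \subst{\id, a} \;=\; \subst{\id,\, \{\theta_1\}(a)} \quad : \quad \Gamma \to \Gamma \cext B,
\]
after which the result follows by precomposing with $\theta_2$.

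The key step is this identity, which I would prove by the uniqueness clause of the universal property of context comprehension. Both sides are morphisms $\Gamma \to \Gamma\cext B$, so it is enough to check that they agree after post-composing with $\p_B$ and after applying $\q[-]$. For the projection component, since $\theta_1$ is a morphism in $\indexed{T}(\Gamma)$ we have $\p_B \theta_1 = \p_A$, hence
\[
\p_B (\theta_1 \subst{\id, a}) = \p_A \subst{\id, a} = \id_\Gamma = \p_B \subst{\id, \{\theta_1\}(a)}.
\]
For the term component, $\q[\theta_1 \subst{\id, a}] = \{\theta_1\}(a)$ holds by definition, and $\q[\subst{\id, \{\theta_1\}(a)}] = \{\theta_1\}(a)$ by the comprehension equations. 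By the uniqueness of the mediating morphism, the two sides coincide.

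I do not expect any real obstacle: the only subtle point is to remember that morphisms in $\indexed{T}(\Gamma)$ satisfy $\p\theta = \p$, which is precisely what makes the projection calculation work. Once the auxiliary identity is in hand, composing with $\theta_2$ and then applying $\q[-]$ yields the lemma immediately.
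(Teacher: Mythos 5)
Your proof is correct and follows essentially the same route as the paper's: both reduce the lemma to the identity $\theta_1\subst{\id,a} = \subst{\id,\{\theta_1\}(a)}$, which the paper verifies by the combinator law $\subst{\p\gamma,\q[\gamma]}=\gamma$ together with $\p\theta_1=\p$, while you verify it directly from the uniqueness clause of context comprehension --- the same argument phrased at the level of the universal property rather than the derived equation.
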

\begin{proof}
Direct calculation, using the definition of coercions and manipulation of cwf combinators.
\begin{eqnarray*}
\{\theta_2\}(\{\theta_1\}(a)) 	&=& \q[\theta_2\subst{\id, \q[\theta_1\subst{\id, a}]}]\\
				&=& \q[\theta_2\subst{\p \theta_1 \subst{\id, a}, \q[\theta_2\subst{\id, a}]}]\\
				&=& \q[\theta_2\subst{\p, \q} \theta_1 \subst{\id, a}]\\
				&=& \q[\theta_2\theta_1 \subst{\id, a}]\\
				&=& \{\theta_2\theta_1\}(a)
\end{eqnarray*}
\end{proof}

\begin{lemma}
Let $(\C, T)$ be a democratic cwf with $\Sigma$-types and identity types, then for each $\delta : \Delta\to \Gamma$ there is an isomorphism
$\alpha_{\delta}$ in $\C/\Gamma$:
\[
\xymatrix@R=20pt{
\Gamma\cext \Inverse(\delta)
	\ar@/^/[rr]^{\alpha_\delta}
	\ar@/_/@{<-}[rr]_{\alpha_\delta^{-1}}
	\ar[dr]_{\p}&&
\Delta	\ar[dl]^{\delta}\\
&\Gamma
}
\]
\label{inverse_prop_appendix}
\end{lemma}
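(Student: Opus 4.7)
The plan is to construct $\alpha_\delta$ and $\alpha_\delta^{-1}$ explicitly using cwf combinators, then use extensional identity elimination to verify they are mutually inverse and that $\alpha_\delta$ lies in $\C/\Gamma$. Morally, a point of $\Gamma \cext \Inverse(\delta)$ is a triple $(x,(y,p))$ with $x : \Gamma$, $y : \bar{\Delta}$ and $p$ a proof that $\bar{\delta}(y) = \bar{x}$, and the two maps should be $(x,(y,p)) \mapsto \gamma_\Delta^{-1}\langle\rangle{\cext}y$ and $y \mapsto (\delta(y),(\gamma_\Delta(y),\refl))$.

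First I would construct $\alpha_\delta^{-1} : \Delta \to \Gamma \cext \Inverse(\delta)$. Using the universal property of context comprehension, this amounts to giving a term $a_\delta : \Delta \vdash \Inverse(\delta)[\delta]$, and then setting $\alpha_\delta^{-1} = \ext{\delta, a_\delta}$. Unfolding $\Inverse(\delta)[\delta]$ via stability of $\Sigma$ and $\Id$ under substitution, $a_\delta$ should be the pair whose first component is the democratic image $\q[\gamma_\Delta]$ of the identity $\Delta \to \Delta$ (suitably weakened), and whose second component is a $\refl$-term witnessing that $\bar{\delta}$ applied to this component equals $\overline{\delta}$ itself. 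With this choice, $\p \circ \alpha_\delta^{-1} = \delta$ holds by construction of $\ext{-,-}$.

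Next I would construct $\alpha_\delta : \Gamma \cext \Inverse(\delta) \to \Delta$. The generic term $\q : \Gamma \cext \Inverse(\delta) \vdash \Inverse(\delta)[\p]$ has first projection $\pi_1(\q)$ inhabiting $\bar{\Delta}[\ext{}\p]$ and second projection $\pi_2(\q)$ inhabiting the appropriate $\Id$-type. I would define $\alpha_\delta$ as the composite $\gamma_\Delta^{-1} \circ \ext{\ext{},\pi_1(\q)}$. The fact that $\delta \circ \alpha_\delta = \p$ is where the extensional identity rule bites: the existence of the inhabitant $\pi_2(\q)$ forces the strict equality of the first argument of the $\Id$-type with $\bar{\delta}$ applied to $\pi_1(\q)$, and unwinding $\bar{\delta} = \q[\gamma_\Gamma \delta \gamma_\Delta^{-1}]$ together with the naturality of context comprehension yields $\delta \circ \alpha_\delta = \p$ on the nose.

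Finally I would check that the two morphisms are mutually inverse. One composite, $\alpha_\delta \circ \alpha_\delta^{-1} = \id_\Delta$, follows by a direct calculation: plugging the pair into $\pi_1$ yields back $\q[\gamma_\Delta]$, and then $\gamma_\Delta^{-1}$ cancels the democracy iso. The other composite $\alpha_\delta^{-1} \circ \alpha_\delta = \id$ is the main obstacle, because here one must use both surjective pairing ($\pair(\pi_1(c),\pi_2(c)) = c$, which holds in any cwf with extensional identity and $\Sigma$) and the extensional $\Id$-elimination rule ($c = \refl$) to conclude that the pair produced by $\alpha_\delta^{-1}$ after applying $\alpha_\delta$ equals the original $\q$. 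Combined with the verification that the first component of this pair matches via democracy, the universal property of comprehension then gives the desired equality of morphisms. The bookkeeping with the democracy isomorphisms $\gamma_\Gamma$ and $\gamma_\Delta$ and their interaction with substitution under $\p$ and $\ext{}$ is what makes the computation technically lengthy, but no nontrivial categorical fact beyond extensional $\Id$ and $\Sigma$ is required.
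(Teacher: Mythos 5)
Your construction coincides with the paper's own proof: the same explicit combinators $\alpha_\delta = \gamma_\Delta^{-1}\subst{\subst{},\pi_1(\q)}$ and $\alpha_\delta^{-1} = \subst{\delta,\pair(\q[\gamma_\Delta],\refl)}$, the same use of extensional $\Id$-elimination to force $\delta\circ\alpha_\delta = \p$, and the same appeal to surjective pairing together with uniqueness of identity proofs for $\alpha_\delta^{-1}\alpha_\delta = \id$. The argument is correct and essentially identical to the one in the appendix.
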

\begin{proof}
Recall that $\Inverse(\delta) = \Sigma(\bar{\Delta}[\ext{}], \Id_{\bar{\Gamma}[\ext{}]}(\q[\gamma_\Gamma\p],\bar{\delta}[\ext{\ext{},\q}]) \in \Ty(\Gamma)$.
We define:
\begin{eqnarray*}
\alpha_\delta &=& \gamma_\Delta^{-1}\subst{\subst{}, \pi_1(\q)}\\
\alpha_\delta^{-1} &=&  \subst{\delta, \pair(\q[\gamma_\Delta], \refl_{\bar{\Gamma}[\subst{}]})}
\end{eqnarray*}
A straightforward calculation proves that this typechecks, and that $\alpha_\delta \alpha_\delta^{-1} = \id_\Delta$. For the other equality,
we have $\alpha_\delta^{-1} \alpha_\delta = \subst{\delta\gamma_\Delta^{-1}\subst{\subst{}, \pi_1(\q)}, \pair(\pi_1(\q), \refl_{\bar{\Gamma}[\subst{}]})}$.
But by property of extensional identity types $\q[\gamma_\Gamma\p]$ and $\bar{\delta}[\ext{\ext{},\pi_1(\q)}]$ are equal terms in context $\Gamma\cext \Inverse(\delta)$, so
$\gamma_\Gamma^{-1} \subst{\subst{}, \q[\gamma_\Gamma \p]} = \p$ and $\gamma_\Gamma^{-1} \subst{\subst{}, \bar{\delta}[\ext{\ext{},\pi_1(\q)}]} = \delta\gamma_\Delta^{-1}\subst{\subst{}, \pi_1(\q)}$ are equal substitutions.
Likewise, $\refl_{\bar{\Gamma}[\subst{}]} = \pi_2(\q)$ by uniqueness of identity proofs, therefore $\alpha_\delta^{-1} \alpha_\delta = \id_{\Gamma\cext \Inverse(\delta)}$.
\end{proof}

\begin{proposition}
Let $(\C, T)$ be a democratic cwf with $\Sigma$-types and identity types, then for all context $\Gamma$ the categories $\indexed{T}(\Gamma)$ and
$\C/\Gamma$ are equivalent.
\end{proposition}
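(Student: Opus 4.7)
The plan is to define a functor $F : \indexed{T}(\Gamma) \to \C/\Gamma$ and to show that it is both fully faithful and essentially surjective, hence an equivalence. The functor $F$ sends a type $A \in \Ty(\Gamma)$ to its display map $\p_A : \Gamma \cext A \to \Gamma$, and a morphism $\delta : \Gamma \cext A \to \Gamma \cext B$ (satisfying $\p \delta = \p$) to itself, now viewed as an arrow from $\p_A$ to $\p_B$ in the slice. Functoriality is immediate since identities and composition are inherited from $\C$.

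Fully faithfulness is essentially tautological from the definition of the context-indexed category of types: by construction, morphisms of $\indexed{T}(\Gamma)$ from $A$ to $B$ are morphisms $\delta : \Gamma \cext A \to \Gamma \cext B$ in $\C$ with $\p \delta = \p$, which is exactly the set $\C/\Gamma(\p_A, \p_B)$. So $F$ is the identity on hom-sets after the obvious renaming.

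Essential surjectivity is precisely what Lemma \ref{inverse_prop_appendix} was designed to provide. Given any object $\delta : \Delta \to \Gamma$ of $\C/\Gamma$, take $A = \Inverse(\delta) \in \Ty(\Gamma)$; the lemma furnishes an isomorphism $\alpha_\delta : \Gamma \cext \Inverse(\delta) \to \Delta$ in $\C/\Gamma$, which exhibits $\delta \iso F(\Inverse(\delta))$. Assembling these two facts, $F$ is an equivalence of categories.

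Since the hard work has already been done in the preceding lemma, I expect no genuine obstacle here: the proof of the proposition itself is just an assembly step. The real difficulty was concentrated in Lemma \ref{inverse_prop_appendix}, where one needed democracy (to reflect the contexts $\Delta$ and $\Gamma$ as closed types and so form $\Inverse(\delta)$), $\Sigma$-types (to package a preimage with its equation), and extensional identity types (so that $\alpha_\delta$ and $\alpha_\delta^{-1}$ are actually mutually inverse, using the collapse of identity proofs to make $\pi_2(\q) = \refl$). With Lemma \ref{inverse_prop_appendix} available, the equivalence drops out immediately.
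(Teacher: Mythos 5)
Your proof is correct and matches the paper's own argument exactly: both define the functor $A \mapsto \p_A$, $f \mapsto f$, observe that it is full and faithful by the very definition of morphisms in $\indexed{T}(\Gamma)$, and obtain essential surjectivity from the isomorphism $\p_{\Inverse(\delta)} \iso \delta$ in $\C/\Gamma$ supplied by the preceding lemma. Your remarks on where democracy, $\Sigma$-types, and extensional identity types are actually used are also consistent with how the paper deploys them.
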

\begin{proof}
For each context $\Gamma$, the functor $F_\Gamma : \dep(\Gamma) \to \C/\Gamma$ defined by $F_\Gamma(A) = \p_A$ on objects and $F_\Gamma(f) = f$ on morphisms
in clearly full and faithful, but it is also essentially surjective. Indeed, for any object $\delta : \Delta \to \Gamma$
in $\C/\Gamma$, we have by Lemma \ref{inverse_prop_appendix} an isomorphic $F_{\Gamma}(\Inverse(\delta)) = \p_{\Inverse(\delta)}$, hence $F_{\Gamma}$ 
is an equivalence of categories.
\end{proof}

\section{Proofs of Section 3}

\subsection{Properties of pseudo cwf-morphisms}

Let us first mention that in the definition of a pseudo cwf-morphism $(F, \sigma)$ from $(\C, T)$ to $(\C', T')$, the fact that $\indexed{\sigma}$
is a pseudonatural transformation from $\indexed{T}$ to $\indexed{T'}$ amounts to the satisfaction of the following coherence and naturality laws.
\begin{itemize}
\item \emph{Identity.} For all $A\in \Ty(\Gamma)$, we have $\theta_{A,\id} = \id_{F\Gamma \cext \sigma_\Gamma(A)}$,
\item \emph{Coherence.} For all $\delta: \Xi \to \Delta$ and $\gamma : \Delta \to \Gamma$, the following diagram commutes.
\[
\xymatrix
{
F \Xi \cext \sigma_\Gamma(A)[F(\gamma \delta)]
\ar[dr]_{\indexed{T'}(F\delta)(\theta_{A,\gamma})}
\ar[rr]^{\theta_{A,\gamma \delta}}
&&
F \Xi \cext \sigma_\Xi(A[\gamma\delta])\\
&F \Xi \cext \sigma_\Delta(A[\gamma])[F(\delta)]
\ar[ur]_{\theta_{A[\gamma],\delta}}
}
\]
\item \emph{Naturality.} For all $\delta: \Delta \to \Gamma$ in $\C$, $A, B\in \Ty(\Gamma)$ and $f: A \to B$ in $\indexed{T}(\Gamma)$, the following
diagram commutes in $\indexed{T'}(F\Delta)$.
\[
\xymatrix@C=60pt{
\sigma_\Gamma(A)[F\delta]
        \ar[r]^{\theta_{A, \delta}}
        \ar[d]^{\indexed{T'}(F\delta)(\indexed{\sigma}_\Gamma(f))}&
\sigma_\Delta(A[\delta])
        \ar[d]^{\indexed{\sigma}_\Delta(\indexed{T}(\delta)(f))}\\
\sigma_\Gamma(B)[F\delta]
        \ar[r]^{\theta_{B, \delta}}&
\sigma_\Delta(B[\delta])
}
\]
\end{itemize}
This can be checked by simply unfolding the definition of a pseudonatural transformation.

\begin{proposition}
Any pseudo cwf-morphism $(F, \sigma)$ from $(\C, T)$ to $(\C', T')$ preserves substitution extension, in the following sense:
For all $\delta : \Delta \to \Gamma$ in $\C$ and $a:\Delta \vdash A[\delta]$, we have 
\[
F(\subst{\delta,a}) = \rho_{\Gamma,A}^{-1}\subst{F\delta,\applyopen{\theta_{A,\delta}^{-1}}{(\sigma_\Delta^{A[\delta]}(a))}}
\]
\label{substitution_extension_appendix}
\end{proposition}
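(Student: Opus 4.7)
The plan is to reduce the equation to two simpler ones via the universal property of context comprehension in $\C'$, then tackle the non-trivial one by decomposing $\ext{\delta, a}$ along the substitution pullback.

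Composing both sides with $\rho_{\Gamma, A}$ (an isomorphism) turns the goal into
\[
\rho_{\Gamma, A} \circ F(\ext{\delta, a}) \;=\; \ext{F\delta,\{\theta_{A,\delta}^{-1}\}(\sigma_\Delta^{A[\delta]}(a))}.
\]
A morphism into $F\Gamma \cext \sigma_\Gamma(A)$ is determined uniquely by its composite with $\p$ and by the term obtained by substituting $\q$ along it, so it suffices to verify these two components. The projection component is immediate: since $\rho$ preserves projections ($\p \rho_{\Gamma, A} = F\p_A$) and $F$ is a functor, we get $\p \rho_{\Gamma, A} F(\ext{\delta, a}) = F(\p_A \ext{\delta, a}) = F\delta$.

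For the term component I would factor $\ext{\delta, a} = \ext{\delta \p, \q} \circ \ext{\id, a}$, where $\ext{\delta \p, \q}: \Delta \cext A[\delta] \to \Gamma \cext A$ is the canonical arrow of the substitution pullback in $\C$. Two ingredients then combine. First, item (2) of Proposition~\ref{properties} applied to $a:\Delta \vdash A[\delta]$ identifies $\rho_{\Delta, A[\delta]} F(\ext{\id, a})$ with the section $\ext{\id, \sigma_\Delta^{A[\delta]}(a)}$. Second, the coherence witnessing pseudonaturality of $\indexed{\sigma}$ on the substitution pullback yields
\[
\rho_{\Gamma, A} F(\ext{\delta \p, \q}) \;=\; \ext{(F\delta) \p, \q} \circ \theta_{A, \delta}^{-1} \circ \rho_{\Delta, A[\delta]}.
\]
Chaining these through $F(\ext{\delta,a}) = F(\ext{\delta \p, \q}) \circ F(\ext{\id, a})$ and unfolding the definition of coercion $\{\theta^{-1}\}(b) = \q[\theta^{-1} \ext{\id, b}]$, the composite collapses via $\ext{(F\delta)\p, \q} \circ \ext{\id, c} = \ext{F\delta, c}$ to exactly the pairing on the right-hand side.

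The step requiring the most care is the second ingredient above, namely the identity $\rho_{\Gamma, A} F(\ext{\delta \p, \q}) = \ext{(F\delta)\p, \q}\, \theta_{A, \delta}^{-1}\, \rho_{\Delta, A[\delta]}$. This expresses precisely what it means for $\theta_{A, \delta}$ to witness preservation of type substitution coherently with context comprehension, and its verification is a universal-property chase between the two pullback squares in $\C'$ sitting above $F\delta : F\Delta \to F\Gamma$ (the image under $F$ of the substitution pullback in $\C$, related via $\rho_{\Gamma,A}$ and $\rho_{\Delta, A[\delta]}$ to the intrinsic substitution pullback in $\C'$), using $\p$-preservation by $\rho$ and the defining equation $\q[\rho_{\Gamma, A}] = \{\theta_{A, \p}^{-1}\}(\sigma_{\Gamma \cext A}^{A[\p]}(\q_A))$. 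Once it is in hand, the remainder is routine bookkeeping with cwf-combinators.
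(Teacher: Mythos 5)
Your overall strategy coincides with the paper's: reduce the goal, via the universal property of context comprehension in $\C'$, to the two component equations $\p\,\rho_{\Gamma,A}F(\ext{\delta,a}) = F\delta$ and $\q[\rho_{\Gamma,A}F(\ext{\delta,a})] = \applyopen{\theta_{A,\delta}^{-1}}{(\sigma_\Delta^{A[\delta]}(a))}$, dispose of the first trivially, and concentrate on the second. Where you differ is in how the second is handled: the paper runs one direct combinator calculation, starting from $\q[\rho_{\Gamma,A}] = \applyopen{\theta_{A,\p}^{-1}}{(\sigma_{\Gamma\cext A}^{A[\p]}(\q_A))}$ and pushing $F(\ext{\delta,a})$ through it using preservation of substitution on terms, the coherence law for $\theta$, and composition of coercions; you instead factor $\ext{\delta,a} = \ext{\delta\p,\q}\circ\ext{\id,a}$ and route the argument through two auxiliary identities. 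The factorization and the final chaining (using $\theta^{-1}\ext{\id,b} = \ext{\id,\applyopen{\theta^{-1}}{(b)}}$, valid since $\p\,\theta^{-1}=\p$) are correct.

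Two issues, however, need attention. First, circularity: in the paper both of your ingredients are downstream of the very proposition you are proving. Item (2) of Proposition~\ref{properties} is Lemma~\ref{redundant_appendix}, proved there as ``a direct consequence of preservation of substitution extension,'' and your key identity $\rho_{\Gamma,A}F(\ext{\delta\p,\q}) = \ext{(F\delta)\p,\q}\,\theta_{A,\delta}^{-1}\,\rho_{\Delta,A[\delta]}$ is exactly Lemma~\ref{equiv_theta_appendix}, whose proof also opens by applying the present proposition. Both can be proved independently from the axioms of a pseudo cwf-morphism, but that independent proof is essentially the same $\q$-component computation the paper carries out here, so the work is relocated rather than avoided, and you must actually supply it. Second, and more substantively: you justify the key identity by ``a universal-property chase between the two pullback squares sitting above $F\delta$,'' which tacitly treats the $F$-image of the substitution pullback of $\C$ as a pullback in $\C'$. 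That is the situation of Lemma~\ref{completion_cwfmorphisms}, where $F$ preserves finite limits; but the proposition is stated for an \emph{arbitrary} pseudo cwf-morphism, where no such assumption is available and that square is not known to be a pullback. The identity still holds in general, but the correct universal property to invoke is that of the comprehension $F\Gamma\cext\sigma_\Gamma(A)$ (a morphism into it is determined by its $\p$- and $\q$-components), and the $\q$-component comparison then requires preservation of $\q$ by $\rho$, preservation of substitution on terms applied to $\q_A[\ext{\delta\p,\q}] = \q$, and the coherence law relating $\theta_{A,\delta\p}$, $\theta_{A[\delta],\p}$ and $\indexed{T'}(F\p)(\theta_{A,\delta})$. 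With that repair the argument goes through.
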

\begin{proof}
First note that for each context $\Gamma$ in $\C$ and type $A\in \Ty(\Gamma)$, the isomorphism $\rho_{\Gamma, A}$ is defined as the unique morphism
preserving projections between the two context comprehensions of $F\Gamma$ and $\sigma_\Gamma A$, in other terms
$\rho_{\Gamma, A} = \subst{F(\p_A), \{\theta_{A, \p}^{-1}\}(\sigma_{\Gamma\cext A}^{A[\p]}(\q_A))}$, which implies that the projections are preserved
in the following sense.
\begin{eqnarray*}
F(\p_A) &=& \p_{\sigma_\Gamma A} \rho_{\Gamma, A}\\
\sigma_{\Gamma\cext A}^{A[\p]}(\q_A) &=& \{\theta_{A, \p}\}(\q_{\sigma_\Gamma A}[\rho_{\Gamma, A}])
\end{eqnarray*}
We now use it to prove the announced property.
Clearly, the required equality boils down to the following two equations.
\begin{eqnarray*}
\p \rho_{\Gamma, A} F(\subst{\delta,a}) &=& F\delta\\
\q[\rho_{\Gamma, A} F(\subst{\delta,a})]&=& \{\theta_{A, \delta}^{-1}\}(\sigma_\Delta^{A[\delta]}(a))
\end{eqnarray*}
The proof of the first equality is completely straightforward:
\begin{eqnarray*}
\p \rho_{\Gamma, A} F(\subst{\delta,a}) &=& F(\p) F(\subst{\delta,a})\\
                                        &=& F\delta
\end{eqnarray*}
However, the proof of the second is far more subtle and relies on many  properties of pseudo cwf-morphisms and cwf combinators:
\begin{eqnarray*}
\q[\rho_{\Gamma, A} F(\subst{\delta,a})]&=_1&\{\theta_{A, \p}^{-1}\}(\sigma_{\Gamma\cext A}^{A[\p]}(\q))[F(\subst{\delta,a})]\\
                                        &=_2&\q[\theta_{A, \p}^{-1}\subst{\id, \sigma_{\Gamma\cext A}^{A[\p]}(\q)}][F(\subst{\delta,a})]\\
                                        &=&\q[\theta_{A, \p}^{-1}\subst{F(\subst{\delta, a}), \sigma_{\Gamma\cext A}^{A[\p]}(\q)[F(\subst{\delta, a})]}]\\
                                        &=_3&\q[\theta_{A, \p}^{-1}\subst{F(\subst{\delta, a}),\applyopen{\theta_{A[\p], \subst{\delta, a}}^{-1}}{(\sigma_\Delta^{A[\delta]}(\q[\subst{\delta, a}]))}}]\\
                                        &=&\q[\theta_{A, \p}^{-1}\subst{F(\subst{\delta, a}),\applyopen{\theta_{A[\p], \subst{\delta, a}}^{-1}}{(\sigma_\Delta^{A[\delta]}(a))}}]\\
                                        &=&\q[\subst{\p, \q[\theta_{A, \p}^{-1}\subst{F(\subst{\delta, a})\p, \q}]}\subst{\id, \{\theta_{A[\p], \subst{\delta, a}}^{-1}\}(\sigma_\Delta^{A[\delta]}(a))}]\\
                                        &=_4&\q[\indexed{T'}(F(\subst{\delta, a}))(\theta_{A, \p}^{-1})\subst{\id, \{\theta_{A[\p], \subst{\delta, a}}^{-1}\}(\sigma_\Delta^{A[\delta]}(a))}]\\
                                        &=_2&\{\indexed{T'}(F(\subst{\delta, a}))(\theta_{A, \p}^{-1})\}(\{\theta_{A[\p], \subst{\delta, a}}^{-1}\}(\sigma_\Delta^{A[\delta]}(a)))\\
                                        &=_5&\{\theta_{A, \delta}^{-1}\}(\sigma_\Delta^{A[\delta]}(a))
\end{eqnarray*}
Equality (1) is by preservation of $\q$, equalities (2) by definition of coercions, equality (3) by preservation
of substitution on terms, equality (4) by definition of $\indexed{T'}$, equality (5) by the coherence requirement on $\theta$ and Lemma \ref{composition_coercions_appendix}. All
the other steps are by simple manipulations on cwf combinators.
\end{proof}

\begin{lemma}
If $(F, \sigma)$ is a pseudo cwf-morphism from $(\C, T)$ to $(\C', T')$, then its action on terms and sections is redundant: 
for all $a\in \Gamma\vdash A$, 
\[
\sigma_\Gamma^A(a) = \q[\rho_{\Gamma, A} F(\subst{\id, a})]
\]
\label{redundant_appendix}
\end{lemma}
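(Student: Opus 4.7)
The plan is to derive this redundancy lemma as an immediate consequence of the substitution extension preservation (Proposition \ref{substitution_extension_appendix}), which has already been established independently of this lemma. Specifically, I will specialize Proposition \ref{substitution_extension_appendix} to the case $\delta = \id_\Gamma$, then simplify using basic coherence.

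First, I apply the equation
\[
F(\subst{\delta,a}) = \rho_{\Gamma,A}^{-1}\subst{F\delta,\{\theta_{A,\delta}^{-1}\}(\sigma_\Delta^{A[\delta]}(a))}
\]
with $\delta = \id_\Gamma$, noting that $A[\id] = A$ and $F(\id_\Gamma) = \id_{F\Gamma}$. By the identity coherence law on $\theta$ (stated at the beginning of this appendix subsection, $\theta_{A,\id} = \id_{F\Gamma\cext \sigma_\Gamma(A)}$), the coercion $\{\theta_{A,\id}^{-1}\}$ is the identity action on terms, so the specialization reduces to
\[
F(\subst{\id_\Gamma, a}) \;=\; \rho_{\Gamma,A}^{-1}\,\subst{\id_{F\Gamma}, \sigma_\Gamma^A(a)}.
\]
Composing both sides on the left with $\rho_{\Gamma,A}$ gives $\rho_{\Gamma,A}\,F(\subst{\id, a}) = \subst{\id, \sigma_\Gamma^A(a)}$, and applying $\q[-]$ to both sides yields the desired equality, since $\q[\subst{\id, t}] = t$ by the defining universal property of context comprehension.

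The only subtlety is that Proposition \ref{substitution_extension_appendix} is stated as item (1) of Proposition \ref{properties} in the main text and this lemma is item (2); I must check that the proof of the substitution extension equality does not itself invoke the redundancy lemma, which would be circular. Inspecting the calculation there, it proceeds from the definition of $\rho_{\Gamma,A}$ as the unique projection-preserving mediating isomorphism — using only the equalities $F(\p_A) = \p\,\rho_{\Gamma,A}$ and $\sigma_{\Gamma\cext A}^{A[\p]}(\q_A) = \{\theta_{A,\p}\}(\q[\rho_{\Gamma,A}])$ — together with preservation of substitution on terms, the coherence law on $\theta$, and the composition-of-coercions Lemma \ref{composition_coercions_appendix}. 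None of these appeals to the present lemma, so the argument is sound.

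The main (and only) obstacle is thus book-keeping: verifying that $\theta_{A,\id} = \id$ really does collapse the coercion on the right-hand side to the identity, so that no extra combinator manipulation is needed. Once this is observed, the proof is a one-line consequence of Proposition \ref{substitution_extension_appendix}. Alternatively, one could give a direct calculation starting from $\q[\rho_{\Gamma,A}] = \{\theta_{A,\p}^{-1}\}(\sigma_{\Gamma\cext A}^{A[\p]}(\q_A))$ and substituting along $F(\subst{\id,a})$, but this would essentially reproduce the long chain of equalities appearing in the proof of Proposition \ref{substitution_extension_appendix}; deriving the redundancy lemma as a corollary is far more economical.
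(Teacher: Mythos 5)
Your proposal is correct and follows essentially the same route as the paper: the appendix proof also specializes the preservation-of-substitution-extension identity to $\delta = \id$ and collapses the coercion using the coherence law $\theta_{A,\id} = \id$. Your additional check that the proof of Proposition \ref{substitution_extension_appendix} does not circularly invoke the redundancy lemma is accurate and harmless, though the paper does not bother to state it.
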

\begin{proof}
This is a direct consequence of preservation of substitution extension, as follows:
\[
F(\subst{\id, a}) = \rho_{\Gamma, A}^{-1} \subst{\id, \applyopen{\theta_{A, \id}^{-1}}{\sigma_\Gamma^A(a)}}
\]
but $\theta_{A, \id} = \id$ by coherence of $\theta$, hence the result is proved.
\end{proof}

\begin{lemma}
If $(F, \sigma)$ is a pseudo cwf-morphism from $(\C, T)$ to $(\C', T')$ and $\theta: A\to B$ is a morphism in $\indexed{T}(\Gamma)$, then 
the coercion $\{\theta\}$ commutes with $\sigma$ in the following way, for each $a\in \Gamma\vdash A$:
\[
\sigma_\Gamma^B(\{\theta\}(a)) = \{\indexed{\sigma}_\Gamma(\theta)\}(\sigma_\Gamma^A(a))
\]
\label{comm_sigma_theta_appendix}
\end{lemma}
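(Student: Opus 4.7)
The plan is to unfold both sides of the claimed equality using the combinator definitions of coercion and the redundancy lemma (Lemma \ref{redundant_appendix}), and then match them via a universal property argument. Concretely, recall that $\{\theta\}(a) = \q[\theta\subst{\id, a}]$ and $\indexed{\sigma}_\Gamma(\theta) = \rho_{\Gamma, B} F(\theta) \rho_{\Gamma, A}^{-1}$. The starting observation is that since $\p\theta = \p$, the morphism $\theta\subst{\id, a} : \Gamma \to \Gamma\cext B$ satisfies $\p(\theta\subst{\id, a}) = \id$ and $\q[\theta\subst{\id, a}] = \{\theta\}(a)$, so by the universal property of context comprehension
\[
\theta\subst{\id, a} = \subst{\id, \{\theta\}(a)}.
\]

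With this in hand, I would first rewrite the left-hand side using Lemma \ref{redundant_appendix}:
\[
\sigma_\Gamma^B(\{\theta\}(a)) = \q[\rho_{\Gamma, B} F(\subst{\id, \{\theta\}(a)})] = \q[\rho_{\Gamma, B} F(\theta) F(\subst{\id, a})].
\]
Then unfold the right-hand side directly from the definition of coercion:
\[
\{\indexed{\sigma}_\Gamma(\theta)\}(\sigma_\Gamma^A(a)) = \q[\rho_{\Gamma, B} F(\theta) \rho_{\Gamma, A}^{-1} \subst{\id, \sigma_\Gamma^A(a)}].
\]
So everything reduces to the identity
\[
\rho_{\Gamma, A} F(\subst{\id, a}) = \subst{\id, \sigma_\Gamma^A(a)},
\]
which I would establish by the universal property of context comprehension on $F\Gamma\cext \sigma_\Gamma(A)$: on the one hand, $\p\rho_{\Gamma, A} = F(\p)$ (preservation of projections), so $\p\rho_{\Gamma, A} F(\subst{\id, a}) = F(\p\subst{\id, a}) = \id$; on the other hand, $\q[\rho_{\Gamma, A} F(\subst{\id, a})] = \sigma_\Gamma^A(a)$ is exactly the content of Lemma \ref{redundant_appendix}.

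The argument is essentially a direct combinator calculation, so I do not expect any serious obstacle beyond bookkeeping. The only nontrivial step is recognizing that the key identity $\rho_{\Gamma, A} F(\subst{\id, a}) = \subst{\id, \sigma_\Gamma^A(a)}$ is just the redundancy lemma read backwards through the universal property of context comprehension. Once that identity is in place, the two unfolded expressions coincide by a single cancellation of $\rho_{\Gamma, A}^{-1}\rho_{\Gamma, A}$, completing the proof.
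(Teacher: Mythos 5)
Your proof is correct and follows essentially the same route as the paper's: both unfold the two sides via the redundancy lemma and the definition of coercion, use $\p\theta=\p$ to get $\theta\subst{\id,a}=\subst{\id,\{\theta\}(a)}$, and conclude via $\indexed{\sigma}_\Gamma(\theta)=\rho_{\Gamma,B}F(\theta)\rho_{\Gamma,A}^{-1}$. The only difference is presentational — the paper runs a single chain of equalities from left to right, whereas you meet in the middle by isolating the identity $\rho_{\Gamma,A}F(\subst{\id,a})=\subst{\id,\sigma_\Gamma^A(a)}$ — but the underlying steps are identical.
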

\begin{proof}
Direct calculation.
\begin{eqnarray*}
\sigma_\Gamma^B(\{\theta\}(a)) 	&=_1& \q[\rho_{\Gamma, B} F(\subst{\id, \{\theta\}(a)})]\\
				&=_2& \q[\rho_{\Gamma, B} F(\subst{\id, \q[\theta\subst{\id, a}]})]\\
				&=_3& \q[\rho_{\Gamma, B} F( \theta \subst{\id, a})]\\
				&=_4& \q[\indexed{\sigma}_\Gamma(\theta) \rho_{\Gamma, A} F(\subst{\id, a})]\\
				&=_2& \{\indexed{\sigma}_\Gamma(\theta)\}(\q[\rho_{\Gamma, A} F(\subst{\id, a})])\\
				&=_1& \{\indexed{\sigma}_\Gamma(\theta)\}(\sigma_\Gamma^A(a))
\end{eqnarray*}
Where (1) is by Lemma \ref{redundant_appendix}, (2) by definition of coercions, (3) by basic manipulation of cwf combinators and (4)
by definition of $\indexed{\sigma}$.
\end{proof}

\subsection{Composition of pseudo cwf-morphisms}

\begin{proposition}
Pseudo cwf-morphisms are stable under composition.
\end{proposition}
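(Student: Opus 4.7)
The plan is to show that the composite $(GF, \tau\sigma)$, with type and term components $(\tau\sigma)_\Gamma(A) = \tau_{F\Gamma}(\sigma_\Gamma(A))$ and $(\tau\sigma)_\Gamma^A(a) = \tau_{F\Gamma}^{\sigma_\Gamma(A)}(\sigma_\Gamma^A(a))$ as already given in the text, satisfies every clause of the definition of a pseudo cwf-morphism. The two families of coherent isomorphisms I would build by pasting: set
$$\theta^{GF}_{A,\delta} \;=\; \indexed{\tau}_{F\Delta}(\theta^\sigma_{A,\delta}) \circ \theta^\tau_{\sigma_\Gamma(A), F\delta} : \tau_{F\Gamma}(\sigma_\Gamma(A))[GF\delta] \to \tau_{F\Delta}(\sigma_\Delta(A[\delta])),$$
and $\rho^{GF}_{\Gamma,A} = \rho^\tau_{F\Gamma, \sigma_\Gamma(A)} \circ G(\rho^\sigma_{\Gamma, A})$. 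Both are isomorphisms, as composites of isomorphisms (the functor $G$ sends $\rho^\sigma_{\Gamma,A}$ to an isomorphism).

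With these in place the structural conditions are easy. That $GF\nilc$ is terminal follows from the fact that both $F$ and $G$ preserve terminal objects. Preservation of context comprehension holds because $\rho^{GF}_{\Gamma,A}$ preserves projections (since both $\rho^\sigma$ and $\rho^\tau$ do) and is invertible, so the universal property identifies $GF(\Gamma\cext A)$ with the comprehension of $GF\Gamma$ and $(\tau\sigma)_\Gamma(A)$. The term-substitution equation $(\tau\sigma)_\Delta^{A[\delta]}(a[\delta]) =_{\theta^{GF}_{A,\delta}} (\tau\sigma)_\Gamma^A(a)[GF\delta]$ would be obtained by first applying the analogous equation for $\sigma$, pushing through $\tau$ using Lemma \ref{comm_sigma_theta_appendix} to commute $\tau$ with the $\sigma$-coercion, then applying the analogous equation for $\tau$, and finally collapsing the resulting composite of two coercions into one via Lemma \ref{composition_coercions_appendix}, recognising the outcome as $\theta^{GF}_{A,\delta}$ by construction.

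The main obstacle is the coherence and naturality of $\theta^{GF}$, i.e.\ the assertion that $\indexed{\tau\sigma}$ is a pseudonatural transformation from $\indexed{T_0}$ to $\indexed{T_2} GF$. Conceptually this is nothing but closure of pseudonatural transformations of indexed categories under vertical composition and whiskering, so that $\indexed{\tau\sigma} = (\indexed{\tau} \star F) \cdot \indexed{\sigma}$ is pseudonatural. Concretely, I would verify the identity law $\theta^{GF}_{A,\id} = \id$, the triangle relating $\theta^{GF}_{A,\gamma\delta}$ to the pasting of $\theta^{GF}_{A[\gamma],\delta}$ with $\indexed{T_2}(GF\delta)(\theta^{GF}_{A,\gamma})$, and naturality in $A$, by pasting the corresponding diagram for $\sigma$ with that for $\tau$ and applying the functor $\indexed{\tau}$ to the $\sigma$-squares. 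These calculations are largely mechanical but error-prone; the delicate point is that $\theta^\tau$ at the $\sigma$-data must interact correctly with the pseudonaturality squares of $\sigma$, which is exactly what the pseudonatural-transformation axioms for $\tau$ guarantee.
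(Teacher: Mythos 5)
Your proposal is correct and follows essentially the same route as the paper: the same composite data $\theta^{GF}_{A,\delta}=\indexed{\tau}_{F\Delta}(\theta^{\sigma}_{A,\delta})\,\theta^{\tau}_{\sigma_\Gamma(A),F\delta}$ and $\rho^{GF}_{\Gamma,A}=\rho^{\tau}_{F\Gamma,\sigma_\Gamma(A)}\,G(\rho^{\sigma}_{\Gamma,A})$, the same term-substitution computation via commuting $\tau$ past coercions and composing coercions, and the same final reduction of coherence and naturality to closure of pseudonatural transformations under composition. The only place you are lighter than the paper is the claim that $\rho^{GF}$ preserves the \emph{second} projection ``since both do'': that step is where the paper's longest calculation lives, since the second projection of the composite is defined through $(\theta^{GF}_{A,\p})^{-1}$ and must be unwound explicitly, but the tools you cite suffice to carry it out.
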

\begin{proof}
Let us first give a bit more details about how pseudo cwf-morphisms are composed.
If $(F, \sigma) : (\C_0, T_0) \to (\C_1, T_1)$ and $(G, \tau) : (\C_1, T_1) \to (\C_2, T_2)$ are two pseudo cwf-morphisms, we define their composition
$(G, \tau)(F, \sigma)$ as $(GF, \tau\sigma)$ where:
\begin{eqnarray*}
(\tau\sigma)_\Gamma(A) &=& \tau_{F\Gamma}(\sigma_\Gamma(A))\\
(\tau\sigma)_\Gamma^A(a) &=& \tau_{F\Gamma}^{\sigma_\Gamma(A)}(\sigma_\Gamma^A(a))
\end{eqnarray*}
If the other components of $(F, \sigma)$ are denoted by $\theta^F, \rho^F$ and those of $(G, \tau)$ by $\theta^G, \rho^G$, we define:
\[
\theta_{A, \delta} = \indexed{\tau}_{F\Delta}(\theta^F_{A, \delta})\theta^G_{\sigma_\Gamma(A), F\delta}\\
\]
All the components of $(G, \tau)(F, \sigma)$ are now defined, but we still have a number of conditions to prove.
\begin{itemize}
\item \emph{Preservation of substitution on terms.} Direct calculation, if $a: \Gamma\vdash A$ and $\delta : \Delta \to \Gamma$ in $\C_0$.
\begin{eqnarray*}
(\tau\sigma)_\Delta^{A[\delta]}(a[\delta]) 	&=_1& \tau_{F\Delta}^{\sigma_\Delta(A[\delta])}(\sigma_\Delta^{A[\delta]}(a[\delta]))\\
						&=_2& \tau_{F\Delta}^{\sigma_\Delta(A[\delta])}(\{\theta^F_{A, \delta}\}(\sigma_\Gamma^A(a)[F\delta]))\\
						&=_3& \q[\rho^G_{F\Delta, \sigma_\Delta(A[\delta])} G(\subst{\id, \{\theta^F_{A, \delta}\}(\sigma_\Gamma^A(a)[F\delta])})]\\
						&=_4& \q[\rho^G_{F\Delta, \sigma_\Delta(A[\delta])} G(\subst{\id, \q[\theta^F_{A, \delta}\subst{\id, \sigma_\Gamma^A(a)[F\delta]}]})\\
						&=_5& \q[\rho^G_{F\Delta, \sigma_\Delta(A[\delta])} G(\theta^F_{A, \delta} \subst{\id,\sigma_\Gamma^A(a)[F\delta]})]\\
						&=_6& \q[\indexed{\tau}_{F\Delta}(\theta^F_{A, \delta}) \rho_{F\Delta, \sigma_\Gamma(A)[F\delta]}^G G(\subst{\id,\sigma_\Gamma^A(a)[F\delta]})]\\
						&=_7& \q[\indexed{\tau}_{F\Delta}(\theta^F_{A, \delta}) \subst{\id, \q[\rho_{F\Delta, \sigma_\Gamma(A)[F\delta]}^G G(\subst{\id,\sigma_\Gamma^A(a)[F\delta]})]}]\\
						&=_4& \{\indexed{\tau}_{F\Delta}(\theta^F_{A, \delta})\}(\q[\rho_{F\Delta, \sigma_\Gamma(A)[F\delta]}^G G(\subst{\id,\sigma_\Gamma^A(a)[F\delta]})])\\
						&=_3& \{\indexed{\tau}_{F\Delta}(\theta^F_{A, \delta})\}(\tau_{F\Delta}^{\sigma_\Gamma(A)[F\delta]}(\sigma_\Gamma^A(a)[F\delta]))\\
						&=_2& \{\indexed{\tau}_{F\Delta}(\theta^F_{A, \delta})\}(\{\theta_{\sigma_\Gamma(A), F\delta}^G\}(\tau_{F\Gamma}^{\sigma_\Gamma(A)}(\sigma_\Gamma^A(a))[GF\delta]))\\
						&=_8& \{\theta_{A, \delta}\}(\tau_{F\Gamma}^{\sigma_\Gamma(A)}(\sigma_\Gamma^A(a))[GF\delta])\\
						&=_1& \{\theta_{A, \delta}\}((\tau\sigma)_\Gamma^A(a)[GF\delta])
\end{eqnarray*}
Equalities annotated by (1) come from the definition of $\tau\sigma$, (2) is preservation of substitution for $\sigma$ or $\tau$, (3) is Lemma \ref{redundant_appendix},
(4) is by definition of coercions, (5) uses $\p \theta^F_{A, \delta} = \p$ and basic manipulations with cwf combinators, (6) is by definition of $\indexed{\tau}$, 
(7) uses preservation of $\p$ by $(G, \tau)$ and basic manipulations with cwf combinators, and (8) is by definition of $\theta$.
\item \emph{Preservation of the terminal object.} Trivial from the preservation of the terminal object by $F$ and $G$.
\item \emph{Preservation of context comprehension.} Using preservation of context comprehension from $(F, \sigma)$ and $(G, \tau)$ we define:
\[
\xymatrix@C=40pt{
GF(\Gamma\cext A)
	\ar[r]^{G(\rho^F_{\Gamma, A})}&
G(F\Gamma\cext \sigma_\Gamma A)
	\ar[r]^{\rho^G_{F\Gamma, \sigma_\Gamma A}}&
GF\Gamma\cext (\tau\sigma)_\Gamma(A)
}
\]
As a composition of isomorphisms it is an isomorphism so $GF(\Gamma\cext A)$ is also a context comprehension of $GF\Gamma$ and $(\tau\sigma)_\Gamma(A)$.
We must still check that the corresponding projections are those required by the definition. It is obvious for the first projection:
\begin{eqnarray*}
\p \rho^G_{F\Gamma, \sigma_\Gamma A} G(\rho^F_{\Gamma, A}) 	&=& G(\p)G(\rho^F_{\Gamma, A})\\
								&=& GF\p
\end{eqnarray*}
But more intricate for the second.
\begin{eqnarray*}
\q[\rho^G_{F\Gamma, \sigma_\Gamma A} G(\rho^F_{\Gamma, A})]	&=_1& \{(\theta^G_{\sigma_\Gamma A, \p})^{-1}\}(\tau_{F\Gamma\cext \sigma_\Gamma A}^{\sigma_\Gamma(A)[\p]}(\q))[G(\rho^F_{\Gamma, A})]\\
								&=_2& \{(\theta^G_{\sigma_\Gamma A, \p})^{-1}\}(\{(\theta^G_{\sigma_\Gamma(A)[\p], \rho^F_{\Gamma, A}})^{-1}\}(\tau_{F(\Gamma\cext A)}^{\sigma_\Gamma(A)[F\p]}(\q[\rho^F_{\Gamma, A}])))\\
								&=_3& \{(\theta^G_{\sigma_\Gamma A,F\p})^{-1}\}(\tau_{F(\Gamma\cext A)}^{\sigma_\Gamma(A)[F\p]}(\q[\rho^F_{\Gamma, A}]))\\
								&=_4& \{(\theta^G_{\sigma_\Gamma A,F\p})^{-1}\}(\tau_{F(\Gamma\cext A)}^{\sigma_\Gamma(A)[F\p]}(\{(\theta^F_{A, \p})^{-1}\}(\sigma_{\Gamma\cext A}^{A[\p]}(\q))))\\
								&=_5& \{(\theta^G_{\sigma_\Gamma A,F\p})^{-1}\}(\{\indexed{\tau}_{F(\Gamma\cext A)}((\theta_{A, \p}^F)^{-1})\}(\tau_{F(\Gamma\cext A)}^{\sigma_{\Gamma\cext A}(A[\p])}(\sigma_{\Gamma\cext A}^{A[\p]}(\q))))\\
								&=_6& \{\theta_{A, [p]}^{-1}\}((\tau\sigma)_{\Gamma\cext A}^{A[\p]}(q))
\end{eqnarray*}
Where (1) is preservation of the second projection by $\rho^G$, (2) is preservation of substitution on terms, (3) is coherence for $\theta^G$, (4) is preservation
of the second projection by $\rho^F$, (5) is Lemma \ref{comm_sigma_theta_appendix} and (6) is by definition of $\theta$ and $\tau\sigma$.
\end{itemize}
Finally note that, as can be checked by unfolding the definitions, we have for all context $\Gamma$ in $\C$
\[
(\indexed{\tau\sigma})_\Gamma = \indexed{\tau}_{F\Gamma} \circ \indexed{\sigma}_\Gamma
\]
Hence the necessary coherence and naturality conditions amounts to the stability of pseudonatural transformations under composition.
\end{proof}

\begin{lemma}
Any pseudo cwf-morphism $(F,\sigma)$ from $(\C, T)$ to $(\C', T')$ where both cwfs support $\Sigma$-types automatically preserves them,
in the sense that
\[
\sigma_\Gamma(\Sigma(A, B)) \iso \Sigma(\sigma_\Gamma(A), \sigma_{\Gamma\cext A}(B)[\rho_{\Gamma, A}^{-1}])
\]
\end{lemma}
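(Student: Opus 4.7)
The plan is to exploit the canonical isomorphism $\Gamma \cext \Sigma(A, B) \iso (\Gamma \cext A) \cext B$ that exists in any cwf supporting $\Sigma$-types, together with the fact that a pseudo cwf-morphism preserves context comprehension via the family of isos $\rho$. Combining these on both sides of $(F, \sigma)$ and reindexing along $\rho_{\Gamma, A}^{-1}$ will produce a projection-preserving isomorphism between the relevant context extensions, which is by definition an iso in $\indexed{T'}(F\Gamma)$.

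Concretely, first I would write down, in any cwf with $\Sigma$-types, the ``canonical pairing'' isomorphism $c_{\Gamma, A, B} : \Gamma \cext \Sigma(A, B) \to (\Gamma \cext A) \cext B$ defined by $c_{\Gamma, A, B} = \subst{\subst{\p, \pi_1(\q)}, \pi_2(\q)}$, with inverse $\subst{\p \p, \pair(\q[\p], \q)}$. A routine calculation using the $\Sigma$-equations shows these are mutually inverse, and both preserve the projection down to $\Gamma$. This iso is available both in $(\C, T)$ applied to $A, B$ and in $(\C', T')$ applied to $\sigma_\Gamma A$ and $\sigma_{\Gamma \cext A}(B)[\rho_{\Gamma, A}^{-1}]$.

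Next I would chain five isomorphisms in $\C'$, each preserving the projection down to $F\Gamma$. Starting from $F\Gamma \cext \sigma_\Gamma(\Sigma(A, B))$, apply $\rho_{\Gamma, \Sigma(A, B)}^{-1}$ to land in $F(\Gamma \cext \Sigma(A, B))$; apply $F$ to $c_{\Gamma, A, B}$ to reach $F((\Gamma \cext A) \cext B)$; apply $\rho_{\Gamma \cext A, B}$ to reach $F(\Gamma \cext A) \cext \sigma_{\Gamma \cext A}(B)$; reindex along $\rho_{\Gamma, A}^{-1}$ (which is an iso, hence yields an iso on the corresponding comprehensions) to reach $F\Gamma \cext \sigma_\Gamma A \cext \sigma_{\Gamma \cext A}(B)[\rho_{\Gamma, A}^{-1}]$; and finally apply the inverse canonical pairing iso in $(\C', T')$ to land in $F\Gamma \cext \Sigma(\sigma_\Gamma A, \sigma_{\Gamma \cext A}(B)[\rho_{\Gamma, A}^{-1}])$. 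The composite is a projection-preserving iso, so by definition of $\indexed{T'}(F\Gamma)$ it yields the desired isomorphism $\sigma_\Gamma(\Sigma(A, B)) \iso \Sigma(\sigma_\Gamma A, \sigma_{\Gamma \cext A}(B)[\rho_{\Gamma, A}^{-1}])$.

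The only real subtlety is the reindexing step: $\sigma_{\Gamma \cext A}(B)$ naturally lives over $F(\Gamma \cext A)$, but the $\Sigma$ on the right-hand side requires its second argument to be a type over $F\Gamma \cext \sigma_\Gamma A$. Since $\rho_{\Gamma, A}$ is an iso between these two contexts, substitution along $\rho_{\Gamma, A}^{-1}$ bridges the gap, and the resulting map between comprehensions preserves projections to $F\Gamma$ because $\p \rho_{\Gamma, A} = F\p$ by preservation of the first projection. No appeal to the more involved coherence conditions (naturality of $\theta$, preservation of substitution on terms) should be needed; the argument is essentially just an exercise in tracking context comprehensions through the iso $\rho$.
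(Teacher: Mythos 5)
Your proof is correct and follows essentially the same route as the paper's: both exploit the canonical isomorphism $\Gamma\cext\Sigma(A,B)\iso\Gamma\cext A\cext B$ and chain the same five projection-preserving isomorphisms (via $\rho_{\Gamma,\Sigma(A,B)}^{-1}$, $F$ of the pairing iso, $\rho_{\Gamma\cext A,B}$, reindexing along $\rho_{\Gamma,A}^{-1}$, and the pairing iso in $(\C',T')$). Your version is in fact slightly more explicit than the paper's, which leaves the pairing combinators and the projection-preservation check as ``easy to see.''
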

\begin{proof}
We exploit the fact that in any cwf $(\C, T)$ with $\Sigma$-types we have $\Gamma\cext \Sigma(A, B) \iso \Gamma\cext A\cext B$ (obvious).
Therefore:
\begin{eqnarray*}
F\Gamma\cext \sigma_\Gamma(\Sigma(A, B))&\iso& F(\Gamma\cext \Sigma(A, B))\\
                                        &\iso& F(\Gamma\cext A \cext B)\\
                                        &\iso& F(\Gamma\cext A)\cext \sigma_{\Gamma\cext A}(B)\\
                                        &\iso& F\Gamma\cext \sigma_\Gamma(A) \cext \sigma_{\Gamma\cext A}(B)[\rho_{\Gamma, A}^{-1}]\\
                                        &\iso& F\Gamma\cext \Sigma(\sigma_\Gamma(A), \sigma_{\Gamma\cext A}(B)[\rho_{\Gamma, A}^{-1}])
\end{eqnarray*}
It is easy to see that the resulting morphism is a type isomorphism.
\end{proof}

\begin{lemma}
Preservation of democracy, identity types and $\Pi$-types are all stable under composition.
\end{lemma}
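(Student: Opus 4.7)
My plan is to handle each of the three preservation properties in turn for a composition $(GF,\tau\sigma) = (G,\tau)\circ(F,\sigma)$ where both factors satisfy the property. In each case the strategy is the same: apply the preservation isomorphism of $(F,\sigma)$, then transport the result along the extended functor $\indexed{\tau}_{F\Gamma}$ (which sends type isomorphisms to type isomorphisms), then apply the preservation isomorphism of $(G,\tau)$ to the resulting expression. The coherence between these two layers is ensured by the fact that $\indexed{\sigma}$ and $\indexed{\tau}$ compose to $\indexed{\tau\sigma}$, as was noted at the end of the composition proof for pseudo cwf-morphisms.

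For identity types the argument is essentially a two-step chain of isomorphisms: starting from $(\tau\sigma)_\Gamma(\Id_A(a,a')) = \tau_{F\Gamma}(\sigma_\Gamma(\Id_A(a,a')))$, apply $\indexed{\tau}_{F\Gamma}$ to the $\Id$-preservation isomorphism for $(F,\sigma)$, then apply the $\Id$-preservation isomorphism for $(G,\tau)$; using $(\tau\sigma)^A_\Gamma(a) = \tau^{\sigma_\Gamma(A)}_{F\Gamma}(\sigma^A_\Gamma(a))$ the terminal expression is precisely $\Id_{(\tau\sigma)_\Gamma(A)}((\tau\sigma)^A_\Gamma(a),(\tau\sigma)^A_\Gamma(a'))$. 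The $\Pi$-case is the same template, but one extra bookkeeping step is needed: one must compare the inner substitution $\sigma_{\Gamma\cext A}(B)[{\rho^F_{\Gamma,A}}^{-1}]$ — obtained after the first step — with $\tau_{F(\Gamma\cext A)}(\sigma_{\Gamma\cext A}(B))$ substituted along the composite iso $(\rho^{GF}_{\Gamma,A})^{-1} = (G\rho^F_{\Gamma,A})^{-1}\,(\rho^G_{F\Gamma,\sigma_\Gamma(A)})^{-1}$. This reconciliation is done by applying the substitution-preservation isomorphism $\theta^G_{\sigma_{\Gamma\cext A}(B),\,\rho^F_{\Gamma,A}^{-1}}$ of $(G,\tau)$, which converts $\tau(\sigma(B)[{\rho^F}^{-1}])$ into $\tau(\sigma(B))[G{\rho^F}^{-1}]$.

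The democracy case is the technical bottleneck and the step I expect to give the most trouble. Constructing the bare isomorphism $(\tau\sigma)_\nilc(\bar\Gamma) \cong \bar{GF\Gamma}[\subst{}]$ is routine — one chains $\indexed{\tau}_{F\nilc}(d^F_\Gamma)$, the substitution-preservation iso $\theta^G_{\bar{F\Gamma},\subst{}}$, and $d^G_\Gamma[G\subst{}]$, absorbing the composite terminal projection via $\subst{}\circ G\subst{} = \subst{}$. The real work is verifying the commuting hexagon in the democracy definition for $(GF,\tau\sigma)$: this must be deduced from the corresponding hexagons for $(F,\sigma)$ and $(G,\tau)$, together with naturality of $\rho$ and the coherence laws for $\theta^G$ (needed to reorganize the substitution occurring along $\subst{}$ after it is ``pulled through'' $\tau$). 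I would carry this out by pasting the two hexagons along the edge $F(\gamma_{F\Gamma})\cdot F\gamma_\Gamma$ using the identity $\gamma_{GF\Gamma} = \gamma_{GF\Gamma}$ determined uniquely by context comprehension, then rewriting the intermediate edges using (i) the composition formula $\rho^{GF} = \rho^G \cdot G\rho^F$ established in the composition proof and (ii) the explicit form of the composite $d_\Gamma$ above. The resulting diagram chase is long but routine, with every nontrivial commutativity reducing to one of the laws already in force for pseudo cwf-morphisms.
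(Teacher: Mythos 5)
Your proposal is correct and follows essentially the same route as the paper: the paper dismisses the identity-type and $\Pi$-type cases as ``just apply the hypothesis on both pseudo cwf-morphisms'' and, for democracy, writes down exactly the composite isomorphism you describe (its $d^{GF}_\Gamma = d^G_{F\Gamma}\circ\rho^G_{\nilc,\bar{F\Gamma}}\circ G(\subst{\subst{},\q})\circ G(d^F_\Gamma)\circ(\rho^G_{F\nilc,\sigma_\nilc\bar\Gamma})^{-1}$ is your chain rewritten through $\indexed{\tau}$), leaving the coherence hexagon to a diagram chase. If anything you are more careful than the paper, since the reconciliation of $\sigma_{\Gamma\cext A}(B)[(\rho^F_{\Gamma,A})^{-1}]$ with $\tau_{F(\Gamma\cext A)}(\sigma_{\Gamma\cext A}(B))[(\rho^{GF}_{\Gamma,A})^{-1}]$ via $\theta^G$ in the $\Pi$-case is a genuine (if routine) step that the paper's ``trivial'' elides.
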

\begin{proof}
This is trivial for identity types and $\Pi$-types (just apply the hypothesis on both pseudo cwf-morphism). For democracy, we must check that
the isomorphism $d_\Gamma^{GF}: GF\nilc \cext (\tau\sigma)_{\nilc}(\bar{\Gamma}) \to GF\nilc \cext \bar{GF\Gamma}[\subst{}]$ defined by:
\[
d_\Gamma^{GF} = 
\xymatrix@R=0pt@C=40pt{
GF\nilc\cext \tau_{F\nilc}(\sigma_{\nilc}(\bar{\Gamma}))
        \ar[r]^{(\rho^G_{F\nilc, \sigma_{\nilc}\bar{\Gamma}})^{-1}}&
F(F\nilc\cext \sigma_{\nilc}\bar{\Gamma})
        \ar[r]^{G(d^F_\Gamma)}&
G(F\nilc\cext \bar{F\Gamma}[\subst{}])\\
~        \ar[r]^{G(\subst{\subst{}, \q})}&
G(\nilc\cext \bar{F\Gamma})
	\ar[r]^{\rho^G_{\nilc, \bar{F\Gamma}}}&
G\nilc\cext \tau_{\nilc}\bar{F\Gamma}\\
&~      \ar[r]^{d^G_{F\Gamma}}&
G\nilc\cext \bar{GF\Gamma}[\subst{}]
}
\]
The coherence law can then be checked by a simple diagram chasing.
\end{proof}

\subsection{Properties and composition of pseudo cwf-transformations}

Let us just remark that if $(F, \sigma)$ and $(G, \tau)$ are pseudo cwf-morphisms from $(\C, T)$ to $(\D, T')$, 
pairs $(\phi, m)$ where $\phi: F \natto G$ is a natural transformation and $m: (\indexed{T'}\star\phi)\circ \indexed{\sigma} \modto \indexed{\tau}$
is a modification (where $\indexed{T'}\star\phi$ denotes the \emph{vertical composition} of the natural transformations $\id_{\indexed{T'}}$ and $\phi$)
exactly correspond to pseudo cwf-transformations from $(F, \sigma)$ to $(G, \tau)$ (as can be checked by unfolding the definition of a modification).

It is folklore that there is a $2$-category $\Ind$ of indexed categories over arbitrary bases, which objects are pairs $(\C, \indexed{T})$
(where $\C$ is a category and $\indexed{T}: \C^{op}\to \Cat$ is a pseudofunctor), $1$-cells are pairs $(F, \indexed{\sigma}) : (\C, \indexed{T}) \to (\D, \indexed{T'})$
(where $F: \C \to \D$ is a functor and $\indexed{\sigma} : \indexed{T} \to \indexed{T'} F$ is a pseudonatural transformation) and $2$-cells
are pairs $(\phi, m): (F, \indexed{\sigma}) \to (G, \indexed{\tau}):(\C, \indexed{T}) \to (\D, \indexed{T'})$ (where $\phi : F \natto G$ is a natural transformation
and $m: (\indexed{T'}\star \phi)\circ \indexed{\sigma} \modto \indexed{\tau}$ is a modification).

Here we rely on $\Ind$ to define vertical and horizontal composition of pseudo cwf-transformations, so we get various $2$-categories of
cwfs supporting structure, structure-preserving pseudo cwf-morphisms and pseudo cwf-transformations between them, which
can all be seen as sub-$2$-categories of $\Ind$. In particular, we will be interested in the $2$-category $\CWFFL$ of cwfs supporting
democracy, $\Sigma$-types and identity types, pseudo cwf-morphisms preserving democracy and identity types and pseudo cwf-transformations.
We will also be interested in the $2$-category $\CWFLCC$ where $\Pi$-types are additionally supported and preserved.

\section{Proofs of Section 4}

For this section, the main thing to prove is that the inverse image is preserved (up to isomorphism) by structure-preserving
pseudo cwf-morphisms:
This relies mostly on long and intricate calculations involving all the components of pseudo cwf-morphisms.
Let us first prove a few preliminary lemmas.

\begin{lemma}[Propagation of isomorphisms]
Isomorphisms propagate through types in several different ways. Suppose that you have $A, A'\in Type(\Gamma)$, $B, B'\in Type(\Gamma\cext A)$, then
\begin{itemize}
\item[(1)] If $B\iso B'$, then $\Sigma(A, B) \iso \Sigma(A, B')$
\item[(2)] If $A\iso_\theta A'$, then $\Sigma(A, B) \iso \Sigma(A', B[\theta^{-1}])$
\item[(3)] If $A\iso_\theta A'$ and $a, a' \in \Gamma\vdash A$, then $\Id_A(a, a') \iso \Id_{A'}(\applyopen{\theta}{(a)}, \applyopen{\theta}{(a')})$
\end{itemize}
\label{prop_isos_appendix}
\end{lemma}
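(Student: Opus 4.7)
The plan is to build each iso in $\indexed{T}(\Gamma)$ as an iso of context comprehensions in $\C$ that commutes with $\p$. For both (1) and (2) the key observation is the canonical iso $\Gamma \cext \Sigma(A,B) \iso \Gamma \cext A \cext B$ (valid in any cwf supporting $\Sigma$-types, and an iso over $\Gamma$), given by $\subst{\subst{\p,\pi_1(\q)},\pi_2(\q)}$ with inverse $\subst{\p\p,\pair(\q[\p],\q)}$. This reduces each claim about $\Sigma$ to a claim about the iterated comprehension $\Gamma \cext A \cext B$.

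For (1), an iso $\phi : B \iso B'$ in $\indexed{T}(\Gamma \cext A)$ is by definition an iso $\phi : \Gamma \cext A \cext B \to \Gamma \cext A \cext B'$ with $\p \phi = \p$ over $\Gamma \cext A$, hence a fortiori over $\Gamma$. Composing $\phi$ with the canonical $\Sigma$-isos on either side delivers the required iso $\Sigma(A,B) \iso \Sigma(A,B')$ in $\indexed{T}(\Gamma)$.

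For (2), the substitution lifting $\subst{\theta\p,\q} : \Gamma \cext A \cext B \to \Gamma \cext A' \cext B[\theta^{-1}]$ typechecks since $B[\theta^{-1}][\theta\p] = B[\p]$, and it is an iso over $\Gamma$ with inverse $\subst{\theta^{-1}\p,\q}$ (the two composites reduce to $\subst{\p,\q} = \id$ by the cwf equation $\subst{\gamma,a}\,\delta = \subst{\gamma\delta,a[\delta]}$). Composing this again between the canonical $\Sigma$-isos gives $\Sigma(A,B) \iso \Sigma(A', B[\theta^{-1}])$.

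For (3), I rely on extensionality of $\Id$. In context $\Gamma \cext \Id_A(a,a')$ the variable $\q$ witnesses $\Id_{A[\p]}(a[\p], a'[\p])$, so Elim forces $a[\p] = a'[\p]$. Expanding $\applyopen{\theta}{(-)}$ as $\q[\theta\subst{\id,-}]$ and post-composing with $\p$ then yields $\applyopen{\theta}{(a)}[\p] = \applyopen{\theta}{(a')}[\p]$, so by stability of $\Id$ under substitution the term $\refl_{A'[\p],\applyopen{\theta}{(a)}[\p]}$ inhabits $\Id_{A'}(\applyopen{\theta}{(a)},\applyopen{\theta}{(a')})[\p]$. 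I define the forward iso as $\subst{\p,\refl_{A'[\p],\applyopen{\theta}{(a)}[\p]}}$ and the reverse analogously with $\theta^{-1}$; mutual inversion is automatic because any term of an $\Id$-type is $\refl$ by Elim, so $\subst{\p,\q} = \id$ is the unique projection-commuting endomorphism on either comprehension. The main obstacle throughout is bookkeeping: each substitution instance ($[\theta^{-1}]$, $[\p]$, and their compositions) must be reconciled via stability of the type formers so that coercions, $\refl$-witnesses, and the ambient $\pair$ all land in literally matching types.
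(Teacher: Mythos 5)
Your proof is correct and takes essentially the same approach as the paper: parts (1) and (2) reduce to the canonical iso $\Gamma\cext\Sigma(A,B)\iso\Gamma\cext A\cext B$ (your conjugated substitution lifting $\subst{\theta\p,\q}$ composes out to exactly the paper's explicit map $\subst{\p,\pair(\q[\theta\subst{\p,\pi_1(\q)}],\pi_2(\q))}$), and part (3) is the paper's one-line argument that by extensionality $\subst{\p,\refl}$ typechecks in both directions and the composites are forced to be identities, which you merely spell out in more detail.
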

\begin{proof}
$(1)$ is obvious, since $\Gamma\cext \Sigma(A, B)$ is isomorphic to $\Gamma\cext A\cext B$. For $(2)$, we give the following two isomorphisms:
\begin{eqnarray*}
\subst{p, \pair(\q[\theta\subst{\p, \pi_1(\q)}], \pi_2(\q))} &:& \Gamma\cext \Sigma(A, B) \to \Gamma \cext \Sigma(A', B[\theta^{-1}])\\
\subst{p, \pair(\q[\theta^{-1}\subst{\p, \pi_1(\q)}], \pi_2(\q))} &:& \Gamma \cext \Sigma(A', B[\theta^{-1}]) \to \Gamma\cext \Sigma(A, B)
\end{eqnarray*}
A simple calculation shows that they typecheck and that they are inverse of one another. It is obvious that they are isomorphisms of types.
$(3)$ is also obvious since by extensionality, $\subst{\p, \refl}$ typechecks in both directions and is its own inverse.
\end{proof} 

\begin{lemma}[Preservation of inverse image]
Let $(\C, T)$ and $(\D, T')$ be cwfs supporting democracy, $\Sigma$-types and identity types and let $(F, \sigma)$ be a pseudo cwf-morphism
preserving them. Moreover, suppose that $\delta : \Delta \to \Gamma$ is a morphism in $\C$, then
there is an isomorphism in $\D$:
\begin{eqnarray*}
F(\Gamma\cext \Inverse(\delta)) &\iso& F\Gamma\cext \Inverse(F\delta)
\end{eqnarray*}

\label{pres_inverse_appendix}
\end{lemma}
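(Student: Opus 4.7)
The plan is to reduce the lemma to Lemma \ref{inverse_prop_appendix} applied in both cwfs, together with functoriality of $F$. Recall that for $\delta : \Delta \to \Gamma$ in a democratic cwf with $\Sigma$-types and extensional identity types, Lemma \ref{inverse_prop_appendix} provides an isomorphism $\alpha_\delta : \Gamma\cext \Inverse(\delta) \to \Delta$ in $\C$ with $\delta\circ \alpha_\delta = \p_{\Inverse(\delta)}$, so that $\p_{\Inverse(\delta)}$ and $\delta$ become canonically isomorphic in the slice $\C/\Gamma$.

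Concretely, first I would invoke Lemma \ref{inverse_prop_appendix} in $(\C, T)$ to obtain such an $\alpha_\delta$. Since $F$ is a functor, it preserves isomorphisms and commuting triangles, so $F\alpha_\delta : F(\Gamma\cext \Inverse(\delta)) \to F\Delta$ is an isomorphism in $\D$ with $F\delta\circ F\alpha_\delta = F\p_{\Inverse(\delta)}$; in other words, $F\p_{\Inverse(\delta)}\iso F\delta$ in $\D/F\Gamma$. Second, I would apply Lemma \ref{inverse_prop_appendix} in the target cwf $(\D, T')$---which also supports democracy, $\Sigma$-types, and identity types---to the morphism $F\delta$, obtaining $\alpha_{F\delta} : F\Gamma\cext \Inverse(F\delta) \to F\Delta$ with $F\delta\circ \alpha_{F\delta} = \p_{\Inverse(F\delta)}$. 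The composite $\alpha_{F\delta}^{-1}\circ F\alpha_\delta : F(\Gamma\cext \Inverse(\delta)) \to F\Gamma\cext \Inverse(F\delta)$ is the desired isomorphism, and a direct computation gives $\p_{\Inverse(F\delta)}\circ (\alpha_{F\delta}^{-1}\circ F\alpha_\delta) = F\p_{\Inverse(\delta)}$, so the isomorphism additionally identifies the two display maps over $F\Gamma$---exactly the compatibility that the pullback-preservation argument in the proof of Proposition \ref{forgetful-functors} needs.

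There is no real obstacle in this abstract formulation: the preservation hypotheses on $(F, \sigma)$ enter only to ensure that both cwfs admit the $\Inverse$ construction in the first place. The ``long and intricate calculations'' alluded to in the appendix preamble correspond to a more concrete alternative: compute $\sigma_\Gamma(\Inverse(\delta))$ directly by pushing $\sigma$ through the $\Sigma$ and $\Id$ formers via Proposition \ref{preservation_structure}, replace $\sigma_\nilc(\bar{\Gamma})$ and $\sigma_\nilc(\bar{\Delta})$ by $\bar{F\Gamma}[\subst{}]$ and $\bar{F\Delta}[\subst{}]$ using preservation of democracy, and then reshuffle via Lemma \ref{prop_isos_appendix}. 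The genuinely delicate step in that route is matching $\sigma_\nilc^{\bar{\Gamma}[\p]}(\bar{\delta}[\ext{\ext{},\q}])$ with $\bar{F\delta}[\ext{\ext{},\q}]$, which is precisely what the commuting democracy coherence square provides, combined with Lemma \ref{comm_sigma_theta_appendix}. Either way one arrives at the claimed isomorphism in $\D$.
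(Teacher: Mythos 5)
Your proof is correct, and it takes a genuinely different --- and much shorter --- route than the paper's. You reduce everything to Lemma~\ref{inverse_prop_appendix}: $\alpha_\delta$ exhibits $\p_{\Inverse(\delta)}\iso\delta$ in $\C/\Gamma$, functoriality transports this to $F\p_{\Inverse(\delta)}\iso F\delta$ in $\D/F\Gamma$, a second application of the same lemma in $(\D,T')$ gives $\p_{\Inverse(F\delta)}\iso F\delta$, and the composite $\alpha_{F\delta}^{-1}\circ F\alpha_\delta$ is the required isomorphism; it even commutes with the two display maps over $F\Gamma$, which is the property the finite-limit argument in Proposition~\ref{preservation_structure_appendix} actually consumes. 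The paper instead works at the level of types: it unfolds $\Inverse(\delta)$ as a $\Sigma$--$\Id$ type and pushes $\sigma_\Gamma$ through the type formers using preservation of substitution, Lemma~\ref{prop_isos_appendix}, preservation of identity types, and the democracy coherence square, ending with an explicit type isomorphism $\sigma_\Gamma(\Inverse(\delta))\iso\Inverse(F\delta)$ (up to the coercion $C$) assembled from the canonical components $\theta$, $\rho$, $d_\Gamma$ of the pseudo cwf-morphism. What each approach buys: yours is brief and exposes that the statement, read literally, never uses the hypotheses that $(F,\sigma)$ preserves identity types and democracy --- only that both cwfs support the structure (so that $\Inverse$ and Lemma~\ref{inverse_prop_appendix} are available) and that $F$ is a functor; the paper's computation produces a concrete isomorphism expressed in the structural data of $(F,\sigma)$, which is what one would want if coherence or naturality of this isomorphism in $\delta$ had to be tracked later. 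One small correction to your closing remark: it is the \emph{support} hypotheses on the two cwfs, not the \emph{preservation} hypotheses on $(F,\sigma)$, that make the $\Inverse$ construction available; in your route the preservation hypotheses are simply not used at all.
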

\begin{proof}
Exploiting Lemma \ref{prop_isos_appendix} and preservation of substitution on types and terms, a careful (but rather straightforward) calculation allows
to derive the following type isomorphism:
\[
\sigma_\Gamma(\Inverse(\delta)) \iso \Sigma(\bar{F\Delta}[\subst{}], \Id_{\bar{F \Gamma}[\subst{}]}(C(\sigma_{\Gamma\cext \bar{\Delta}[\subst{}]}^{\bar{\Gamma}[\subst{}]}(\bar{\delta}[\subst{\subst{}, \q}])), C(\sigma_{\Gamma\cext \bar{\Delta}[\subst{}]}^{\bar{\Gamma}[\subst{}]}(\q[\gamma_\Gamma \p]))))
\]
where $C(-)$ is an invertible context given by:
\[
C(M) = \applyopen{\indexed{T'}{(\iota \subst{})}{(d_\Gamma)} \theta_{\bar{\Gamma}, \subst{}}^{-1}}{(M)}[\rho_{\Gamma, \bar{\Delta}[\subst{}]} \theta_{\bar{\Delta}, \subst{}} \indexed{T'}{(\iota \subst{})}{(d_\Delta^{-1})}]
\]
Here, $\iota$ denotes the inverse of the terminal morphism $\subst{} : F\nilc\to \nilc$ whose existence is asserted by the definition of a pseudo cwf-morphism.
Hence, the goal of the remaining part of this proof will be to show the following equalities:
\begin{eqnarray}
\sigma_{\Gamma\cext \bar{\Delta}[\subst{}]}^{\bar{\Gamma}[\subst{}]}(\bar{\delta}[\subst{\subst{}, \q}]) &=& C^{-1}(\bar{F\delta}[\subst{\subst{}, \q}])\label{one}\\
\sigma_{\Gamma\cext \bar{\Delta}[\subst{}]}^{\bar{\Gamma}[\subst{}]}(\q[\gamma_\Gamma \p]) &=& C^{-1}(\q[\gamma_{F \Gamma} \p])\label{two}
\end{eqnarray}

Let us focus on \eqref{one}. Using preservation of substitution on terms, coherence of $\theta$ and the basic computation laws in cwfs,
we derive:

\begin{eqnarray*}
\sigma_{\Gamma\cext \bar{\Delta}[\subst{}]}^{\bar{\Gamma}[\subst{}]}(\bar{\delta}[\subst{\subst{}, \q}]) &=& 
        \sigma_{\Gamma\cext \bar{\Delta}[\subst{}]}^{\bar{\Gamma}[\p][\subst{\subst{},\q}]}(\bar{\delta}[\subst{\subst{}, \q}])\\
&=&     \applyopen{\theta_{\bar{\Gamma}[\p], \subst{\subst{},\q}}}{(\sigma_{\nilc\cext \bar{\Delta}}^{\bar{\Gamma}[\p]}(\bar{\delta})[F(\subst{\subst{},\q})])}\\
&=&     \applyopen{\theta_{\bar{\Gamma}, \subst{}}}{(\applyopen{\indexed{T'}{(\subst{\subst{},\q})}{(\theta_{\bar{\Gamma}, \p}^{-1})}}{(\sigma_{\nilc\cext \bar{\Delta}}^{\bar{\Gamma}[\p]}(\bar{\delta})[F(\subst{\subst{},\q})])})}\\
&=&     \applyopen{\theta_{\bar{\Gamma}, \subst{}}}{(\applyopen{\theta_{\bar{\Gamma}, \p}^{-1}}{(\sigma_{\nilc\cext \bar{\Delta}}^{\bar{\Gamma}[\p]}(\bar{\delta}))}[F(\subst{\subst{},\q})])}
\end{eqnarray*}

Let us now focus on $\sigma_{\nilc\cext \bar{\Delta}}^{\bar{\Gamma}[\p]}(\bar{\delta})$, to see how terms created from substitution using democracy
are transformed by the action of the cwf-morphism. Here, we are only going to use the coherence of $\theta$, preservation of $\q$ and the basic
computation laws in cwfs.

\begin{eqnarray*}
\sigma_{\nilc\cext \bar{\Delta}}^{\bar{\Gamma}[\p]}(\bar{\delta}) &=& \sigma_{\nilc\cext \bar{\Delta}}^{\bar{\Gamma}[\p]}(\q[\gamma_\Gamma \delta \gamma_\Delta^{-1}])\\
&=& \sigma_{\nilc\cext \bar{\Delta}}^{\bar{\Gamma}[\p][\gamma_\Gamma \delta \gamma_\Delta^{-1}]}(\q[\gamma_\Gamma \delta \gamma_\Delta^{-1}])\\
&=& \applyopen{\theta_{\bar{\Gamma}[\p], \gamma_\Gamma\delta \gamma_\Delta^{-1}}}{(\sigma_{\nilc\cext \bar{\Gamma}}^{\bar{\Gamma}[\p]}(\q)[F(\gamma_\Gamma\delta \gamma_\Delta^{-1})])}\\
&=& \applyopen{\theta_{\bar{\Gamma}[\p], \gamma_\Gamma\delta \gamma_\Delta^{-1}}}{(\applyopen{\theta_{\bar{\Gamma}, \p}}{(\q[\rho_{\nilc, \bar{\Gamma}}])}[F(\gamma_\Gamma\delta \gamma_\Delta^{-1})])}\\
&=& \applyopen{\theta_{\bar{\Gamma}, \p}}{(\applyopen{\indexed{T'}{(F(\gamma_\Gamma \delta \gamma_\Delta^{-1}))}{(\theta_{\bar{\Gamma}, \p}^{-1})}}{(\applyopen{\theta_{\bar{\Gamma}, \p}}{(\q[\rho_{\nilc, \bar{\Gamma}}])}[F(\gamma_\Gamma\delta \gamma_\Delta^{-1})])})}\\
&=& \applyopen{\theta_{\bar{\Gamma}, \p}}{(\q[\rho_{\nilc, \bar{\Gamma}} F(\gamma_\Gamma \delta \gamma_\Delta^{-1})])}
\end{eqnarray*}

Using preservation of democracy and the terminal object, we can now conclude:
\begin{eqnarray*}
\sigma_{\Gamma\cext \bar{\Delta}[\subst{}]}^{\bar{\Gamma}[\subst{}]}(\bar{\delta}[\subst{\subst{}, \q}]) &=&
    \applyopen{\theta_{\bar{\Gamma}, \subst{}}}{(\q[\rho_{\nilc, \bar{\Gamma}} F(\gamma_\Gamma \delta \gamma_\Delta^{-1}\subst{\subst{},\q})])}\\
&=& \applyopen{\theta_{\bar{\Gamma}, \subst{}}}{(\q[d_r^{-1}\subst{\iota \p, \q} \gamma_{F \Gamma} F\delta \gamma_{F\Delta}^{-1}\subst{\subst{},\q}\indexed{T'}{(\iota \subst{})}{(d_\Delta)}\theta_{\bar{\Delta}, \subst{}}^{-1} \rho_{\Gamma, \bar{\Delta}[\subst{}]}])}\\
&=& \applyopen{\theta_{\bar{\Gamma}, \subst{}} \indexed{T'}{(\iota\subst{})}{(d_\Gamma^{-1})}}{(\bar{F\delta}[\subst{\subst{},\q}])}[\indexed{T'}{(\iota \subst{})}{(d_\Delta)}\theta_{\bar{\Delta}, \subst{}}^{-1} \rho_{\Gamma, \bar{\Delta}[\subst{}]}]\\
&=& C^{-1}(\bar{F\delta}[\subst{\subst{},\q}])
\end{eqnarray*}

We get the required expression. The case of Equation \eqref{two} is similar but less intricate, so we skip the details. 
\end{proof}

\begin{lemma}[Propagation of isomorphisms under $\Pi$]
Suppose that we have $A, A'\in Type(\Gamma)$ and $B, B'\in Type(\Gamma\cext A)$, then
\begin{enumerate}
\item If $B\iso_\theta B'$, then $\Pi(A, B) \iso \Pi(A, B')$
\item If $A\iso_\theta A'$, then $\Pi(A, B) \iso \Pi(A', B[\theta^{-1}])$
\end{enumerate}
\label{prop_pi_appendix}
\end{lemma}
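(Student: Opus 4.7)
The plan is to mimic the proof of Lemma \ref{prop_isos_appendix} for $\Sigma$-types, but replacing pairing/projection with $\lambda/\ap$ and using the $\beta\eta$ laws of $\Pi$-types in place of the $\Sigma$-equations. In each case I would write down an explicit morphism of $\indexed{T}(\Gamma)$ that preserves the first projection, give an explicit inverse obtained by the same construction with $\theta^{-1}$ in place of $\theta$, and verify the two round-trips by equational manipulation of cwf combinators.

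For (1), given $B \iso_\theta B'$ in $\indexed{T}(\Gamma \cext A)$, let $\theta' = \indexed{T}(\subst{\p\p,\q})(\theta)$ denote the weakening of $\theta$ along $\subst{\p\p,\q} : \Gamma\cext \Pi(A,B)\cext A[\p] \to \Gamma\cext A$. I define the forward map as
\[
\subst{\p,\ \lambda(\{\theta'\}(\ap(\q[\p],\q)))} : \Gamma \cext \Pi(A,B) \to \Gamma \cext \Pi(A, B'),
\]
and take the backward map to be the symmetric expression with $\theta^{-1}$ in place of $\theta$. The round-trip computation unfolds the outer $\lambda\ap$ via the $\beta$-rule for $\Pi$, collapses $\{\theta^{-1}\}\circ\{\theta\}$ to the identity coercion using Lemma \ref{composition_coercions_appendix}, and then applies the $\eta$-rule $c = \lambda(\ap(c[\p],\q))$ to recover the identity on $\Gamma\cext \Pi(A,B)$.

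For (2), given $A \iso_\theta A'$ in $\indexed{T}(\Gamma)$, the coercion moves to the argument side. In context $\Gamma\cext \Pi(A,B)\cext A'[\p]$ I coerce $\q : A'[\p]$ to a term of type $A[\p]$ via an appropriate weakening of $\theta^{-1}$, apply $\ap(\q[\p], -)$, and $\lambda$-abstract. This yields a section of $\Pi(A', B[\theta^{-1}])[\p]$ provided the type $B[\subst{\p\p,\{\theta^{-1}\}(\q)}]$ produced by $\ap$ agrees on the nose with $B[\theta^{-1}][\subst{\p\p,\q}]$, which follows from the identity $\theta^{-1}\subst{\p\p,\q} = \subst{\p\p,\{\theta^{-1}\}(\q)}$, itself a consequence of $\p\theta^{-1} = \p$ and the cwf combinator laws. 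The backward map uses $\theta$ symmetrically, and inverseness again reduces to $\beta$, $\eta$, and composition of coercions.

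The main obstacle is the substitution bookkeeping: ensuring that the various weakenings of $\theta$ and $\theta^{-1}$ along projections and extended substitutions fit together so that the intermediate types literally match and the round-trip equations become pure equalities of cwf combinators, modulo applications of Lemma \ref{composition_coercions_appendix}. Once this setup is handled cleanly, the verification is mechanical and uses no property of extensional identity types beyond what is already embedded in the coercion calculus of $\indexed{T}$.
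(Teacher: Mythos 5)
Your construction is essentially identical to the paper's own proof: for (1) you give literally the same morphism $\subst{\p,\lambda(\{\indexed{T}(\subst{\p\p,\q})(\theta)\}(\ap(\q[\p],\q)))}$ with inverse obtained by swapping $\theta$ for $\theta^{-1}$, and for (2) the same $\lambda$/$\ap$ expression coercing the argument along a weakening of $\theta^{-1}$. The paper leaves the $\beta$/$\eta$ and coercion-composition verifications implicit just as you outline them, so the approach and level of detail coincide.
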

\begin{proof}
(1). Using $\theta$ and the combinators of $\Pi$-types, it is straightforward to build an morphism:
\[
\subst{\p, \lambda(\applyopen{\indexed{T}{(\subst{\p\p,\q})}{(\theta)}}{(\ap(\q[\p], \q))})} : \Gamma\cext \Pi(A, B) \to \Gamma\cext \Pi(A, B')
\]
Its inverse is the corresponding expression with $\theta^{-1}$ in place of $\theta$.
(2). Likewise, the following expression provides the required isomorphism:
\[
\lambda(\ap(\q[\p\p], \applyopen{\indexed{T}{(\p\p)}{(\theta^{-1})}}{(\q)})) : \Gamma\cext \Pi(A, B) \to \Gamma\cext \Pi(A', B[\theta^{-1}])
\]
\end{proof}

\begin{proposition}
Let $(F, \sigma)$ be a pseudo cwf-morphism between $(\C, T)$ and $(\C', T')$ supporting $\Sigma$-types and democracy. Then:
\begin{itemize}
\item If $(\C, T)$ and $(\C', T')$ both support identity types and $(F, \sigma)$ preserves them, then $F$ preserves finite limits.
\item If $(\C, T)$ and $(\C', T')$ both support $\Pi$-types and $(F, \sigma)$ preserves them, then $F$ preserves local exponentiation.
\end{itemize}
\label{preservation_structure_appendix}
\end{proposition}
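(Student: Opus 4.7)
The plan is to reduce preservation of finite limits (respectively local exponentiation) by $F$ to the preservation of the inverse image construction, which is precisely Lemma \ref{pres_inverse_appendix} (applicable here because of the hypotheses on $(F,\sigma)$). The remaining ingredients are all built into the notion of pseudo cwf-morphism: preservation of context comprehension (the $\rho$-isomorphisms), preservation of substitution in types (the $\theta$-isomorphisms), preservation of the terminal object, and --- for the second bullet --- preservation of $\Pi$-types.

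For the first bullet, $F$ preserves the terminal object by definition, so it suffices to check pullbacks. Given $f:A\to C$ and $g:B\to C$ in $\C$, I use Proposition \ref{fibres_vs_slices} to represent $f$ as the display map $\p_{\Inverse(f)}$ with $\Inverse(f)\in\Ty(C)$; substitution in types realises pullback under this representation, so $B\cext \Inverse(f)[g]$ with its canonical projection is, up to isomorphism over $C$, the pullback of $f$ along $g$. I then chain
\[
F(B\cext \Inverse(f)[g])\iso FB\cext \sigma_B(\Inverse(f)[g])\iso FB\cext \sigma_C(\Inverse(f))[Fg]\iso FB\cext \Inverse(Ff)[Fg],
\]
using preservation of context comprehension, preservation of substitution in types, and Lemma \ref{pres_inverse_appendix} respectively; the right-hand side is the same construction of the pullback of $Ff$ along $Fg$ carried out in $\C'$. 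Checking that each step is an isomorphism in the slice over $FB$, not merely in $\C'$, uses the projection-preservation clauses of $\rho$ and $\theta$.

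For the second bullet, given $f:A\to B$ in $\C$ and an object $g:X\to A$ of $\C/A$, I represent them as display maps via Proposition \ref{fibres_vs_slices}: $f\iso \p_{\hat f}$ for $\hat f = \Inverse(f)\in\Ty(B)$ and $g\iso \p_{\hat g}$ for some $\hat g\in\Ty(B\cext \hat f)$. Then the local exponential $\Pi_f g$ in $\C/B$ is, again up to iso over $B$, the display map of $\Pi(\hat f,\hat g)\in\Ty(B)$. Applying $F$, preservation of context comprehension and of $\Pi$-types gives
\[
F(B\cext \Pi(\hat f,\hat g))\iso FB\cext \Pi(\sigma_B(\hat f),\sigma_{B\cext \hat f}(\hat g)[\rho_{B,\hat f}^{-1}]),
\]
and after identifying the inside with $\Inverse(Ff)$ and the corresponding inverse image expression for $Fg$ via Lemma \ref{pres_inverse_appendix}, the right-hand side is precisely the local exponential of $Ff$ and $Fg$ in $\C'/FB$. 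The main obstacle throughout is not any single calculation but coherent bookkeeping: one has to verify at each step that the isomorphism produced lies over the correct base and is compatible with the $\rho$- and $\theta$-data of $(F,\sigma)$, so that the composite exhibits the relevant universal property in $\C'$ rather than merely an abstract isomorphism of objects. This is ultimately a matter of unfolding the projection-compatibility clauses in the definition of a pseudo cwf-morphism together with the explicit shape of the isomorphism produced by Lemma \ref{pres_inverse_appendix}.
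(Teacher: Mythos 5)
Your proposal is correct and follows essentially the same route as the paper: both reduce the statement to the fact that pullbacks and local exponentials are definable from inverse-image types (plus $\Sigma$ and $\Pi$), so that preservation by $F$ follows from Lemma \ref{pres_inverse_appendix} together with the $\rho$/$\theta$ data of a pseudo cwf-morphism; the isomorphism-propagation Lemmas \ref{prop_isos_appendix} and \ref{prop_pi_appendix}, which the paper cites explicitly, are exactly the "coherent bookkeeping" you defer to at the end.
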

\begin{proof}
Since finite limits and local exponentiation can be defined using $\sigma$-types, $\Pi$-types and the inverse image type, their preservation
by $F$ directly boils down to the Lemmas \ref{prop_isos_appendix}, \ref{pres_inverse_appendix} and \ref{prop_pi_appendix}.
\end{proof}

\section{Proofs of Section 5}

\subsection{Action on $0$-cells}

This section is the exact analogue for cwfs of Hofmann's work \cite{hofmann:csl} with cwas. To make this paper self-contained we will give the full details of the construction.
However, we will skip some of the proofs whenever they are not significantly different from the case of cwas, for which we refer to \cite{hofmann:csl}.

\subsubsection{Base cwf structure.}

We will start by proving that given any category $\C$ with a terminal object, we can equip $\C$ with cwf structure. This means that we have to define a functor $T_\C : \C \rightarrow \Fam$
(in other terms types, terms, and substitution on both of them), and a context comprehension operation.

\paragraph{Types.} A \emph{type} over $\Gamma$ is a \emph{functorial family}, \emph{i.e.} a functor $\hf{A}:\C/\Gamma\to \arrow{\C}$ such that:
\begin{itemize}
\item[(i)] $cod\circ \hf{A} = dom$
\item[(ii)] If
$\xymatrix@R=5pt@C=5pt{
\Omega\ar[dr]_{\delta\alpha}\ar[rr]^{\alpha}&&\Delta\ar[dl]^{\delta}\\
&\Gamma
}$
is a morphism in $\mathbb{C}/\Gamma$, $\hf{A}(\alpha)$ is a pullback square, with the naming convention below:
\[
\xymatrix{
dom(\hf{A}(\delta\circ\alpha))\ar[r]^{\hf{A}(\delta,\alpha)}
                         \ar[d]_{\hf{A}(\delta \alpha)} &
dom(\hf{A}(\delta))\ar[d]^{\hf{A}(\delta)}\\
\Omega\ar[r]_\alpha & \Delta
}
\]
\end{itemize}
Let $Type(\Gamma)$ denote the set of functorial families over $\Gamma$.

\paragraph{Remark.}
The functoriality of $\hf{A}$ means that the assignment of $\hf{A}(\delta,\alpha)$ satisfies the following equations:
\begin{itemize}
\item $\hf{A}(\delta,\id_\Delta) = \id_{dom(\hf{A}(\delta))}$
\item $\hf{A}(\delta,\alpha \beta) = \hf{A}(\delta,\alpha)\hf{A}(\delta\alpha,\beta)$
\end{itemize}

\paragraph{Terms.} Let $\Gamma\in \mathbb{C}$, and $\hf{A}\in Type(\Gamma)$. We define
$\Gamma\cext \hf{A} = dom(\hf{A}(\id_\Gamma))$. (This will later give us context comprehension.)
Then, a term $a:\Gamma\vdash \hf{A}$ is a morphism $a:\Gamma\to \Gamma\cext \hf{A}$ such that
$\hf{A}(\id_\Gamma) a = \id_\Gamma$.

\paragraph{Substitution in types.}
Let $\gamma:\Delta\to \Gamma$  in $\mathbb{C}$ and $\hf{A}\in Type(\Gamma)$. We define $\hf{A}[\gamma] \in Type(\Delta)$ as follows.
\begin{eqnarray*}
\hf{A}[\gamma](\delta) &=& \hf{A}(\gamma\delta)\\
\hf{A}[\gamma](\delta,\alpha) &=& \hf{A}(\gamma\delta,\alpha)
\end{eqnarray*}
where $\delta:\Omega\to \Delta$ and $\alpha:\Xi\to \Omega$.
We check that $\hf{A}[\gamma]$ satisfies the two conditions for types:
\begin{itemize}
\item[(i)] $cod\circ \hf{A}[\gamma] = dom$.
\begin{eqnarray*}
cod\circ \hf{A}[\gamma] (\delta) &=& cod\circ \hf{A}(\gamma \delta)\\
                            &=& dom(\gamma\delta)\\
                            &=& dom(\delta)
\end{eqnarray*}
\item[(ii)] The image of the morphism
$\xymatrix@R=5pt@C=5pt{
\Xi\ar[dr]_{\delta \alpha}\ar[rr]^{\alpha}&&\Omega\ar[dl]^{\delta}\\
&\Delta
}$ in $\mathbb{C}/\Delta$  is
\[
\hf{A}[\gamma](\delta)~~~=
\raisebox{25pt}{
\xymatrix@R=30pt@C=30pt{
dom(\hf{A}(\gamma\delta\alpha))\ar[r]^{\hf{A}(\gamma\delta,\alpha)}
                         \ar[d]_{\hf{A}(\gamma \delta \alpha)} &
dom(\hf{A}(\gamma \delta))\ar[d]^{\hf{A}(\gamma\delta)}\\
\Xi\ar[r]_\alpha & \Omega
}}
\]
This is a pullback square by property (ii) of $\hf{A}$.
\end{itemize}

\paragraph{Substitution in terms.}
Let $\delta:\Delta\to \Gamma$, and $a:\Gamma\vdash \hf{A}$, \emph{i.e.} $a:\Gamma\to \Gamma\cext \hf{A}$ such that
$\hf{A}(\id_\Gamma)\circ a = \id_\Gamma$. Then $a[\delta]$ is defined as the unique mediating arrow in the following diagram:
\[
\xymatrix@R=50pt@C=50pt{
\Delta  \ar@/^/[drr]^{a\circ \delta}
        \ar@/_/[ddr]_{\id_\Delta}
        \ar@{.>}[dr]|{a[\delta]}\\
&\Delta\circ \hf{A}[\delta]
        \ar[r]^{\hf{A}(\id_\Gamma,\delta)}
        \ar[d]^{\hf{A}[\delta](\id_\Delta)} 
        &\Gamma\cext \hf{A}
                \ar[d]^{\hf{A}(\id_\Gamma)}\\
&\Delta \ar[r]_\delta&\Gamma
}
\]
It is a term of type $\hf{A}[\delta]$ by commutativity of the left triangle.

\paragraph{Functoriality.}
Since substitution in types is defined by composition, the cwf-laws for it follow immediately.
It is also immediate that $a[\id_\Gamma] = a$ since the defining pullback must be the identity in $\arrow{\C}$. It
remains to show that if we have $\delta_1:\Delta'\to \Delta$ and $\delta_2:\Delta\to \Gamma$, then
$a[\delta_2\circ \delta_1] = a[\delta_2][\delta_1]$. Consider the following diagram for pullback composition:
\[
\xymatrix@R=50pt@C=50pt{
\Delta' \ar@/_2pc/[ddr]_{\id_{\Delta'}}
        \ar@/^2pc/[drrr]^{a\circ\delta_2\circ\delta_1}
        \ar[dr]|{a[\delta_2][\delta_1]}
\\
&\Delta'\cext \hf{A}[\delta_2\circ\delta_1]
        \ar[d]|{\hf{A}[\delta_2\circ\delta_1](id_{\Delta'})}
        \ar[r]^{\hf{A}[\delta_2](\id_{\Delta},\delta_1)} 
&\Delta\cext \hf{A}[\delta_2]   
        \ar[d]|{\hf{A}[\delta_2](\id_\Delta)}
        \ar[r]^{\hf{A}(\id_\Gamma,\delta_2)}
&\Gamma\cext \hf{A}
        \ar[d]^{\hf{A}(\id_\Gamma)}
\\
&\Delta'        \ar[r]_{\delta_1}
&\Delta         \ar[r]_{\delta_2}
&\Gamma
}
\]
The external pullback square is obtained by substitution of $\delta_2\circ\delta_1$. By definition of
substitution in terms, $a[\delta_2\circ\delta_1]$ is the unique mediating arrow. But the external square is equal (not only up to isomorphism) to the composition of the smaller squares, because of the functoriality conditions for
$\hf{A}$, more precisely the fact that $\hf{A}(s,\alpha\circ \beta) = \hf{A}(s,\alpha)\circ \hf{A}(s\circ \alpha,\beta)$.
This implies that $a[\delta_2][\delta_1]$ also makes the two triangles commute. Hence $a[\delta_2][\delta_1] = a[\delta_2\circ\delta_2]$
by uniqueness of the mediating arrow.

Putting all this together, we now have built a functor $T_\C: \C^{op} \to \Fam$. We still have to define context comprehension.

\paragraph{Context comprehension.} Let $\Gamma\in \mathbb{C}$, and $\hf{A}\in Type(\Gamma)$. As mentioned above, we define:
\[
\Gamma\cext \hf{A} = dom(\hf{A}(\id_\Gamma))
\]
The first projection is $\p_{\hf{A}} = \hf{A}(\id_\Gamma) : \Gamma\cext \hf{A} \to \Gamma$. The second projection $\q_{\hf{A}}$ is defined
as the unique mediating arrow of the following pullback diagram:
\[
\xymatrix@R=30pt@C=50pt{
\Gamma\cext \hf{A}
        \ar@/_1.5pc/[ddr]_{\id_{\Gamma\cext \hf{A}}}
        \ar@/^1.5pc/[drr]^{\id_{\Gamma\cext \hf{A}}} 
        \ar@{.>}[dr]|{\q_{\hf{A}}} &&\\
&\Gamma\cext \hf{A}\cext \hf{A}[\p_{\hf{A}}]
        \ar[r]^{\hf{A}(\id_\Gamma,\p_{\hf{A}})}
        \ar[d]^{\hf{A}[\p_{\hf{A}}](\id_{\Gamma\cext \hf{A}})}
&\Gamma\cext \hf{A}
        \ar[d]^{\hf{A}(\id_\Gamma)}\\
&\Gamma\cext \hf{A}
        \ar[r]_{\p_{\hf{A}}}
&\Gamma
}
\]
Suppose now we have $\delta:\Delta \to \Gamma$ and $a:\Delta \vdash \hf{A}[\delta]$. By definition of terms we have in fact
$a:\Delta \to \Delta\cext \hf{A}[\delta]$. We define:
\[
\ext{\delta,a} = \hf{A}(id_\Gamma,\delta)\circ a : \Delta \to \Gamma\cext \hf{A}
\]
We must prove that these definitions satisfy the cwf-laws for context comprehension.
\[
\begin{array}{rclr}
\p_{\hf{A}}\circ \ext{\delta,a}  &=& \p_{\hf{A}} \circ \hf{A}(\id_\Gamma,\delta)\circ a&\text{def.of $\ext{\delta,a}$}\\
                                &=& \hf{A}(\id_\Gamma) \circ \hf{A}(\id_\Gamma,\delta)\circ a&\text{def.~of $\p_{\hf{A}}$}\\
                                &=& \delta \circ \hf{A}[\delta](\id_\Delta)\circ a&\text{comm.~of pullback square}\\
                                &=& \delta &a\ \text{is\ a\ term}\\
\end{array}
\]
Proving the equation $\q_{\hf{A}}[\ext{\delta,a}] = a$ is a bit more involved. Let us first prove the
following lemma, stating (intuitively) that the action of $\q_{\hf{A}}$ is to duplicate the last element of the context.
\begin{lemma}Let $\hf{A}\in Type(\Gamma)$, $\delta:\Delta\to \Gamma$, and $a:\Delta\vdash \hf{A}[\delta]$. Then
$\q_{\hf{A}}\circ \ext{\delta,a} = \ext{\ext{\delta,a},a}$.
\end{lemma}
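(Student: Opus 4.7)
The plan is to exploit the universal property of the pullback square that \emph{defines} $\q_{\hf{A}}$. Recall that $\Gamma\cext \hf{A}\cext \hf{A}[\p_{\hf{A}}]$ sits at the top-left of the pullback square
\[
\xymatrix@R=20pt@C=50pt{
\Gamma\cext \hf{A}\cext \hf{A}[\p_{\hf{A}}]
        \ar[r]^{\hf{A}(\id_\Gamma,\p_{\hf{A}})}
        \ar[d]_{\hf{A}[\p_{\hf{A}}](\id_{\Gamma\cext \hf{A}})}&
\Gamma\cext \hf{A}
        \ar[d]^{\hf{A}(\id_\Gamma)}\\
\Gamma\cext \hf{A}
        \ar[r]_{\p_{\hf{A}}}&
\Gamma
}
\]
and $\q_{\hf{A}}$ was defined as the unique mediator with $\hf{A}[\p_{\hf{A}}](\id)\circ \q_{\hf{A}} = \id$ and $\hf{A}(\id_\Gamma,\p_{\hf{A}})\circ \q_{\hf{A}} = \id$. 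So it suffices to check that $\q_{\hf{A}}\circ \ext{\delta,a}$ and $\ext{\ext{\delta,a},a}$ agree after post-composition with each of the two pullback legs; uniqueness of the mediator does the rest.

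\paragraph{Step 1: post-composing with the left leg $\hf{A}[\p_{\hf{A}}](\id)$.} On the left-hand side one gets $\id \circ \ext{\delta,a} = \ext{\delta,a}$. On the right-hand side, $\hf{A}[\p_{\hf{A}}](\id)$ is precisely the first projection associated with the type $\hf{A}[\p_{\hf{A}}]$, so by the $\p$-law for context comprehension already established just above (namely $\p\circ \ext{-,-} = $ first argument) we obtain $\ext{\delta,a}$ as well.

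\paragraph{Step 2: post-composing with the top leg $\hf{A}(\id_\Gamma,\p_{\hf{A}})$.} The left-hand side again yields $\ext{\delta,a}$. For the right-hand side, unfold
\[
\ext{\ext{\delta,a},a} \;=\; \hf{A}[\p_{\hf{A}}](\id_{\Gamma\cext \hf{A}},\ext{\delta,a})\circ a \;=\; \hf{A}(\p_{\hf{A}},\ext{\delta,a})\circ a,
\]
where the second equality is the definition of substitution on types. Then by functoriality of $\hf{A}$ in the second slot,
\[
\hf{A}(\id_\Gamma,\p_{\hf{A}})\circ \hf{A}(\p_{\hf{A}},\ext{\delta,a}) \;=\; \hf{A}(\id_\Gamma,\p_{\hf{A}}\circ \ext{\delta,a}) \;=\; \hf{A}(\id_\Gamma,\delta),
\]
using the $\p$-law $\p_{\hf{A}}\circ \ext{\delta,a} = \delta$ to simplify. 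Hence the right-hand side becomes $\hf{A}(\id_\Gamma,\delta)\circ a = \ext{\delta,a}$, matching the left.

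\paragraph{Conclusion and main obstacle.} Since both pullback legs see the same morphism, the universal property forces $\q_{\hf{A}}\circ \ext{\delta,a} = \ext{\ext{\delta,a},a}$. There is no genuine obstacle here beyond bookkeeping: the content of the lemma is exactly that functoriality of $\hf{A}$ together with the already-proved $\p$-law for the outer layer forces $\q$ to behave as ``duplication of the last element''. The only subtle point is recognizing that $\hf{A}[\p_{\hf{A}}](\id,\ext{\delta,a})$ unfolds to $\hf{A}(\p_{\hf{A}},\ext{\delta,a})$ so that functoriality of $\hf{A}$ can be applied.
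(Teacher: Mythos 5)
Your proof is correct and follows essentially the same route as the paper: both arguments test the two candidate morphisms against the two legs of the pullback square defining $\q_{\hf{A}}$, use the already-established $\p$-law for the left leg, and use the unfolding $\hf{A}[\p_{\hf{A}}](\id,\ext{\delta,a}) = \hf{A}(\p_{\hf{A}},\ext{\delta,a})$ together with functoriality of $\hf{A}$ in the second argument for the top leg, concluding by uniqueness of the mediating arrow.
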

\begin{proof}
Consider the following pullback diagram:
\[
\xymatrix@R=50pt@C=50pt{
\Delta  \ar@/^/[drr]^{\ext{\delta,a}}
        \ar@/_/[ddr]_{\ext{\delta,a}}
        \ar@{}[drr]|{\raisebox{-20pt}{(1)}}
        \ar@{}[ddr]|{\raisebox{0pt}{~~~~(2)}}
        \ar@/_/[dr]|{\ext{\ext{\delta,a},a}}
        \ar@/^/[dr]|{\q_{\hf{A}}\circ \ext{\delta,a}}&&\\
&\Gamma\cext \hf{A}\cext \hf{A}[\p_{\hf{A}}]
        \ar[r]^{\hf{A}(\id_\Gamma,\p_{\hf{A}})}
        \ar[d]^{\hf{A}[\p_{\hf{A}}](\id_{\Gamma\cext \hf{A}})}
&\Gamma\cext \hf{A}
        \ar[d]^{\hf{A}(\id_\Gamma)}\\
&\Gamma\cext \hf{A}
        \ar[r]_{\p_{\hf{A}}}
&\Gamma
}
\]
It is clear that $\q_{\hf{A}}\circ \ext{\delta,a}$ makes (1) and (2) commute, by definition of $\q_{\hf{A}}$. It is also easy to
see that $\ext{\ext{\delta,a},a}$ makes (2) commute, because $\hf{A}[\p_{\hf{A}}](\id_{\Gamma\cext \hf{A}}) = \p_{\hf{A}[\p_{\hf{A}}]}$
and by the property of the first projection. We prove now that $\ext{\ext{\delta,a},a}$ also makes (1) commute:
\[
\begin{array}{rclr}
\hf{A}(\id_\Gamma,\p_{\hf{A}})\circ \ext{\ext{\delta,a},a}&=&
                \hf{A}(\id_\Gamma,\p_{\hf{A}})\circ \hf{A}[\p_{\hf{A}}](id_{\Gamma\cext \hf{A}},\ext{\delta,a})\circ a
                                &\text{def. of $\ext{\ext{\delta,a},a}$}\\
&=&\hf{A}(\id_\Gamma,\p_{\hf{A}})\circ \hf{A}(\p_{\hf{A}},\ext{\delta,a})\circ a
                                &\text{def. of $\hf{A}[\p_{\hf{A}}]$}\\
&=&\hf{A}(\id_\Gamma,\p_{\hf{A}}\circ \ext{\delta,a})\circ a
                                &\text{funct. or $\hf{A}(s,\delta)$}\\
&=&\hf{A}(\id_\Gamma,\delta)\circ a
                                &\text{property of $\p_{\hf{A}}$}\\
&=&\ext{\delta,a}
                                &\text{def. of $\ext{\delta,a}$}
\end{array}
\]
This concludes the proof.
\end{proof}
From this lemma we deduce that $\q_{\hf{A}}[\ext{\delta,a}] = a$ in the following way. Consider the
pullback diagram:
\[
\xymatrix@R=50pt@C=70pt{
\Delta  \ar@/^/[drr]^{\q_{\hf{A}}\circ \ext{\delta,a}}
        \ar@/_/[ddr]_{\id_\Delta}
        \ar@/_/[dr]|{\q_{\hf{A}}[\ext{\delta,a}]}
        \ar@/^/[dr]|a&&\\
&\Delta\cext \hf{A}[\delta]
        \ar[r]^{\hf{A}[\p_{\hf{A}}](\id_{\Gamma\cext \hf{A}},\ext{\delta,a})}
        \ar[d]^{\p_{\hf{A}[\delta]}}
&\Gamma\cext \hf{A}\cext \hf{A}[\p_{\hf{A}}]
        \ar[d]^{\hf{A}[\p_{\hf{A}}](\id_{\Gamma\cext \hf{A}})}\\
&\Delta \ar[r]_{\ext{\delta,a}}
&\Gamma\cext \hf{A}
}
\]
This diagram is an instance of the pullback used for the definition of substitution in terms. Hence,
both triangles commute for $\q_{\hf{A}}[\ext{\delta,a}]$. The left triangle commutes for $a$ since a is term of type $\hf{A}[\delta]$.
The right triangle commutes because of the lemma above, since by definition
$\hf{A}[\p_{\hf{A}}](\id_{\Gamma\cext \hf{A}},\ext{\delta,a}) = \ext{\ext{\delta,a},a}$. Thus, by uniqueness of the mediating arrow $\q_{\hf{A}}[\ext{\delta,a}] = a$.
This concludes the cwf construction, hence the proof of the following proposition.

\begin{proposition}
Let $\mathbb{C}$ be a category with terminal object, then we can extend $\C$ to a cwf $(\C, T_\C)$.
\label{cwf_appendix}
\end{proposition}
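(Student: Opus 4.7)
The plan is to construct explicitly, starting only from $\C$ and its terminal object, a functor $T_\C : \C^{op} \to \Fam$ together with a context comprehension operation, and then check the cwf axioms one by one. Following the Bénabou--Hofmann idea already sketched in the main text, I would take a type in context $\Gamma$ to be a \emph{functorial family} $\hf{A} : \C/\Gamma \to \arrow{\C}$ satisfying $\cod \circ \hf{A} = \dom$ and sending every morphism in $\C/\Gamma$ to a pullback square; a term $a : \Gamma \vdash \hf{A}$ to be a section of $\hf{A}(\id_\Gamma)$; and define $\Gamma \cext \hf{A} = \dom(\hf{A}(\id_\Gamma))$. Substitution on types is just precomposition, $\hf{A}[\gamma](\delta) = \hf{A}(\gamma\delta)$, which is strictly functorial because composition in $\C$ is strictly associative. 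Substitution on a term $a$ along $\delta$ is defined as the unique mediating morphism into the pullback square $\hf{A}(\id_\Gamma, \delta)$ pairing $a \circ \delta$ with $\id_\Delta$.

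Next I would verify that $\hf{A}[\gamma]$ is itself a functorial family (immediate from the pullback hypothesis on $\hf{A}$) and check functoriality of term substitution, giving the full $\Fam$-valued functor $T_\C$. The identity law $a[\id] = a$ is trivial since the defining pullback collapses to an identity square. The composition law $a[\gamma \delta] = a[\gamma][\delta]$ is the first nontrivial step: one pastes the two pullback squares used to construct $a[\gamma]$ and $(a[\gamma])[\delta]$ and observes, using the functoriality axiom $\hf{A}(s, \alpha\beta) = \hf{A}(s, \alpha)\,\hf{A}(s\alpha, \beta)$, that the outer rectangle agrees on the nose with the pullback square defining $a[\gamma\delta]$; uniqueness of the mediating morphism concludes.

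The substantive content is the universal property of context comprehension. The first projection is just $\p_{\hf{A}} := \hf{A}(\id_\Gamma)$, and $\q_{\hf{A}}$ must be a term of $\hf{A}[\p_{\hf{A}}]$, i.e., a section of $\hf{A}[\p_{\hf{A}}](\id_{\Gamma \cext \hf{A}})$. I would construct it as the unique mediating morphism into the pullback $\hf{A}(\id_\Gamma, \p_{\hf{A}})$ whose two outer legs are $\id_{\Gamma \cext \hf{A}}$. Given $\delta : \Delta \to \Gamma$ and $a : \Delta \vdash \hf{A}[\delta]$, the natural candidate for the mediating morphism is $\ext{\delta, a} := \hf{A}(\id_\Gamma, \delta) \circ a$. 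Commutativity of the substitution pullback together with the section property of $a$ immediately yields $\p_{\hf{A}} \circ \ext{\delta, a} = \delta$.

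The step I expect to be the main obstacle is the equation $\q_{\hf{A}}[\ext{\delta, a}] = a$, since $\q_{\hf{A}}[\ext{\delta, a}]$ is itself defined abstractly as a mediating morphism into a pullback and there is no direct way to rewrite it. My plan is to prove it via an intermediate ``duplication'' lemma to the effect that $\q_{\hf{A}} \circ \ext{\delta, a} = \ext{\ext{\delta, a}, a}$, which I would establish by showing that both sides are mediating morphisms into the pullback defining $\q_{\hf{A}}$ and share the same universal cone (the computation on the triangle commuting with $\hf{A}(\id_\Gamma, \p_{\hf{A}})$ uses only the functoriality $\hf{A}(s, \alpha) \hf{A}(s\alpha, \beta) = \hf{A}(s, \alpha\beta)$ and the identity $\p_{\hf{A}} \circ \ext{\delta, a} = \delta$ just established). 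With the duplication lemma in hand, $\q_{\hf{A}}[\ext{\delta, a}] = a$ follows by yet another invocation of uniqueness of mediating morphisms, this time into the pullback defining term substitution. Uniqueness of $\ext{\delta, a}$ among morphisms satisfying the two projection equations is then a routine consequence of the universal properties used. Democracy is immediate: to each context $\Gamma$ one associates the functorial family generated by the terminal projection $\Gamma \to \nilc$, and the required isomorphism $\gamma_\Gamma$ can be taken to be $\id_\Gamma$.
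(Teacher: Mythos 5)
Your proposal follows essentially the same route as the paper's proof: the same notion of functorial family, substitution on types by precomposition, term substitution as a mediating morphism into the chosen pullback, and in particular the same key ``duplication'' lemma $\q_{\hf{A}} \circ \ext{\delta,a} = \ext{\ext{\delta,a},a}$ to establish $\q_{\hf{A}}[\ext{\delta,a}] = a$ via uniqueness of mediating arrows. The argument is correct and complete in the same sense as the paper's.
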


\subsubsection{Democracy.}The cwf $(\C, T_\C)$ is democratic: the idea is that each context $\Gamma$ is represented by any functorial family
having its terminal projection $\ext{}: \Gamma \to I$ as display map. We can easily build such a functorial family by
$\bar{\Gamma} = \hat{\ext{}}\in Type(I)$. We have then $I \cext \bar{\Gamma} = dom(\hat{\ext{}}(\id)) = \Gamma$, thus the isomorphism
between them is trivial.

\begin{proposition}
If $\C$ is a category with a terminal object, then the cwf $(\C, T_\C)$ is democratic.
\end{proposition}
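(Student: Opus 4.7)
The plan is to follow the construction sketched in the paragraph preceding the proposition and make it precise. For each context $\Gamma$ in $\C$, I take $\ext{}_\Gamma : \Gamma \to \nilc$ to be the unique morphism to the terminal object, and define the closed type $\bar{\Gamma} \in \Ty_\C(\nilc)$ to be the functorial family $\hat{\ext{}}_\Gamma : \C/\nilc \to \arrow{\C}$ produced by the hat construction applied to $\ext{}_\Gamma$.

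Next I would verify the required isomorphism $\Gamma \iso \nilc \cext \bar{\Gamma}$. Since the hat construction is normalized so that $\hat{f}(\id) = f$, we have $\hat{\ext{}}_\Gamma(\id_\nilc) = \ext{}_\Gamma : \Gamma \to \nilc$. Unfolding the definition of context comprehension in $(\C, T_\C)$, namely $\nilc \cext \bar{\Gamma} = \dom(\bar{\Gamma}(\id_\nilc))$, one finds $\nilc \cext \bar{\Gamma} = \dom(\ext{}_\Gamma) = \Gamma$ on the nose. Hence the required isomorphism $\gamma_\Gamma : \Gamma \to \nilc \cext \bar{\Gamma}$ may be taken to be $\id_\Gamma$.

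There is essentially no obstacle to this proof: the main virtue of defining types as functorial families normalized on the identity is precisely that context comprehension against the closed type $\hat{\ext{}}_\Gamma$ hands back $\Gamma$ strictly, so democracy becomes almost tautological. The only point worth checking along the way is that $\hat{\ext{}}_\Gamma$ is indeed a well-defined functorial family, which requires the existence of chosen pullbacks of $\ext{}_\Gamma$ along arbitrary $\delta : \Delta \to \nilc$; this is inherited from whatever ambient pullback structure is assumed to build $(\C, T_\C)$ as a cwf in the first place, and no further work is needed for democracy itself.
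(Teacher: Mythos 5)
Your proof is correct and essentially identical to the paper's: both take $\bar{\Gamma} = \hat{\ext{}}$ for the terminal projection $\ext{} : \Gamma \to \nilc$, use the normalization $\hat{f}(\id) = f$ to get $\nilc \cext \bar{\Gamma} = \dom(\ext{}) = \Gamma$ on the nose, and take $\gamma_\Gamma = \id_\Gamma$. The caveat you raise at the end is sharper than you let on --- constructing $(\C, T_\C)$ as a cwf needs no pullbacks in $\C$ at all (functorial families carry their own pullback squares), so the products $\Gamma \times \Delta$ needed for $\hat{\ext{}}$ to exist are \emph{not} ``inherited'' from the cwf construction --- but this imprecision is present in the paper's own statement and is harmless in the settings ($\FL$, $\LCC$) where the result is actually used.
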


\subsubsection{$\Sigma$-types.}
For the sake of completeness we recall the definitions, but we
refer the reader to \cite{hofmann:csl} for some of the proofs, in particular when the distinction between cwas and cwfs does not change anything.

\paragraph{Formation.} Let $A\in Type(\Gamma)$ and $B\in Type(\Gamma\cext A)$. At each $s : \Delta \to \Gamma$, the image of
$\Sigma(A, B)$ is given by the composition of the images of $A$ and $B$. More formally, we define:
\begin{eqnarray*}
\Sigma(A,B)(s) &=& A(s) \circ B(A(\id,s))\\
\Sigma(A,B)(s, \alpha) &=& B(A(\id,s), A(s, \alpha))
\end{eqnarray*}
The construction of the corresponding pullback square can be illustrated by the following diagram. Intuitively, the chosen pullbacks for $\Sigma(A,B)$
are directly obtained by composition the chosen pullbacks for $A$ and for $B$.
\[
\xymatrix@R=30pt@C=80pt{
~       \ar[r]^{B(A(\id,s), A(s, \alpha))}
        \ar[d]&
~       \ar[r]^{B(\id, A(\id, s))}
        \ar[d]^{B(A(\id,s))}&
\Gamma\cext A\cext B
        \ar[d]^{B(\id)}\\
~       \ar[r]^{A(s, \alpha)}
        \ar[d]_{A(s \alpha)}&
~       \ar[r]^{A(\id, s)}
        \ar[d]^{A(s)}&
\Gamma\cext A
        \ar[d]^{A(\id)}\\
~       \ar[r]_\alpha&
B       \ar[r]_s&
\Gamma
}
\]
It is easy to check that this defines a functor $\Sigma(A, B) : \C/\Gamma\to \arrow{\C}$ and that the necessary equations are satisfied so that we get a type $\Sigma(A, B)\in Type(\Gamma)$.

\paragraph{Introduction.} If $a:\Gamma\vdash A$ and $b:\Gamma\vdash B[\ext{\id, a}]$, then $a : \Gamma\to \Gamma\cext A$ is a section
of $A(id_\Gamma)$ and $b:\Gamma \to \Gamma\cext B[\ext{\id,a}]$ is a section of $B[\ext{\id,a}](id_\Gamma) = B(a)$ as illustrated by
following diagram:
\[
\xymatrix@C=40pt@R=40pt{
\Gamma\cext B[\ext{\id,a}]
        \ar[r]^{B(\id,a)}
        \ar[d]^{B(a)}&
\Gamma\cext A\cext B
        \ar[d]^{B(\id)}\\
\Gamma  \ar@/^/[u]^{b}
        \ar[r]_a&
\Gamma\cext A
}
\]
We  define $pair(a,b) = B(\id,a)\circ b$. It follows that $pair(a,b)$
is a section of $\Sigma(A,B)(\id_\Gamma) = \p_A\circ \p_B$.

\paragraph{Elimination.} Let $c : \Gamma \vdash \Sigma(A, B)$. Thus $c$ is a section of $\p_A\circ \p_B : \Gamma\cext A\cext B \to \Gamma$. We define the first
projection $\pi_1(c) = \p_B\circ c$ which is clearly a section of $\p_A$. The second projection $\pi_2(c)$ is given by the
universal property of the following pullback:

\[
\xymatrix@C=40pt@R=40pt{
\Gamma  \ar@/_/[ddr]_{\id_\Gamma}
        \ar@/^/[drr]^c
        \ar@{.>}[dr]|{\pi_2(c)}\\
&\Gamma\cext B[\ext{\id,a}]
        \ar[r]^{B(\id,a)}
        \ar[d]^{B(a)}&
\Gamma\cext A\cext B
        \ar[d]^{B(\id)}\\
&\Gamma \ar[r]_a&
\Gamma\cext A
}
\]
It is immediate from the diagram that it is a section of $\p_{B[\ext{\id,a}]}$.

\paragraph{Equations.} The equality rules for $\Sigma$-types are proved just as in \cite{hofmann:csl}.

This concludes the proof of the following proposition.

\begin{proposition}
If $\C$ is a category with a terminal object, then $(\C, T)$ supports $\Sigma$-types.
\end{proposition}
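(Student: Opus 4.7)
The plan is to build $\Sigma$-type structure directly from the pullback squares already encoded in the functorial families representing types; no additional pullbacks in $\C$ beyond a terminal object are needed, since every type $\hf{A}$ comes pre-packaged with chosen pullbacks. For the formation rule, given $A\in\Ty_\C(\Gamma)$ and $B\in\Ty_\C(\Gamma\cext A)$, I would define $\Sigma(A,B)(s)$ at $s:\Delta\to\Gamma$ as the composition $\hf{A}(s)\circ\hf{B}(\hf{A}(\id,s))$. Thus $\Sigma(A,B)(s)$ is the outer rectangle of two stacked pullback squares: pulling $A$ back along $s$ gives the lower square, then pulling $B$ back along the top edge of that square gives the upper one. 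The pasting lemma for pullbacks ensures the outer rectangle is itself a pullback, and functoriality of $\Sigma(A,B)$ on morphisms of $\C/\Gamma$ follows directly from the functoriality of $\hf{A}$ and $\hf{B}$.

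For introduction, given $a:\Gamma\vdash A$ and $b:\Gamma\vdash B[\subst{\id,a}]$, I would observe that $b$ is a section of $\hf{B}[\subst{\id,a}](\id)=\hf{B}(a)$, and that the pullback arrow $\hf{B}(\id,a)$ maps $\Gamma\cext B[\subst{\id,a}]$ into $\Gamma\cext A\cext B$; setting $\mathrm{pair}(a,b)=\hf{B}(\id,a)\circ b$ yields a section of $\hf{A}(\id)\circ\hf{B}(\id)=\Sigma(A,B)(\id)$. For elimination, $\pi_1(c)=\hf{B}(\id)\circ c$ is by construction a section of $\hf{A}(\id)$, while $\pi_2(c)$ arises as the unique mediating morphism from $\Gamma$ into the pullback square for $B$ over $\pi_1(c)$, which exists because $c$ itself makes the relevant outer square commute.

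The main obstacles are the equational laws and stability under substitution. The $\beta$-rules $\pi_1(\mathrm{pair}(a,b))=a$ and $\pi_2(\mathrm{pair}(a,b))=b$ fall out almost immediately from the defining universal properties. The $\eta$-rule $\mathrm{pair}(\pi_1(c),\pi_2(c))=c$ is slightly more delicate and reduces to uniqueness of the mediating morphism into the pullback defining $\Sigma(A,B)(\id)$, applied to $c$ itself. The calculation-heavy part will be stability under substitution: one must check that $\Sigma(A,B)[s]=\Sigma(A[s],B[\subst{s\circ\p,\q}])$ holds on the nose, and that $\mathrm{pair}$, $\pi_1$ and $\pi_2$ commute with $[\delta]$. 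Both reduce to diagram chases stacking substitution pullbacks with the pullbacks constituting $\hf{A}$ and $\hf{B}$; the split behaviour of functorial families under reindexing (which was precisely the point of the B\'enabou-Hofmann construction) makes these equalities strict rather than merely up to iso. For the purely equational bookkeeping I would import the arguments wholesale from Hofmann's treatment in \cite{hofmann:csl}, since the cwf/cwa distinction is immaterial there.
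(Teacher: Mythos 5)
Your proposal matches the paper's construction essentially verbatim: $\Sigma(A,B)(s)=\hf{A}(s)\circ\hf{B}(\hf{A}(\id,s))$ with the chosen pullbacks obtained by pasting those of $\hf{A}$ and $\hf{B}$, $\mathrm{pair}(a,b)=\hf{B}(\id,a)\circ b$, $\pi_1(c)=\p_{\hf{B}}\circ c$, $\pi_2(c)$ as the mediating morphism, and the equations and substitution stability deferred to Hofmann's treatment exactly as the paper does. This is correct and takes the same route.
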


\subsubsection{Extensional identity types.} To improve readability, we will now sometimes omit the subscripts of the projections, when they can be recovered from the context.
To build identity types, we require that the base category has finite limits.

\paragraph{Formation rule.}
Let $\Gamma\in \mathbb{C}$, $A\in Type(\Gamma)$, and $a,a':\Gamma\vdash A$. If $s:\Delta\to \Gamma$, we define $\Id_A(a,a')(s)$ as the equalizer
of $a[s]$ and $a'[s]$ (seen as morphisms $\Delta \to \Delta\cext A[s]$). If
$\xymatrix@R=5pt@C=5pt{
\Delta'\ar[dr]_{s'}\ar[rr]^{\delta}&&\Delta\ar[dl]^{s}\\
&\Gamma
}$
is a morphism in $\mathbb{C}/\Gamma$, we define $\Id_A(a,a')(\delta)$ as the upper square in the following diagram:
\[
\xymatrix@R=50pt@C=50pt{
\dom(\Id_A(a,a')(s\delta))
        \ar@{.>}[r]^{\gamma}
        \ar[d]^{\Id_{A}(a,a')(s\delta)}&
\dom(\Id_A(a,a')(s))
        \ar[d]^{\Id_A(a,a')(s)}\\
\Delta'
        \ar[r]^\delta
        \ar@/_/[d]_{a[s\delta]}
        \ar@/^/[d]^{a'[s\delta]}&
\Delta
        \ar@/_/[d]_{a[s]}
        \ar@/^/[d]^{a'[s]}\\
\Delta' \cext A[s\delta]
        \ar[r]_{\ext{\delta \p,\q}}&
\Delta \cext A[s]
}
\]
where $\gamma$ is yet to be defined. For this purpose, and to prove that the obtained square is a pullback, we need the following:

\begin{lemma}In the diagram above, if $f:\dom(f)\to \Delta'$, then $f$ equalizes $a[s\delta]$ and $a'[s\delta]$
iff $\delta f$ equalizes $a[s]$ and $a'[s]$.
\label{eqi_appendix}
\end{lemma}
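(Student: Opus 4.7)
The proof proposal is to reduce both directions of the biconditional to the commutativity of the lower rectangle in the displayed diagram together with the universal property of the pullback square above it. The key compatibility I will invoke is the identity
\[
\ext{\delta\p, \q}\circ a[s\delta] \;=\; a[s]\circ \delta,
\]
and likewise with $a'$ in place of $a$. This is precisely the naturality of the defining pullback of $a[s]$: unwinding the construction of substitution on terms (and using $a[s\delta] = a[s][\delta]$, which follows from functoriality of term substitution already established), the square whose bottom edge is $\delta$ and whose top edge is $\ext{\delta \p,\q} = A(s,\delta)$ commutes with the sections $a[s\delta]$ and $a[s]$ on its left and right.

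For the forward direction, assume $a[s\delta]\circ f = a'[s\delta]\circ f$. Post-composing with $\ext{\delta\p,\q}$ and using the compatibility above gives $a[s]\circ\delta\circ f = a'[s]\circ\delta\circ f$, as required. This direction needs nothing beyond the commutativity of the lower square.

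For the backward direction, assume $a[s]\circ \delta f = a'[s]\circ \delta f$. I want to conclude $a[s\delta]\circ f = a'[s\delta]\circ f$. The two morphisms $a[s\delta]\circ f$ and $a'[s\delta]\circ f$ both land in $\Delta'\cext A[s\delta]$, which sits atop the pullback square with legs $A(s\delta) = \hf{A}[s](\id_{\Delta'})$ and $\ext{\delta\p,\q}$ over $\delta\colon \Delta'\to \Delta$. I will verify that the two morphisms agree after composing with each leg: composition with the left leg gives $A(s\delta)\circ a[s\delta]\circ f = f = A(s\delta)\circ a'[s\delta]\circ f$, since $a[s\delta]$ and $a'[s\delta]$ are sections of $A(s\delta)$; composition with $\ext{\delta\p,\q}$ gives $a[s]\circ\delta\circ f = a'[s]\circ\delta\circ f$ by hypothesis and the compatibility. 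The universal property of the pullback (applied to the cone $(f, a[s]\circ\delta\circ f)$ over $\delta$) then forces $a[s\delta]\circ f = a'[s\delta]\circ f$.

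The main, very mild obstacle is simply setting up the naming conventions so that the compatibility $\ext{\delta\p,\q}\circ a[s\delta] = a[s]\circ \delta$ is visibly an instance of the definition of term substitution rather than an ad hoc calculation; once that is in hand, the backward direction is the only real content, and it reduces entirely to uniqueness of mediating morphisms into the pullback square appearing in the formation rule.
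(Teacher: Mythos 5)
Your proof is correct and follows essentially the same route as the paper's: both hinge on the compatibility $\ext{\delta\p,\q}\circ a[s\delta] = a[s]\circ\delta$ (which the paper derives from the observation $a=\ext{\id,a}$ plus combinator manipulation), and the forward direction is the identical post-composition argument. For the converse the paper merely asserts that ``the same equational reasoning'' applies; your explicit appeal to the universal property of the substitution pullback, checking agreement on both legs, is a legitimate --- and in fact more carefully justified --- way to cancel $\ext{\delta\p,\q}$, so if anything your write-up fills in a step the paper leaves implicit.
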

\begin{proof}
First note that by construction of this cwf, we have the surprising equality $a = \ext{\id_\Gamma,a}$ for any term $a:\Gamma\vdash A$. Indeed,
$\ext{\id_\Gamma,a} = A(\id_\Gamma,\id_\Gamma)\circ a = a$. Thus, we have that
\begin{eqnarray*}
\ext{\delta \p,\q}\circ a[s\delta]        &=& \ext{\delta \p,\q} \circ \ext{\id,a[s\delta]}\\
                                        &=& \ext{\delta,a[s\delta]}\\
                                        &=& \ext{\id_\Delta,a[s]}\circ \delta\\
                                        &=& a[s] \circ \delta
\end{eqnarray*}
For the same reason, we have $\ext{\delta \p,\q}\circ a'[s\delta] = a'[s]\circ \delta$. Suppose now that $f$ equalizes $a[s\delta]$ and
$a'[s\delta]$. Then:
\begin{eqnarray*}
a[s]\circ \delta\circ f         &=& \ext{\delta \p,\q}\circ a[s\delta] \circ f\\
                                &=& \ext{\delta \p,\q}\circ a'[s\delta] \circ f\\
                                &=& a'[s] \circ \delta \circ f
\end{eqnarray*}
Thus as claimed, $\delta f$ equalizes $a[s]$ and $a'[s]$. The same equational reasoning gives the converse implication.
\end{proof}

We use this lemma as follows. We know that $\Id_A(a,a')(s\delta)$ equalizes $a[s\delta]$ and $a'[s\delta]$, thus $\delta \circ \Id_A(a,a')(s\delta)$
equalizes $a[s]$ and $a'[s]$. Thus by the equalizer property, $\delta \circ \Id_A(a,a')(s\delta)$ factors in a unique way through
$\Id_A(a,a')(s)$, and we define $\gamma$ to be the unique morphism. Since we already know that the square commutes, it only remains to prove that it is a pullback square.

Let $h_1:X\to \Delta'$ and $h_2:X\to \dom(\Id_A(a,a')(s))$ be two morphisms which make the outer square commute. Necessarily, $\Id_A(a,a')(s)\circ h_2$
equalizes $a[s]$ and $a'[s]$. Since the outer square commutes, $\delta h_1$ equalizes them as well. By Lemma \ref{eqi_appendix}, $h_1$ equalizes
$a[s\delta]$ and $a'[s\delta]$. Thus it factors uniquely through $\Id_A(a,a')(s\delta)$. Let $h$ be the mediating arrow.
It makes the left triangle commute by the factorisation property and the right triangle commute because $\gamma h$ defines another unique
factorisation of $\delta h_1$ through $\Id_A(a,a')(s)$.

We must check that this construction is functorial. Both conditions (for $\id_s$ and $\delta_1\circ \delta_2$) follow immediately by uniqueness of the factorisation through the equalizer. Thus we have shown that $\Id_A(a,a')\in Type(\Gamma)$.

\paragraph{Reflexivity.} For each $a\in \Gamma\vdash A$, we define the term $\refl_{A,a}:\Gamma\vdash \Id_A(a,a)$ as follows:
\[
\xymatrix{
\Gamma
        \ar@/_/[ddr]_{\id_\Gamma}
        \ar@{.>}[dr]^{\refl_{A,a}}&\\
&dom(\Id_A(a,a)(\id_\Gamma))
        \ar[d]\\
&\Gamma
        \ar@/_/[d]_a
        \ar@/^/[d]^a\\
&\Gamma\cext A
}
\]

\paragraph{Stability under substitution.} First we prove
\[
\Id_A(a,a')[\delta] = \Id_{A[\delta]}(a[\delta],a'[\delta])
\]
It suffices to note that for any $s$, the arrows $\Id_A(a,a')[\delta](s)$ and $\Id_{A[\delta]}(a[\delta],a'[\delta])(s)$
both equalize $a[s\delta]$ and $a'[s\delta]$. The image of arrows is determined uniquely by the factorisation under
this equalizer, hence must also be unchanged.

Then we prove
\[
\refl_{A,a}[\delta] = \refl_{A[\delta],a[\delta]}
\]
This is because $\refl_{A,a}[\delta]$ is a correct factorisation of $\id_\Delta$ through
$\Id_A(a,a')[\delta]=\Id_{A[\delta]}(a[\delta],a'[\delta])$ and $\refl_{A[\delta],a[\delta]}$ is defined as the unique
such factorisation.

\paragraph{Extensionality.}
Here, the judgement $\Gamma\vdash a=a' : A$ means that $a$ and $a'$ are equal morphisms of $\mathbb{C}$.
Suppose we have a term $c:\Gamma\vdash \Id_A(a,a')$.
For the first rule, note that $\Id_A(a,a')(\id_\Gamma)\circ c = \id_\Gamma$, because $c$ is a
term. But $\id_\Gamma$ factors through $\Id_A(a,a')(\id_\Gamma)$. Thus it equalizes $a$ and $a'$, and it follows that $a=a'$.
For the second rule, note that the first rule implies that $a=a'$. Thus $\id_\Gamma$
equalizes $a$ and $a'$ and there is a unique factorisation of $\id_\Gamma$ through $\Id_A(a,a')(\id_\Gamma)$. Since $c$ and
$\refl_{A,a}$ are both such factorisations $c=\refl_{A,a}$.

\begin{proposition}
Let $\C$ be a finitely complete category, then $(\C, T_\C)$ supports identity types.
\end{proposition}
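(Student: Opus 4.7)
The plan is to exploit the equalizers in $\C$, which exist since $\C$ is finitely complete, to construct $\Id_A(a,a')$ as a functorial family whose image at each $s : \Delta \to \Gamma$ is the equalizer of the two terms. More precisely, since terms are sections $a[s], a'[s] : \Delta \to \Delta \cext A[s]$, I would set $\Id_A(a,a')(s)$ to be their equalizer. The display map is then literally this equalizer arrow, and the reflexivity term $\refl_{A,a}$ arises as the unique factorization of $\id_\Gamma$ through the equalizer $\Id_A(a,a)(\id_\Gamma)$, which exists because $\id_\Gamma$ trivially equalizes $a$ with itself.

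The main technical step is to promote this pointwise assignment to an actual functor $\C/\Gamma \to \arrow{\C}$ sending each morphism $\delta : (\Delta', s\delta) \to (\Delta, s)$ in $\C/\Gamma$ to a pullback square. The upper arrow $\Id_A(a,a')(s,\delta)$ is obtained from the equalizer universal property applied to $\delta \circ \Id_A(a,a')(s\delta)$, once one knows this composite equalizes $a[s]$ and $a'[s]$. For this I would first establish a key auxiliary lemma: a morphism $f : X \to \Delta'$ equalizes $a[s\delta]$ and $a'[s\delta]$ if and only if $\delta f$ equalizes $a[s]$ and $a'[s]$. This reduces to the cwf-level identity $\ext{\delta \p, \q} \circ a[s\delta] = a[s] \circ \delta$ (and similarly for $a'$), which falls out of the definition of substitution on terms.

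Given this lemma, the hard part is showing the resulting square is a genuine pullback, not merely commutative. Given a competing cone $(h_1 : X \to \Delta', h_2 : X \to \dom(\Id_A(a,a')(s)))$, the strategy is to observe that $\Id_A(a,a')(s) \circ h_2$ equalizes $a[s]$ and $a'[s]$, and by commutativity of the outer square this equals $\delta h_1$; then the lemma tells us $h_1$ itself equalizes $a[s\delta]$ and $a'[s\delta]$ and so factors uniquely through $\Id_A(a,a')(s\delta)$. Uniqueness of the mediator then comes from combining uniqueness of the equalizer factorization with commutativity. Functoriality in $\delta$ is routine from uniqueness of factorizations.

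For the remaining content, the extensionality rule is essentially free: if $c : \Gamma \vdash \Id_A(a,a')$ then $c$ is a section of $\Id_A(a,a')(\id_\Gamma)$, so $\id_\Gamma$ factors through the equalizer of $a$ and $a'$, forcing $a = a'$; and then both $c$ and $\refl_{A,a}$ factor $\id_\Gamma$ through this equalizer, so uniqueness of factorization forces $c = \refl_{A,a}$. Stability under substitution $\Id_A(a,a')[\delta] = \Id_{A[\delta]}(a[\delta], a'[\delta])$ and $\refl_{A,a}[\delta] = \refl_{A[\delta], a[\delta]}$ likewise follows from the universal property, since both sides in each equation solve the same equalizer problem. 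The primary obstacle throughout is the pullback verification in the formation rule; once the equalizer/substitution lemma is in hand, everything else is bookkeeping on universal properties.
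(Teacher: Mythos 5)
Your proposal is correct and follows essentially the same route as the paper's proof: the pointwise equalizer construction, the key lemma that $f$ equalizes $a[s\delta],a'[s\delta]$ iff $\delta f$ equalizes $a[s],a'[s]$ (reduced to $\ext{\delta\p,\q}\circ a[s\delta]=a[s]\circ\delta$), the pullback verification via a competing cone, and the derivation of reflexivity, extensionality, and substitution stability from uniqueness of equalizer factorizations. Nothing essential is missing.
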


\subsubsection{$\Pi$-types.}
If $\C$ is a lccc, then the cwf $H(\C)$ supports $\Pi$-types. Let $\hf{A}$ be a functorial
family over $\Gamma$ and $\hf{B}$ over $\Gamma\cext \hf{A}$. Then the value of the family $\Pi(\hf{A}, \hf{B})$ at substitution $\delta : \Delta \to \Gamma$
is defined by $\Pi_{\hf{A}(\delta)}(\hf{B}(\hf{A}(\id, \delta)))$, where $\Pi_f$ is the right adjoint of $f^*$ obtained by the lcc structure. If
$\alpha : \Omega \to \Delta$ and $\delta : \Delta \to \Gamma$, we have to define a morphism $\Pi(\hf{A}, \hf{B})(\delta, \alpha)$ yielding a pullback
diagram. For this purpose, first consider the following chain of isomorphisms in $\C/\Omega$:
\begin{eqnarray*}
\Pi_{\hf{A}(\delta\alpha)} \hf{B}(\hf{A}(\id, \delta \alpha)) 
&=& \Pi_{\hf{A}(\delta \alpha)}\hf{B}(\hf{A}(\id, \delta)\hf{A}(\delta, \alpha))\\
&\iso& \Pi_{\hf{A}(\delta \alpha)}(\hf{A}(\delta, \alpha))^*(\hf{B}(\hf{A}(\id, \delta)))\\
&\iso& \alpha^*(\Pi_{\hf{A}(\delta)} \hf{B}(\hf{A}(\id, \delta)))
\end{eqnarray*}
The first isomorphism is by uniqueness of the pullback of $\hf{B}(\id, \delta)$ along $\hf{A}(\delta, \alpha)$, while the second is by the Beck-Chevalley
condition applied to the pullback square of $\hf{A}(\delta, \alpha)$. Let us call $\phi$ this isomorphism. The action of $\alpha^*$
also gives a canonical morphism
$h : \dom(\alpha^*(\Pi_{\hf{A}(\delta)} \hf{B}(\hf{A}(\id, \delta)))) \to \dom(\Pi_{\hf{A}(\delta)} \hf{B}(\hf{A}(\id, \delta)))$, thus we define:
\[
\Pi(\hf{A}, \hf{B})(\delta, \alpha) = h \phi : \dom(\Pi(\hf{A}, \hf{B})(\delta)) \to \dom(\Pi(\hf{A}, \hf{B})(\delta \alpha))
\]
As needed this defines a pullback square since it is obtained as an isomorphism and a pullback, hence the definition of the functorial
family $\Pi(\hf{A}, \hf{B})$ is now complete, since the equations come from the universal property of the pullback. The fact that
$\Pi(\hf{A}, \hf{B})[\delta]$ and $\Pi(\hf{A}[\delta],\hf{B}[\ext{\delta \p, \q}])$ coincide on objects (of $\C/\Gamma$) is a straightforward
calculation, from which the fact that they coincide on morphisms can be directly deduced.

The combinators $\lambda$ and $\ap$ come from natural applications of the adjunction $(\hf{A}(\id))^* \adj \Pi_{\hf{A}(\id)}$, and the computation rules follow
from the properties of adjunctions. Behaviour of the combinators $\lambda$ and $\ap$ under substitution require to rework the proof of the Beck-Chevalley
conditions for lcccs. As in \cite{hofmann:csl}, we will not give the details.

\begin{proposition}
Let $\C$ be a lccc, then $(\C, T_\C)$ supports $\Pi$-types.
\end{proposition}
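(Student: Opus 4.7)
The plan is to use the locally cartesian closed structure of $\C$ in an essentially pointwise fashion. For each $\delta : \Delta \to \Gamma$ in $\C/\Gamma$, the pullback functor $(\hf{A}(\delta))^* : \C/\Delta \to \C/(\Delta \cext \hf{A}[\delta])$ has, by assumption, a right adjoint $\Pi_{\hf{A}(\delta)}$. I would define the action of the functorial family $\Pi(\hf{A}, \hf{B})$ on objects by
\[
\Pi(\hf{A}, \hf{B})(\delta) = \Pi_{\hf{A}(\delta)}(\hf{B}(\hf{A}(\id, \delta))),
\]
so that $\Gamma \cext \Pi(\hf{A},\hf{B}) = \dom(\Pi(\hf{A},\hf{B})(\id_\Gamma))$, matching Hofmann's construction.

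The action on morphisms of $\C/\Gamma$ is the first place where care is needed. Given $\alpha : \Omega \to \Delta$ in $\C/\Gamma$, I would produce the required pullback square by composing two canonical isomorphisms: first, uniqueness of pullbacks gives $\hf{B}(\hf{A}(\id,\delta\alpha)) \iso (\hf{A}(\delta,\alpha))^* \hf{B}(\hf{A}(\id,\delta))$; second, the Beck–Chevalley condition applied to the pullback square $\hf{A}(\alpha)$ gives $\Pi_{\hf{A}(\delta\alpha)}(\hf{A}(\delta,\alpha))^* \iso \alpha^* \Pi_{\hf{A}(\delta)}$. Composing yields an isomorphism $\Pi(\hf{A},\hf{B})(\delta\alpha) \iso \alpha^* \Pi(\hf{A},\hf{B})(\delta)$ in $\C/\Omega$, and the desired morphism $\Pi(\hf{A},\hf{B})(\delta,\alpha)$ is obtained by post-composing with the canonical arrow $\alpha^*\Pi(\hf{A},\hf{B})(\delta) \to \Pi(\hf{A},\hf{B})(\delta)$ coming from the chosen pullback. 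Functoriality of this assignment follows from uniqueness of mediating arrows together with the coherence of the Beck–Chevalley isomorphisms.

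For the combinators, I would obtain $\lambda$ and $\ap$ directly from the adjunction $(\hf{A}(\id))^* \dashv \Pi_{\hf{A}(\id)}$. Given a term $b : \Te{\Gamma \cext \hf{A}}{\hf{B}}$, viewed as a section of $\hf{B}(\id)$ in $\C/(\Gamma \cext \hf{A})$, its transpose under the adjunction is a section of $\Pi_{\hf{A}(\id)}\hf{B}(\id) = \Pi(\hf{A},\hf{B})(\id)$ in $\C/\Gamma$, i.e.\ a term $\lambda(b) : \Te{\Gamma}{\Pi(\hf{A},\hf{B})}$. Dually, $\ap(c,a)$ is obtained by transposing $c$ across the adjunction (yielding an element in the domain of $\hf{B}(\id)$) and then restricting along $\ext{\id, a}$, or equivalently by applying the counit $(\hf{A}(\id))^* \Pi_{\hf{A}(\id)} \hf{B}(\id) \to \hf{B}(\id)$. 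The $\beta$- and $\eta$-rules are the triangle identities of the adjunction.

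The main obstacle is stability of $\lambda$ and $\ap$ under substitution: one must show $\Sub{\Pi(\hf{A},\hf{B})}{\gamma} = \Pi(\Sub{\hf{A}}{\gamma},\Sub{\hf{B}}{\snocvs{\gamma\comp\p}{\q}})$ and the corresponding equations for $\lambda$ and $\ap$, all \emph{on the nose}. Equality on objects of $\C/\Delta$ reduces, via unfolding of the definitions, to equality of two applications of $\Pi_{(-)}$ to canonically isomorphic data; the equations on morphisms then follow from the universal property of the chosen pullbacks, exactly as in the proof of stability for $\Sigma$-types. For the term equations one has to re-examine the naturality/Beck–Chevalley square used to define the action of $\Pi(\hf{A},\hf{B})$ on morphisms and check that the unit and counit of $(\hf{A}(\id))^* \dashv \Pi_{\hf{A}(\id)}$ are compatible with pulling back along an arbitrary $\gamma$. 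This is essentially the same calculation Hofmann carries out for cwas in \cite{hofmann:csl}, and I would simply invoke that argument after noting the bijective translation between cwas and cwfs, rather than redoing the detailed manipulation of Beck–Chevalley isomorphisms here.
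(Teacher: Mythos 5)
Your proposal follows essentially the same route as the paper's proof: the pointwise definition $\Pi(\hf{A},\hf{B})(\delta)=\Pi_{\hf{A}(\delta)}(\hf{B}(\hf{A}(\id,\delta)))$, the action on morphisms via the uniqueness-of-pullback isomorphism composed with the Beck--Chevalley isomorphism and the canonical projection of the chosen pullback, the combinators $\lambda$ and $\ap$ from the adjunction $(\hf{A}(\id))^*\adj\Pi_{\hf{A}(\id)}$, and the deferral to Hofmann's cwa argument for the substitution behaviour of the term combinators. This matches the paper's argument in both structure and level of detail, so the proposal is correct.
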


\subsection{Image of $1$-cells}

\begin{lemma}
Let $(F, \sigma): (\C, T) \to (\D, T')$ be a pseudo cwf-morphism with families of isomorphisms $\theta$ and $\rho$. Then for any $\delta : \Delta \to \Gamma$ in $\C$ and type $A\in \Ty(\Gamma)$,
we have:
\[
F(\subst{\delta \p, \q}) = \rho_{\Gamma, A}^{-1} \subst{F(\delta)\p, \q} \theta_{A, \delta}^{-1} \rho_{\Delta, A[\delta]}
\]
\label{equiv_theta_appendix}
\end{lemma}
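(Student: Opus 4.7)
The plan is to reduce the claim to Proposition \ref{substitution_extension_appendix} applied to a well-chosen substitution extension, and then simplify the resulting term via the coherence of $\theta$ and the projection-preservation property of $\rho$. Note that the statement is essentially the uniqueness remark that concludes the proof of Lemma \ref{completion_cwfmorphisms} (namely $\rho_{\Gamma,A} F(\subst{\delta\p,\q}) \rho_{\Delta,A[\delta]}^{-1} \theta_{A,\delta} = \subst{F\delta\p,\q}$), rearranged, so really we are reproving that observation without assuming that $F$ preserves finite limits.

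First I would apply Proposition \ref{substitution_extension_appendix} with the substitution $\delta\p : \Delta\cext A[\delta]\to \Gamma$ and the term $\q_{A[\delta]} : \Delta\cext A[\delta]\vdash A[\delta\p]$, obtaining
\[
F(\subst{\delta\p, \q}) = \rho_{\Gamma,A}^{-1}\,\subst{F(\delta\p),\ \applyopen{\theta_{A,\delta\p}^{-1}}{(\sigma_{\Delta\cext A[\delta]}^{A[\delta\p]}(\q))}}.
\]
Next I would identify the term in the second component. Using preservation of $\q$ by $\rho_{\Delta, A[\delta]}$, we have $\sigma_{\Delta\cext A[\delta]}^{A[\delta\p]}(\q) = \applyopen{\theta_{A[\delta],\p}}{(\q[\rho_{\Delta, A[\delta]}])}$, while coherence of $\theta$ gives $\theta_{A,\delta\p} = \theta_{A[\delta],\p}\,\indexed{T'}(F\p)(\theta_{A,\delta})$, hence $\theta_{A,\delta\p}^{-1}\,\theta_{A[\delta],\p} = \indexed{T'}(F\p)(\theta_{A,\delta}^{-1})$. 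Composing coercions (Lemma \ref{composition_coercions_appendix}), I obtain
\[
\applyopen{\theta_{A,\delta\p}^{-1}}{(\sigma_{\Delta\cext A[\delta]}^{A[\delta\p]}(\q))} = \applyopen{\indexed{T'}(F\p)(\theta_{A,\delta}^{-1})}{(\q[\rho_{\Delta, A[\delta]}])}.
\]

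Now a short direct computation using the definition $\indexed{T'}(F\p)(\theta_{A,\delta}^{-1}) = \subst{\p,\, \q[\theta_{A,\delta}^{-1}\subst{F\p\cdot\p,\q}]}$ together with $\p\,\rho_{\Delta, A[\delta]} = F\p$ unfolds to $\applyopen{\indexed{T'}(F\p)(\theta_{A,\delta}^{-1})}{(\q[\rho_{\Delta, A[\delta]}])} = \q[\theta_{A,\delta}^{-1}\,\rho_{\Delta, A[\delta]}]$. Substituting back yields
\[
F(\subst{\delta\p, \q}) = \rho_{\Gamma,A}^{-1}\,\subst{F(\delta\p),\ \q[\theta_{A,\delta}^{-1}\,\rho_{\Delta, A[\delta]}]}.
\]

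Finally, I would apply the substitution-composition law $\subst{\gamma,a}\,h = \subst{\gamma h,\, a[h]}$ to the target expression: since $\p\,\theta_{A,\delta}^{-1} = \p$, we have $F\delta\,\p\,\theta_{A,\delta}^{-1}\,\rho_{\Delta, A[\delta]} = F\delta\,\p\,\rho_{\Delta, A[\delta]} = F\delta\,F\p = F(\delta\p)$, so
\[
\subst{F\delta\,\p,\q}\,\theta_{A,\delta}^{-1}\,\rho_{\Delta, A[\delta]} = \subst{F(\delta\p),\ \q[\theta_{A,\delta}^{-1}\,\rho_{\Delta, A[\delta]}]},
\]
and pre-composing with $\rho_{\Gamma,A}^{-1}$ gives the claim. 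The main obstacle is the coercion bookkeeping in the middle step: getting the interaction of the coherence equation for $\theta$ at a composite substitution $\delta\p$ with the second-projection preservation equation for $\rho$ in the right form, so that the two $\theta$-coercions cancel to leave only $\indexed{T'}(F\p)(\theta_{A,\delta}^{-1})$; everything else is routine manipulation of cwf combinators.
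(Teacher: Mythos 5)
Your proposal is correct and follows essentially the same route as the paper's proof: apply preservation of substitution extension to $\subst{\delta\p,\q}$, rewrite $\sigma_{\Delta\cext A[\delta]}^{A[\delta\p]}(\q)$ via preservation of $\q$ by $\rho$, cancel the two coercions using the coherence law for $\theta$ and composition of coercions to leave $\indexed{T'}(F\p)(\theta_{A,\delta}^{-1})$, and then simplify with $\p\,\rho_{\Delta,A[\delta]} = F\p$ before recombining. The only difference is presentational: you make the intermediate identities (in particular $\theta_{A,\delta\p}^{-1}\theta_{A[\delta],\p} = \indexed{T'}(F\p)(\theta_{A,\delta}^{-1})$) explicit where the paper leaves them as unannotated steps in a single equational chain.
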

\begin{proof}
Direct calculation.
\begin{eqnarray*}
F(\subst{\delta\p, \q}) &=& \rho_{\Gamma, A}^{-1} \subst{F(\delta \p), \{\theta_{A, \delta\p}^{-1}\}(\sigma_{\Delta\cext A[\delta]}^{A[\delta\p]}(\q))}\\
			&=& \rho_{\Gamma, A}^{-1} \subst{F(\delta \p), \{\theta_{A, \delta\p}^{-1}\}(\{\theta_{A[\delta], \p}\}(\q[\rho_{\Delta, A[\delta]}]))}\\
			&=& \rho_{\Gamma, A}^{-1} \subst{F(\delta \p), \{\indexed{T'}(F\p)(\theta_{A, \delta}^{-1})\}(q[\rho_{\Delta, A[\delta]}])}\\
			&=& \rho_{\Gamma, A}^{-1} \subst{F(\delta \p), \q[\indexed{T'}(F\p)(\theta_{A, \delta}^{-1})\subst{id, q[\rho_{\Delta, A[\delta]}]}]}\\
			&=& \rho_{\Gamma, A}^{-1} \subst{F(\delta \p), \q[\subst{\p, \q[\theta_{A, \delta}^{-1}\subst{(F\p)\p, \q}]}\subst{id, q[\rho_{\Delta, A[\delta]}]}]}\\
			&=& \rho_{\Gamma, A}^{-1} \subst{F(\delta \p), \q[\theta_{A, \delta}^{-1}\subst{F\p, \q[\rho_{\Delta, A[\delta]}]}]}\\
			&=& \rho_{\Gamma, A}^{-1} \subst{F(\delta \p), \q[\theta_{A, \delta}^{-1}\rho_{\Delta, A[\delta]}]}\\
			&=& \rho_{\Gamma, A}^{-1} \subst{F(\delta)\p, \q} \theta_{A, \delta}^{-1} \rho_{\Delta, A[\delta]}
\end{eqnarray*}
Using preservation of substitution extension and $\q$, then coherence of $\theta$ and manipulation of cwf combinators.
\end{proof}

From any functor $F: \C \to \D$ preserving finite limits, its extension to $(F, \sigma_F) : (\C, T_\C) \to (\D, T_\D)$ relies heavily on the 
following lemma.

\begin{lemma}[Generation of isomorphisms]
Let $(\C, T)$ and $(\D, T')$ be two cwfs, $F: \C \to \D$ a functor preserving finite limits, a family of functions
$\sigma_\Gamma : \Ty(\Gamma) \to \Ty'(F\Gamma)$ and a family of isomorphisms
$\rho_{\Gamma, A} : F(\Gamma\cext A) \to F\Gamma \cext \sigma_\Gamma(A)$ such that $\p \rho_{\Gamma, A} = F \p$. Then there exists an unique
choice of functions $\sigma_\Gamma^A$ on terms and of isomorphisms $\theta_{A, \delta}$ such that $(F, \sigma)$ is a weak cwf-morphism.
\label{completion_cwfmorphisms_appendix}
\end{lemma}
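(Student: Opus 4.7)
My plan is to follow the sketch already given for Lemma \ref{completion_cwfmorphisms} in the main body, but spell out carefully each uniqueness argument, since the statement asserts uniqueness of both $\sigma_\Gamma^A$ and $\theta_{A,\delta}$. First I would handle the action on terms. By Proposition \ref{properties}(2), equivalently Lemma \ref{redundant_appendix}, any pseudo cwf-morphism with underlying functor $F$ and context comprehension isomorphism $\rho$ must satisfy $\sigma_\Gamma^A(a) = \q[\rho_{\Gamma,A} F(\ext{\id,a})]$. This formula does not depend on $\theta$, so it is forced, and I take it as the definition. A short calculation using the preservation $\p\rho_{\Gamma,A} = F\p$ shows that the right-hand side is indeed a term of type $\sigma_\Gamma(A)$ in context $F\Gamma$, and naturality of this assignment in $A$ is essentially automatic.

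Next I would construct $\theta_{A,\delta}$ as the mediating morphism between two pullback squares over $F\delta: F\Delta\to F\Gamma$. The first is the canonical substitution pullback in $(\D,T')$ defining $F\Delta\cext \sigma_\Gamma(A)[F\delta]$ with projections $\p$ and $\subst{(F\delta)\p,\q}$. The second has apex $F\Delta\cext \sigma_\Delta(A[\delta])$, with projection $\p$ to $F\Delta$ and top arrow $\rho_{\Gamma,A} F(\ext{\delta\p,\q}) \rho_{\Delta,A[\delta]}^{-1}$ to $F\Gamma\cext \sigma_\Gamma(A)$. That the second square commutes uses $\p\rho = F\p$; that it is a pullback uses that $F$ preserves finite limits (so $F$ applied to the substitution pullback in $\C$ for $\delta$ along $A$ is a pullback in $\D$) and that $\rho_{\Gamma,A}$ and $\rho_{\Delta,A[\delta]}$ are isomorphisms. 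The universal property of the first pullback then produces a unique $\theta_{A,\delta}$ commuting with both projections, and since both squares are pullbacks over the same cospan, $\theta_{A,\delta}$ is automatically an isomorphism.

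I would then verify the three required properties of $\theta$. The identity law $\theta_{A,\id} = \id$ follows because both sides are mediating morphisms between the same degenerate pullback squares. The coherence law for $\theta_{A,\gamma\delta}$ versus the composite $\theta_{A[\gamma],\delta}\circ\indexed{T'}(F\delta)(\theta_{A,\gamma})$, and the naturality law in $A$, are both verified by checking that each side is a mediating morphism into a common pullback square; uniqueness of such a mediating morphism then forces the equations. The preservation of substitution on terms and the coherence of $\rho$ with projections also fall out from the same kind of pullback chase, so that $(F,\sigma)$ satisfies all the axioms of a pseudo cwf-morphism.

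Finally, for uniqueness of $\theta$, I would invoke Lemma \ref{equiv_theta_appendix}, which reformulates $F(\subst{\delta\p,\q})$ in any pseudo cwf-morphism as $\rho_{\Gamma,A}^{-1}\subst{(F\delta)\p,\q}\theta_{A,\delta}^{-1}\rho_{\Delta,A[\delta]}$. Rearranging, this means that for any valid choice of $\theta$, the composite $\rho_{\Gamma,A} F(\subst{\delta\p,\q})\rho_{\Delta,A[\delta]}^{-1}\theta_{A,\delta}$ must equal $\subst{(F\delta)\p,\q}$; but this is exactly the defining equation of the mediating morphism into the first pullback, so $\theta_{A,\delta}$ is determined. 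The main obstacle in all of this is really bookkeeping: the only subtle point is ensuring that the second square is genuinely a pullback, for which preservation of finite limits by $F$ is essential; everything else is mediated-morphism uniqueness.
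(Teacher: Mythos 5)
Your proposal is correct and follows essentially the same route as the paper's proof: the forced formula $\sigma_\Gamma^A(a) = \q[\rho_{\Gamma,A}F(\ext{\id,a})]$ from the redundancy of terms and sections, the construction of $\theta_{A,\delta}$ as the mediating isomorphism between the substitution pullback and the transported pullback $\rho_{\Gamma,A}F(\ext{\delta\p,\q})\rho_{\Delta,A[\delta]}^{-1}$, verification of the identity, coherence and naturality laws by uniqueness of mediating morphisms, and uniqueness of $\theta$ via Lemma \ref{equiv_theta_appendix}. No gaps.
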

\begin{proof}
By Lemma \ref{redundant_appendix}, the unique way of extending $\sigma$ to terms is by exploiting the redundancy between terms and sections and
set $\sigma_\Gamma^A(a) = \q[\rho_{\Gamma, A} F(\ext{\id, a})]$. To generate $\theta$, we exploit the two squares below:
\[\hspace{-5pt}
\xymatrix@R=20pt@C=35pt{
F\Delta \cext \sigma_\Gamma({A})[F\delta]       \ar[r]^{\ext{(F \delta) \p\q}}
        \ar[d]_{p}&
F\Gamma\cext \sigma_\Gamma({A}) \ar[d]^{\p}\\
F\Delta \ar[r]_{F\delta}&
F\Gamma
}
\xymatrix@R=20pt@C=80pt{
F\Delta\cext \sigma_\Delta({A}[\delta]) \ar[r]^{\rho_{\Gamma, A} F(\ext{\delta \p, \q}) \rho_{\Delta, A[\delta]}^{-1}}
        \ar[d]_{\p}&
F\Gamma\cext \sigma_\Gamma(A)   \ar[d]^{\p}\\
F\Delta \ar[r]_{F\delta}&
F\Gamma
}
\]
The first square is a standard substitution pullback. The second is a pullback because $F$ preserves finite limits and $\rho_{\Gamma, A}$ and $\rho_{\Delta, A[\delta]}$ are isomorphisms.
The isomorphism $\theta_{A, \delta}$ is then defined as the unique mediating morphism from the first to the second. There is no other possible choice for $\theta_{A, \delta}$ : indeed,
in an arbitrary pseudo cwf-morphism $\theta_{A, \delta}$ necessarily already commutes with the projections of these pullback diagrams (easy consequence of Lemma \ref{equiv_theta_appendix}),
therefore whenever $(F, \sigma)$ is such that $F$ preserves finite limits, $\theta_{A, \delta}$ must coincide with this mediating arrow above.

We must now check that $\theta$, defined as above, satisfies the necessary coherence and naturality conditions. 
Clearly, $\theta$ defined as above commutes with the projections and satisfies $\theta_{A, \id} = \id$. We must now check that $\theta$, defined as above, satisfies the necessary coherence and naturality
conditions, which will be a consequence of the universal property of the pullback above.
For the coherence condition, consider now that we have $\alpha : \Omega \to \Delta$ and $\delta : \Delta \to \Gamma$. The morphism
$\theta_{A[\delta], \alpha} \indexed{T'}{(F\alpha)}(\theta_{A, \delta})$ is an isomorphism between $\Omega \cext \sigma_\Gamma(A)[F(\delta \alpha)]$ and
$\Omega \cext \sigma_\Omega(A[\delta \alpha])$, thus we just have to prove that it preserves the projections (of the pullback corresponding to these
expressions, as in the diagram above) to conclude by uniqueness of such an isomorphism. The two equations to prove are therefore the following.
\begin{eqnarray}
\p \theta_{A[\delta], \alpha} \indexed{T'}(F\alpha)({\theta_{A, \delta}}) &=& \p\label{eqnfst}\\
\rho_{\Gamma, A} F(\ext{\delta\alpha \p, \q}) \rho_{\Omega, A[\delta\alpha]}^{-1} \theta_{A[\delta], \alpha} \indexed{T'}(F\alpha)(\theta_{A, \delta})
 &=& \ext{(F(\delta \alpha)) \p, \q}\label{eqnsnd}
\end{eqnarray}
Equation \eqref{eqnfst} is clear by property of $\theta_{A[\delta], \alpha}$ and construction of $\substopen{(F\alpha)}{\theta_{A, \delta}}$,
while equation \eqref{eqnsnd} is a consequence of Lemma \ref{equiv_theta_appendix}.

It remains to prove that $\theta$ satisfies the naturality condition. Let $f:A \to B$ be a morphism in $\indexed{T}(\Gamma)$. We need to establish the following equality:
\[
\indexed{\sigma}_\Delta(\indexed{T}(\delta)(f)) \theta_{A, \delta} = \theta_{B, \delta} \indexed{T'}(F\delta){(\indexed{\sigma}_\Gamma(f))}
\]
It follows from the fact that both sides of this equation make the two triangles commute in the following diagram, which is a pullback diagram because $F$
preserves finite limits.
\[
\xymatrix@C=80pt{
F\Delta\cext \sigma_\Gamma(A)[F\delta]
        \ar@/^1.3pc/[drr]^{\indexed{\sigma}_\Gamma(f)\ext{(F\delta)\p, \q}}
        \ar@/_/[ddr]_{\p}
        \ar@{.>}[dr]\\
&F\Delta\cext \sigma_\Delta(B[\delta])
        \ar[r]^{\rho_{\Gamma, B} F(\ext{\delta \p, \q}) \rho_{\Delta, B[\delta]}^{-1}}
        \ar[d]^\p&
F\Gamma\cext \sigma_\Gamma(B)
        \ar[d]^\p\\
&F\Delta\ar[r]_{F\delta}&
F\Gamma
}
\]
The proof that the two triangles commute only involves the definition of $\theta_{A, \delta}$ and $\theta_{B, \delta}$, along with
manipulation of cwf combinators. This ends the proof that $(F, \sigma)$ is a weak cwf-morphism.
\end{proof}

\begin{proposition}
If $F: \C \to \D$ preserves finite limits, then $\sigma_F$ preserves democracy.
\end{proposition}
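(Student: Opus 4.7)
The plan is to exploit the fact that in the Bénabou-Hofmann cwf $(\C, T_\C)$ built by $H$, the democratic structure is essentially trivial: $\bar{\Gamma}$ is the functorial family $\hat{\subst{}}$ generated by the terminal map $\subst{}: \Gamma \to \nilc$, so $\nilc \cext \bar{\Gamma} = \dom(\hat{\subst{}}(\id_\nilc)) = \Gamma$ on the nose, and the canonical isomorphism $\gamma_\Gamma: \Gamma \to \nilc \cext \bar{\Gamma}$ is the identity. By the same reasoning, $\gamma_{F\Gamma} = \id_{F\Gamma}$ in $(\D, T_\D)$. Moreover, as noted in the paragraph on translation of types in the action of $H$ on $1$-cells, $\sigma_F$ preserves context extension on the nose, so $\rho_{\nilc, \bar{\Gamma}} = \id_{F\Gamma}$ as well.

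First, I would unfold the two types appearing in the definition of preservation of democracy. By definition $\sigma_F(\bar{\Gamma}) = \hat{F(\hat{\subst{}}(\id_\nilc))} = \hat{F(\subst{})}$ where $\subst{}: \Gamma \to \nilc$, so $F\nilc \cext \sigma_F(\bar{\Gamma}) = F\Gamma$ and the projection $\p_{\sigma_F(\bar{\Gamma})}$ is $F(\subst{}): F\Gamma \to F\nilc$. On the other side, $\bar{F\Gamma} = \hat{\subst{}}$ with $\subst{}: F\Gamma \to \nilc_\D$, and at the identity the display map $\bar{F\Gamma}[\subst{}](\id_{F\nilc})$ is the chosen pullback of $\subst{}: F\Gamma \to \nilc_\D$ along $\subst{}: F\nilc \to \nilc_\D$. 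Let $P$ denote the apex of this pullback, with projections $\pi_1: P \to F\nilc$ and $\pi_2: P \to F\Gamma$, so that $F\nilc \cext \bar{F\Gamma}[\subst{}] = P$. Invoking the identity $\subst{\delta\p, \q} = \hat{A}(\id, \delta)$ (which expresses the top arrow of the substitution pullback as a cwf-combinator), the arrow $\subst{\subst{}, \q}$ in the democracy diagram is precisely $\pi_2$.

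Next I would define $d_\Gamma: F\Gamma \to P$ via the universal property of $P$, as the unique morphism satisfying $\pi_1 \circ d_\Gamma = F(\subst{})$ and $\pi_2 \circ d_\Gamma = \id_{F\Gamma}$; the required pullback compatibility is automatic from the terminality of $\nilc_\D$. Since $F$ preserves finite limits, in particular the terminal object, $F\nilc$ is terminal in $\D$, whence $P \cong F\nilc \times F\Gamma \cong F\Gamma$ with $\pi_2$ an isomorphism whose inverse is exactly $d_\Gamma$. By construction $\pi_1 \circ d_\Gamma = F(\subst{}) = \p_{\sigma_F(\bar{\Gamma})}$, so $d_\Gamma$ preserves the projection and defines an isomorphism in $\indexed{T_\D}(F\nilc)$. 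The democracy diagram then collapses, after substituting the identities for $\gamma_\Gamma$, $\gamma_{F\Gamma}$, $\rho_{\nilc, \bar{\Gamma}}$, and $F\gamma_\Gamma$, to the equation $\subst{\subst{}, \q} \circ d_\Gamma = \id_{F\Gamma}$, i.e., $\pi_2 \circ d_\Gamma = \id_{F\Gamma}$, which holds by construction.

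The main obstacle is purely bookkeeping: one must carefully identify $\subst{\subst{}, \q}$ with the top arrow $\pi_2$ of the substitution pullback (which requires unfolding the Bénabou-Hofmann definition of context comprehension and substitution extension) and confirm that $\rho_{\nilc, \bar{\Gamma}}$ is indeed the identity (which follows from the specific choice of $\rho$ made in the action of $H$ on $1$-cells). Once these identifications are in place, everything follows from the universal property of the pullback combined with the preservation of the terminal object by $F$.
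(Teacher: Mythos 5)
Your proof is correct and follows essentially the same route as the paper's: both exploit the triviality of the democratic structure in the B\'enabou--Hofmann cwfs ($\nilc\cext\bar{\Gamma}=\Gamma$, $\gamma_\Gamma=\id$), the on-the-nose preservation of comprehension, and $F$'s preservation of the terminal object to define $d_\Gamma$ as the evident mediating isomorphism into the pullback, after which the coherence square collapses. The only cosmetic difference is that the paper keeps $\rho_{\nilc,\bar{\Gamma}}^{-1}$ explicit in the formula $d_\Gamma=\ext{\iota,\q}\rho_{\nilc,\bar{\Gamma}}^{-1}$ rather than taking $\rho$ to be the identity, and leaves the final verification implicit where you spell it out.
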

\begin{proof}
The functor $F$ preserves finite limits, thus it preserves in particular the terminal object: let us
denote by $\iota : \nilc \to F\nilc$ the inverse to the terminal projection. Let us note now that since the two involved cwfs have been built with
Hofmann's construction, their democratic structure is trivial; we have $\nilc \cext \bar{\Gamma} = \Gamma$ and $\gamma_\Gamma = \id$.
In particular, we have $F(\nilc\cext \bar{\Gamma}) = F(\Gamma) = \nilc\cext \bar{F \Gamma}$.
Thus to get preservation of the democratic structure, it is natural to choose:
\[
d_\Gamma = \ext{\iota, \q} \rho_{\nilc, \bar{\Gamma}}^{-1} : \nilc\cext \sigma_{\nilc}(\bar{\Gamma}) \to \nilc\cext \bar{F\Gamma}[\subst{}]
\]
which makes the coherence condition essentially trivial.
\end{proof}

\begin{proposition}
If $(F, \sigma): (\C, T) \to (\D, T')$ such that $(\C, T)$ and $(\D, T')$ supports identity types and $F$ preserves finite limits, then $\sigma$ preserves identity types.
\end{proposition}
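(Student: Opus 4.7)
The plan is to show the required type isomorphism by exhibiting both $F\Gamma\cext \sigma_\Gamma(\Id_A(a,a'))$ and $F\Gamma\cext \Id_{\sigma_\Gamma(A)}(\sigma_\Gamma^A(a),\sigma_\Gamma^A(a'))$ as equalizers of the same pair of morphisms in $\D$, and invoking the uniqueness of equalizers. This mirrors the strategy already sketched for Proposition \ref{preservation_structure} in the main text, but now the equalizer characterization must be transported through the isomorphisms $\rho$ and $\theta$ witnessing pseudo cwf-morphism coherence.

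First, I would recall the elementary fact that in any cwf supporting extensional identity types, $\p_{\Id_A(a,a')} : \Gamma\cext \Id_A(a,a') \to \Gamma$ is an equalizer of the sections $\subst{\id, a}, \subst{\id, a'} : \Gamma \to \Gamma\cext A$: indeed, a morphism $h : \Xi \to \Gamma\cext A$ with $\subst{\id, a}\p h = \subst{\id, a'}\p h$ and $\p h = \p h$ factors uniquely through $\p_{\Id_A(a,a')}$ by the elimination and stability rules. Applied in $(\D, T')$, this gives that $F\Gamma \cext \Id_{\sigma_\Gamma(A)}(\sigma_\Gamma^A(a), \sigma_\Gamma^A(a'))$ is an equalizer of $\subst{\id, \sigma_\Gamma^A(a)}$ and $\subst{\id, \sigma_\Gamma^A(a')}$.

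Next, since $F$ preserves finite limits, applying $F$ to the equalizer diagram in $\C$ produces an equalizer $F(\p_{\Id_A(a,a')})$ of $F(\subst{\id, a})$ and $F(\subst{\id, a'})$ in $\D$. By Proposition \ref{properties}\emph{(1)} (preservation of substitution extension) together with $\theta_{A, \id} = \id$, we have $F(\subst{\id, a}) = \rho_{\Gamma, A}^{-1}\subst{\id, \sigma_\Gamma^A(a)}$, and likewise for $a'$. Post-composing with the isomorphism $\rho_{\Gamma, A}$, the morphism $\rho_{\Gamma, A} F(\p_{\Id_A(a,a')})$ is therefore an equalizer of $\subst{\id, \sigma_\Gamma^A(a)}$ and $\subst{\id, \sigma_\Gamma^A(a')}$ in $\D$. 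Using preservation of projections by $\rho$ one checks that this arrow factors as $\p \circ \rho_{\Gamma, \Id_A(a,a')}$, so $F\Gamma \cext \sigma_\Gamma(\Id_A(a,a'))$ itself is such an equalizer.

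By uniqueness of equalizers, there is a unique isomorphism $F\Gamma\cext \sigma_\Gamma(\Id_A(a,a')) \iso F\Gamma\cext \Id_{\sigma_\Gamma(A)}(\sigma_\Gamma^A(a),\sigma_\Gamma^A(a'))$ commuting with the projections to $F\Gamma$, which is precisely a type isomorphism in $\indexed{T'}(F\Gamma)$. The main obstacle is bookkeeping: one must carefully unwind the $\rho$'s and $\theta$'s so that the equalizer diagram obtained by applying $F$ matches, on the nose, the equalizer diagram defining identity types in $(\D, T')$. All of this reduces to the substitution-extension formula of Proposition \ref{properties} combined with coherence of $\theta$ at the identity, neither of which presents any genuine difficulty beyond careful combinator manipulation.
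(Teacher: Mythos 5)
Your proposal is correct and follows essentially the same route as the paper's own proof: characterize $\Gamma\cext\Id_A(a,a')$ with its projection as the equalizer of $\subst{\id,a}$ and $\subst{\id,a'}$ (factoring via $\refl$ and uniqueness of identity proofs), transport this along $F$ using left-exactness together with the substitution-extension formula $F(\subst{\id,a}) = \rho_{\Gamma,A}^{-1}\subst{\id,\sigma_\Gamma^A(a)}$, and conclude by uniqueness of equalizers. The only blemishes are notational (the universal property in your first paragraph is misstated, and ``post-composing $F(\p_{\Id_A(a,a')})$ with $\rho_{\Gamma,A}$'' does not typecheck --- what you mean is that the parallel pair is adjusted by the isomorphism $\rho_{\Gamma,A}$ while the equalizer arrow is re-expressed through $\rho_{\Gamma,\Id_A(a,a')}$, which you do say next), so the argument is in substance the paper's.
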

\begin{proof}
Let $A\in Type(\Gamma)$, and $a, a'\in \Gamma\vdash A$, then $\Gamma\cext \Id_A(a,a')$ along with its projection to $\Gamma$ is an equalizer
of $\ext{\id, a}$ and $\ext{\id, a'}$. Indeed if $\delta : \Delta \to \Gamma$ such that $\ext{\id, a}\delta = \ext{\id, a'}\delta$, it is straightforward to
see that the morphism $h = \ext{\delta, \refl_{A[\delta], a[\delta]}}$ typechecks and satisfies $\p h = \delta$. It is also the unique such morphism
because of the uniqueness of identity proofs. But $F$ is left exact and in particular preserves equalizers, hence the pair
$(F(\Gamma\cext \Id_A(a, a')), F(\p))$ defines an equalizer of $F(\ext{\id, a}) = \rho_{\Gamma, A}^{-1} \ext{\id, \sigma_\Gamma^A(a)}$ and
$F(\ext{\id, a'}) = \rho_{\Gamma, A}^{-1} \ext{\id, \sigma_\Gamma^A(a')}$. From this it is obvious that the pair $(F\Gamma\cext \sigma_\Gamma(A),\p)$
is an equalizer of $\ext{\id, \sigma_\Gamma^A(a)}$ and $\ext{\id, \sigma_\Gamma^A(a')}$. But for the same reason as in the beginning of the proof,
the pair $(F\Gamma\cext \Id_{\sigma_\Gamma(A)}(\sigma_\Gamma^A(a), \sigma_\Gamma^A(a')), \p)$ is already such an equalizer, therefore they must
be isomorphic and $(F, \sigma)$ preserves identity types.
\end{proof}

We will now address the corresponding proposition for preservation of $\Pi$-types by $(F, \sigma)$, provided $F$ preserves lcc structure. The proof will
make use of thr following notion.

\begin{definition}
If $(\C, T)$ is a cwf (not necessarily supporting $\Pi$-types), $\Gamma$ a context in $\C$ and $A\in \Ty(\Gamma)$ and $B\in \Ty(\Gamma\cext A)$, let us call \emph{a $\Pi$-object} of
$A$ and $B$ any type $\Pi(A, B)$ such that for all term $c: \Gamma\cext A \vdash B$ there is $\lambda(c) : \Gamma\vdash \Pi(A, B)$, for all $c: \Gamma\vdash \Pi(A, B)$ and
$a:\Gamma\vdash A$ there is $\ap(c, a):\Gamma\vdash B[\subst{\id, a}]$ satisfying the computation rules for $\Pi$-types (but no requirements w.r.t. substitution). Then
it is straightforward to check that just as exponentials $A\tto B$ of $A$ and $B$ are unique up to isomorphism, $\Pi$-objects of $A$ and $B$ are unique up to type isomorphism.

This notion extends to categories by relating them to the cwf built with Hofmann's construction:
if $\C$ is any category with a terminal object, if $g:A\to B$ and $f: B\to C$ are morphisms in $\C$, we will call a $\Pi$-object of $f$ and $g$ any 
morphism $\Pi(f, g): D \to C$ such that $\hat{\Pi(f, g)}$ is a $\Pi$-object of $\hat{f}$ and $\hat{g}$ in $(\C, T_\C)$.
\end{definition}

\begin{lemma}
The two notions of $\Pi$-objects coincide:
if $(\C, T)$ is a cwf, $\Gamma$ a context in $\C$ and $A\in \Ty(\Gamma)$ and $B\in \Ty(\Gamma\cext A)$, then a type $C$ is a $\Pi$-object of $A$ and $B$
if and only if $\p_C$ is a $\Pi$-object of $\p_A$ and $\p_B$.
\label{pi_objects_appendix}
\end{lemma}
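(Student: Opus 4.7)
The plan is to exhibit a canonical bijection between the term-sets involved on each side and then transport $\Pi$-structure along it. For any $A\in \Ty(\Gamma)$ in the cwf $(\C,T)$, the context $\Gamma\cext A$ coincides as an object of $\C$ with the Hofmann-style context extension $\Gamma\cext \hat{\p_A}$ in $(\C,T_\C)$, since by construction $\dom\hat{\p_A}(\id)=\dom\p_A=\Gamma\cext A$ and we may choose $\hat{\p_A}(\id)=\p_A$. A term of $\hat{\p_A}$ in $(\C,T_\C)$ is then by definition a section of $\p_A$, and such sections correspond bijectively, via the universal property of context comprehension in $(\C,T)$, to terms $a:\Gamma\vdash A$ by $a\mapsto \subst{\id,a}$ and $s\mapsto \q[s]$. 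I write $\Phi_A$ for this bijection, and similarly $\Phi_B,\Phi_C$ for the analogous bijections attached to $B$ and $C$.

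Next I would extend $\Phi$ to substituted types at specific elements. Given $a:\Gamma\vdash A$, the morphisms $\p_{B[\subst{\id,a}]}$ in $(\C,T)$ and $\hat{\p_B}(\subst{\id,\Phi_A(a)})$ in $(\C,T_\C)$ are both chosen pullbacks of $\p_B$ along $\subst{\id,a}:\Gamma\to\Gamma\cext A$ and are therefore canonically isomorphic in $\C/\Gamma$. Passing to sections yields a bijection $\Psi_{B,a}$ between terms of $B[\subst{\id,a}]$ in $(\C,T)$ and terms of $\hat{\p_B}[\subst{\id,\Phi_A(a)}]$ in $(\C,T_\C)$, compatible with $\Phi_B$ in the sense that both restrict to the same map on sections of $\p_B$.

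With these bijections I would, in the forward direction, transport any $\Pi$-structure $\lambda^T,\ap^T$ on $C$ to a $\Pi$-structure on $\hat{\p_C}$ by conjugation, namely $\lambda^{T_\C}(c')=\Phi_C(\lambda^T(\Phi_B^{-1}(c')))$ and $\ap^{T_\C}(d',a')=\Psi_{B,\Phi_A^{-1}(a')}(\ap^T(\Phi_C^{-1}(d'),\Phi_A^{-1}(a')))$; the reverse direction is entirely symmetric, using $\Phi^{-1}$ and $\Psi^{-1}$. The $\eta$-rule in the target cwf reduces to the $\eta$-rule in the source cwf, using that Hofmann-style substitution along $\p$ and $\q$ is implemented by the very pullback squares prescribed by the cwf axioms.

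The main obstacle is the $\beta$-rule: its right-hand side $c'[\subst{\id,a'}]$ in $(\C,T_\C)$ is defined by Hofmann as a unique mediating morphism into a specific pullback, whereas the transported $\beta$-equality from $(\C,T)$ produces $c^\flat[\subst{\id,a^\flat}]$, with $c^\flat=\Phi_B^{-1}(c')$ and $a^\flat=\Phi_A^{-1}(a')$, as the cwf-substitution in $(\C,T)$. To conclude I would check that $\Psi_{B,a^\flat}$ identifies the two, which reduces to exhibiting both terms as the unique mediating arrow into the same pullback of $\p_B$ along $\subst{\id,a^\flat}$; this follows by a direct diagram chase from the universal properties of the pullbacks involved.
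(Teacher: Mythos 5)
Your proposal is correct and follows essentially the same route as the paper, whose proof consists of the single remark that the claim is ``obvious by the correspondence between terms of type $A$ and sections of $\p_A$.'' You have simply made that correspondence explicit (the bijections $\Phi$ and $\Psi$, and the verification that the $\beta$- and $\eta$-equations transport along them via the uniqueness of mediating arrows into the relevant pullbacks), which is exactly the content the paper leaves implicit.
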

\begin{proof}
Obvious by the correspondence between terms of type $A$ and sections of $\p_A$.
\end{proof}

\begin{lemma}
If $\C$ is any category with a terminal object, if $g:A\to B$ and $f: B\to C$ are morphisms in $\C$, then there is only one $\Pi$-object
of $f$ and $g$ up to isomorphism in $\C/C$.
\label{pi_unicity_appendix}
\end{lemma}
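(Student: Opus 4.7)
The plan is to reduce the statement to the cwf-level uniqueness of $\Pi$-objects in $(\C, T_\C)$, which was noted as the analogue of uniqueness of exponentials in the definition just above. Given two $\Pi$-objects $\Pi_1(f,g): D_1 \to C$ and $\Pi_2(f,g): D_2 \to C$ of $f$ and $g$, the functorial families $\widehat{\Pi_1(f,g)}$ and $\widehat{\Pi_2(f,g)}$ are by definition $\Pi$-objects of $\hat{f}$ and $\hat{g}$ in the cwf $(\C, T_\C)$, and the goal is to promote any type isomorphism between them into an iso $D_1 \iso D_2$ in $\C/C$.

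First I would establish the cwf-level claim in detail. Given two $\Pi$-objects $P_1, P_2 \in \Ty(\Gamma)$ of $A \in \Ty(\Gamma)$ and $B \in \Ty(\Gamma \cext A)$, the $\lambda_i/\ap_i$ combinators give a bijection between sections of $P_1$ and sections of $P_2$: a term $c: \Gamma \vdash P_1$ is sent to $\lambda_2(\ap_1(c[\p], \q)) : \Gamma \vdash P_2$, and back with indices swapped. Via the universal property of context comprehension applied at $\Gamma \cext P_1$ and $\Gamma \cext P_2$, this promotes to mutually inverse morphisms in $\indexed{T}(\Gamma)$ between $P_1$ and $P_2$ that preserve $\p$; the $\beta$- and $\eta$-equations in the definition of $\Pi$-object are exactly what make them inverse to one another.

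Applying this to $\widehat{\Pi_1(f,g)}$ and $\widehat{\Pi_2(f,g)}$ gives an isomorphism $\theta$ between them in $\indexed{T_\C}(C)$, that is, a morphism $\theta: C \cext \widehat{\Pi_1(f,g)} \to C \cext \widehat{\Pi_2(f,g)}$ with $\p \theta = \p$. But in the Hofmann cwf $C \cext \hat{h} = \dom(\hat{h}(\id_C)) = \dom(h)$ with projection $\hat{h}(\id_C) = h$ for any $h: D \to C$, so unfolding definitions $\theta$ is precisely an arrow $D_1 \to D_2$ satisfying $\Pi_2(f,g) \circ \theta = \Pi_1(f,g)$, i.e., an arrow in $\C/C$; the inverse $\theta^{-1}$ supplies the inverse arrow. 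I do not anticipate a serious obstacle: the proof is essentially pure unfolding of definitions, and the only non-bookkeeping step is the exponential-style $\lambda/\ap$ bijection, where some care is needed to weaken correctly using $\p$ and $\q$ so that everything typechecks under the weak hypothesis of being a $\Pi$-object (stability of the combinators under substitution is not assumed).
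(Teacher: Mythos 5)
Your overall strategy matches the paper's (whose proof of this lemma is literally the one-line remark that it ``mimics the proof of uniqueness of exponential objects''), and your reduction between type isomorphisms of $\hat{P_1},\hat{P_2}$ in $(\C,T_\C)$ and isomorphisms $D_1\iso D_2$ in $\C/C$ is correct and purely definitional. The gap is in the middle step. What you actually construct is a bijection between the sets of \emph{terms} $\Gamma\vdash P_1$ and $\Gamma\vdash P_2$ (i.e., between sections of $D_1\to C$ and of $D_2\to C$), and you then assert that ``via the universal property of context comprehension'' this promotes to mutually inverse morphisms $\Gamma\cext P_1\to\Gamma\cext P_2$ over $\Gamma$. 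It does not: the universal property of comprehension turns a \emph{term} $\Gamma\cext P_1\vdash P_2[\p]$ into such a morphism, not a bijection of term sets over the base. A bijection between sections over $C$ cannot by itself determine a display map up to isomorphism in $\C/C$ --- over a two-point base in $\Set$, objects with fibres of sizes $(2,3)$ and $(3,2)$ have equinumerous sets of sections but are not isomorphic in the slice --- so the promotion step is exactly where the content of the lemma lives, and it is missing.

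To mimic the exponential argument faithfully you must transpose the \emph{evaluation at the generic element}: form $\ap_1(\q[\p],\q)$ in context $\Gamma\cext P_1\cext A[\p]$ and apply $\lambda_2$ \emph{in context} $\Gamma\cext P_1$, obtaining a term $\Gamma\cext P_1\vdash P_2[\p]$ and hence the morphism $\subst{\p,\lambda_2(\ap_1(\q[\p],\q))}$; the $\beta$/$\eta$ laws then show the two transposes are mutually inverse, exactly as $\mathrm{ev}_1$ transposes to $E_1\to E_2$ for exponentials. This requires the $\lambda_i/\ap_i$ combinators for the substitution instances $P_i[\p]$, $A[\p]$, $B[\ldots]$ over $\Gamma\cext P_1$, which is strictly more than the single-context definition of ``$\Pi$-object'' literally supplies. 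In the intended application this is harmless, since the $\Pi$-objects in question come from the lccc adjunction $f^*\dashv\Pi_f$ (Lemma~\ref{pi_obj_pres_appendix}) and therefore carry the full natural universal property; but your closing caveat about ``weakening correctly using $\p$ and $\q$'' treats as bookkeeping what is in fact the crux, and the map you actually wrote down, $c\mapsto\lambda_2(\ap_1(c[\p],\q))$, still only acts on terms over $\Gamma$ and so does not close the gap.
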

\begin{proof}
The proof exactly mimics the proof of uniqueness of exponential objects.
\end{proof}

\begin{lemma}
If $\C$ and $\C'$ are lcccs and $F: \C \to \C'$ is a functor preserving finite limits, then if $F$ preserves the lcc structure, then it preserves $\Pi$-objects.
\label{pi_obj_pres_appendix}
\end{lemma}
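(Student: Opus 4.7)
The plan is to reduce preservation of $\Pi$-objects to preservation of the right adjoints $\Pi_f$ underlying the local cartesian closed structure, together with uniqueness of $\Pi$-objects up to isomorphism (Lemma \ref{pi_unicity_appendix}). The intermediate observation I will lean on is that Hofmann's construction of $\Pi$-types in $(\C, T_\C)$ is built pointwise out of the adjoints $\Pi_f$.

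First I would unfold the definition. A $\Pi$-object of morphisms $g:A\to B$ and $f:B\to C$ in $\C$ is, by definition, a $\Pi(f,g):D\to C$ such that $\hat{\Pi(f,g)}$ is a $\Pi$-object of $\hat f$ and $\hat g$ in $(\C,T_\C)$. From the construction of $\Pi$-types in Hofmann's cwf recalled earlier, the functorial family $\Pi(\hat f,\hat g)$ evaluated at $\id$ is precisely $\Pi_f(g)$, where $\Pi_f$ is the right adjoint to $f^*$ supplied by the lcc structure of $\C$. By Lemma \ref{pi_objects_appendix} this says $\Pi_f(g)$ is itself a $\Pi$-object of $f$ and $g$ in $\C$, and by Lemma \ref{pi_unicity_appendix} any $\Pi$-object of $f$ and $g$ is isomorphic to it in $\C/C$.

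Next I would exploit the hypotheses on $F$. Preservation of lcc structure means that for each $f$ the canonical Beck--Chevalley comparison is an isomorphism $F(\Pi_f(g))\iso \Pi_{Ff}(Fg)$ in $\C'/FC$. Re-applying the previous paragraph to $\C'$, the right-hand side is a $\Pi$-object of $Ff$ and $Fg$, and hence so is $F(\Pi_f(g))$. Given now an arbitrary $\Pi$-object $\Pi(f,g)$ in $\C$, uniqueness yields an iso $\Pi(f,g)\iso \Pi_f(g)$ in $\C/C$, which $F$ carries to an iso in $\C'/FC$; composing with the above gives $F(\Pi(f,g))\iso \Pi_{Ff}(Fg)$, showing $F(\Pi(f,g))$ is a $\Pi$-object of $Ff$ and $Fg$, as required.

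The hard part will be the bookkeeping in the first paragraph: verifying that the functorial-family definition of $\Pi$-types in Hofmann's cwf really does compute to $\Pi_f(g)$ at the identity substitution, so that the abstract notion of $\Pi$-object and the lcc right adjoint $\Pi_f$ coincide up to iso in $\C/C$. This falls out of the definition given in the previous subsection on $\Pi$-types, but does require carefully unfolding the hat construction and the Beck--Chevalley isomorphism used there; everything else is a clean application of uniqueness.
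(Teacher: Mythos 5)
Your proof is correct and follows essentially the same route as the paper: exhibit $\Pi_f(g)$ as \emph{the} $\Pi$-object of $f$ and $g$ up to isomorphism (Lemma \ref{pi_unicity_appendix}), then transport it along $F$. The only point the paper spells out that you leave implicit is \emph{why} preservation of lcc structure yields $F(\Pi_f(g)) \iso \Pi_{Ff}(Fg)$: it presents $\Pi_f(g)$ as a pullback of local exponentials in $\C/C$, which a finite-limit- and local-exponential-preserving functor carries to the corresponding pullback in $\C'/FC$, whereas your appeal to a ``canonical Beck--Chevalley comparison'' for $F$ itself would need this same unpacking to be justified from the definition of $1$-cells in $\LCC$.
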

\begin{proof}
Recall that if $\C$ is locally cartesian closed and if $g: A \to B$ and $f: B \to C$ are morphisms in $\C$, then there is a 
morphism $\Pi_{f}(g): - \to \Gamma$, obtained by the following pullback in $\C/C$:
\[
\xymatrix{
\Pi_{f}(g)        	\ar[r]
                        \ar[d]&
(gf)^{f}	   	\ar[d]^{{g}^{f}}\\
1                       \ar[r]^{\Lambda(\id)}&
{f}^{f}
}
\]
this extends to a functor $\Pi_{f}: \C/B \to \C/C$, which is right adjoint to the pullback functor $f^*$.
Exploiting this adjunction, it is straightforward to prove that $\Pi_{f}(f)$ is a $\Pi$-object of
$f$ and $g$ in $\C$. By uniqueness, it is \emph{the} $\Pi$-object of $f$ and $g$, up to isomorphism. But since $F$ preserves
lcc structure it preserves pullbacks and local exponentiation, thus it maps (up to isomorphism) this pullback diagram into the following pullback:
\[
\xymatrix{
F(\Pi_{f}(g))           \ar[r]
                        \ar[d]&
(F(g)F(f))^{F(f)}       \ar[d]^{{F(g)}^{F(f)}}\\
1                       \ar[r]^{\Lambda(\id)}&
{F(f)}^{F(f)}
}
\]
So $F(\Pi_{f}(g))$ is as required a $\Pi$-object of $F(f)$ and $F(g)$.
\end{proof}

\begin{proposition}
If $(F, \sigma): (\C, T) \to (\C', T')$ such that $(\C, T)$ and $(\C', T')$ supports $\Pi$-types and $F$ preserves lccc structure, then $\sigma$ preserves $\Pi$-types.
\end{proposition}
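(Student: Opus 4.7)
The strategy mirrors the one used for identity types: invoke the uniqueness of $\Pi$-objects (Lemma \ref{pi_unicity_appendix}), the fact that lccc-preserving functors preserve them (Lemma \ref{pi_obj_pres_appendix}), and the correspondence between $\Pi$-objects of types and $\Pi$-objects of their display maps (Lemma \ref{pi_objects_appendix}). The plan is to exhibit both $\sigma_\Gamma(\Pi(A,B))$ and $\Pi(\sigma_\Gamma(A), \sigma_{\Gamma\cext A}(B)[\rho_{\Gamma,A}^{-1}])$ as $\Pi$-objects of the same pair of display maps (up to iso) in $\C'/F\Gamma$, and then to conclude by uniqueness.

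First I would observe that, by Lemma \ref{pi_objects_appendix}, $\p_{\Pi(A,B)}$ is a $\Pi$-object of $\p_A$ and $\p_B$ in $\C/\Gamma$. Since $F$ preserves lccc structure, Lemma \ref{pi_obj_pres_appendix} then yields that $F(\p_{\Pi(A,B)})$ is a $\Pi$-object of $F(\p_A)$ and $F(\p_B)$ in $\C'/F\Gamma$.

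Next I would transport this statement across the isomorphisms $\rho_{\Gamma, \Pi(A,B)}$, $\rho_{\Gamma, A}$ and $\rho_{\Gamma\cext A, B}$. Using the preservation of projections $F(\p_A) = \p_{\sigma_\Gamma(A)}\rho_{\Gamma,A}$ and $F(\p_B) = \p_{\sigma_{\Gamma\cext A}(B)}\rho_{\Gamma\cext A,B}$, the pair $(F(\p_A), F(\p_B))$ is identified, up to iso in $\C'/F\Gamma$, with the pair $(\p_{\sigma_\Gamma(A)}, \p_{\sigma_{\Gamma\cext A}(B)[\rho_{\Gamma,A}^{-1}]})$: by construction of substitution in $(\C', T')$, the display map $\p_{\sigma_{\Gamma\cext A}(B)[\rho_{\Gamma,A}^{-1}]}$ is the pullback of $\p_{\sigma_{\Gamma\cext A}(B)}$ along $\rho_{\Gamma,A}^{-1}$, and this pullback is precisely witnessed by $\rho_{\Gamma\cext A, B}$ on top of $F(\p_B)$. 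Hence $\p_{\sigma_\Gamma(\Pi(A,B))}$, equivalently $F(\p_{\Pi(A,B)})\rho_{\Gamma, \Pi(A,B)}^{-1}$, is a $\Pi$-object of $(\p_{\sigma_\Gamma(A)}, \p_{\sigma_{\Gamma\cext A}(B)[\rho_{\Gamma,A}^{-1}]})$ in $\C'/F\Gamma$.

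On the other hand, since $(\C', T')$ supports $\Pi$-types, $\p_{\Pi(\sigma_\Gamma(A), \sigma_{\Gamma\cext A}(B)[\rho_{\Gamma,A}^{-1}])}$ is, by Lemma \ref{pi_objects_appendix} again, also a $\Pi$-object of the same pair. Uniqueness (Lemma \ref{pi_unicity_appendix}) then forces the two display maps to be isomorphic in $\C'/F\Gamma$, yielding the desired type isomorphism $\sigma_\Gamma(\Pi(A,B)) \iso \Pi(\sigma_\Gamma(A), \sigma_{\Gamma\cext A}(B)[\rho_{\Gamma,A}^{-1}])$. The hard part will be the transport step: carefully verifying that the $\Pi$-object structure (namely the $\lambda$ and $\ap$ combinators) truly survives the replacement of $(F(\p_A), F(\p_B))$ by $(\p_{\sigma_\Gamma(A)}, \p_{\sigma_{\Gamma\cext A}(B)[\rho_{\Gamma,A}^{-1}]})$. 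This rests on the fact that the chosen pullback defining $\sigma_{\Gamma\cext A}(B)[\rho_{\Gamma,A}^{-1}]$ agrees, up to the canonical mediating iso, with the one provided by $\rho_{\Gamma\cext A, B}$ above $F(\p_B)$, which is a consequence of the naturality and coherence laws for pseudo cwf-morphisms gathered earlier in this appendix.
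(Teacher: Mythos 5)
Your proposal follows essentially the same route as the paper's proof: both exhibit $F(\p_{\Pi(A,B)})$ as a $\Pi$-object of $F(\p_A)$ and $F(\p_B)$ via Lemmas \ref{pi_objects_appendix} and \ref{pi_obj_pres_appendix}, transport along the $\rho$ isomorphisms, and conclude by uniqueness of $\Pi$-objects (Lemma \ref{pi_unicity_appendix}). Your version is in fact slightly more careful on the transport step (and correctly writes $\rho_{\Gamma,A}^{-1}$ in the substitution, matching the definition of preservation of $\Pi$-types, where the appendix proof has a small notational slip).
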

\begin{proof}
Let $\Gamma$ be a context of $\C$, $A\in \Ty(\Gamma)$ and $B\in \Ty(\Gamma\cext A)$, obviously $\Pi(A, B)$ is a $\Pi$-object of $A$ and $B$. Hence, 
$\p_{\Pi(A, B)}$ is a $\Pi$-object of $\p_A$ and $\p_B$ by Lemma \ref{pi_objects_appendix}. But $F$ preserves $\Pi$-objects by Lemma \ref{pi_obj_pres_appendix}, so
$F(\p_{\Pi(A, B)})$ is a $\Pi$-object of $F(\p_A)$ and $F(\p_B)$. But $F(\p_A)$ is isomorphic to $\p_{\sigma_\Gamma(A)}$ (the isomorphism
being $\rho_{\Gamma, A}$) and $F(\p_B)$ is isomorphic to $\p_{\sigma_{\Gamma\cext A}(B)[\rho_{\Gamma, A}]}$ (the isomorphism 
being $\subst{\rho_{\Gamma, A}^{-1}\p, q}\rho_{\Gamma\cext A, B}^{-1}$), therefore $F(\p_{\Pi(A, B)})$ is a $\Pi$-object of $\p_{\sigma_\Gamma(A)}$ and
$\p_{\sigma_{\Gamma\cext A}(B)[\rho_{\Gamma, A}]}$, hence it must be isomorphic to $\p_{\Pi(\sigma_\Gamma(A), \sigma_{\Gamma\cext A}(B)[\rho_{\Gamma, A}])}$
by Lemma \ref{pi_unicity_appendix}, so we have the required isomorphism $F(\Gamma\cext \Pi(A, B)) \to F\Gamma\cext \Pi(\sigma_\Gamma(A), \sigma_{\Gamma\cext A}(B)[\rho_{\Gamma, A}])$.
\end{proof}

\subsection{Image of $2$-cells}

\begin{lemma}[Completion of pseudo cwf-transformations]
Suppose $(F, \sigma)$ and $(G, \tau)$ are pseudo cwf-morphisms from  $(\C, T)$ to $(\C', T)$  such that $F$ and $G$ preserve finite limits and $\phi: F\natto G$
is a natural transformation, then there exists a family of morphisms $(\psi_\phi)_{\Gamma, A} : \sigma_\Gamma(A) \to \tau_\Gamma(A)[\phi_\Gamma]$
such that $(\phi, \psi_\phi)$ is a pseudo cwf-transformation from $(F, \sigma)$ to $(G, \tau)$.
%
\label{completion_transformations_appendix}
\end{lemma}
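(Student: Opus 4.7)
The plan is to define $\psi_{\Gamma, A}$ by invoking the universal property of the context comprehension $F\Gamma \cext \tau_\Gamma(A)[\phi_\Gamma]$ in the cwf $(\C', T')$, and then to verify that the pair $(\phi, \psi_\phi)$ satisfies the naturality and coherence requirements from the definition of a pseudo cwf-transformation. Concretely, I would first assemble the composite
$$\xi_{\Gamma, A} \;=\; \rho'_{\Gamma, A} \, \phi_{\Gamma \cext A} \, \rho_{\Gamma, A}^{-1} \;:\; F\Gamma \cext \sigma_\Gamma(A) \;\longrightarrow\; G\Gamma \cext \tau_\Gamma(A),$$
and observe, using naturality of $\phi$ at the projection $\p_A$ together with the identity $\p \rho = F \p$ (and the analogous one for $\rho'$), that $\p \, \xi_{\Gamma, A} = \phi_\Gamma \, \p$. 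Hence $(\p, \xi_{\Gamma, A})$ forms a cone over the substitution pullback defining $\tau_\Gamma(A)[\phi_\Gamma]$, producing a unique mediating morphism which in combinator notation is exactly $\psi_{\Gamma, A} = \subst{\p, \q[\xi_{\Gamma, A}]}$. Since $\p \, \psi_{\Gamma, A} = \p$ by construction, $\psi_{\Gamma, A}$ is a morphism in $\indexed{T'}(F\Gamma)$ as required.

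\textbf{Naturality in $A$.} For $f : A \to B$ in $\indexed{T}(\Gamma)$, I need the equation $\indexed{T'}(\phi_\Gamma)(\indexed{\tau}_\Gamma(f)) \, \psi_{\Gamma, A} = \psi_{\Gamma, B} \, \indexed{\sigma}_\Gamma(f)$. Both sides are morphisms into the pullback $F\Gamma \cext \tau_\Gamma(B)[\phi_\Gamma]$, so I would check equality by comparing the two projections. The $\p$-projection coincides automatically since both sides live in $\indexed{T'}(F\Gamma)$. For the other projection, after unfolding $\indexed{\sigma}_\Gamma(f) = \rho_{\Gamma, B} \, F(f) \, \rho_{\Gamma, A}^{-1}$ and $\indexed{\tau}_\Gamma(f) = \rho'_{\Gamma, B} \, G(f) \, \rho'^{-1}_{\Gamma, A}$, the computation reduces to the naturality square of $\phi$ at the morphism $f : \Gamma \cext A \to \Gamma \cext B$ in $\C$.

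\textbf{Coherence.} Given $\delta : \Delta \to \Gamma$ and $A \in \Ty(\Gamma)$, the coherence square requires
$$\psi_{\Delta, A[\delta]} \, \theta_{A, \delta} \;=\; \indexed{T'}(\phi_\Delta)(\theta'_{A, \delta}) \; \indexed{T'}(F\delta)(\psi_{\Gamma, A}).$$
Again both sides land in the substitution pullback $F\Delta \cext \tau_\Delta(A[\delta])[\phi_\Delta]$, and again the $\p$-projection is immediate. For the other projection I would unfold the definitions of $\theta_{A, \delta}$ and $\theta'_{A, \delta}$ as mediating arrows (via Lemma~\ref{completion_cwfmorphisms_appendix}), together with the formula for $\psi$, reducing the remaining equality to the coincidence of two maps into the pullback defining $G\Delta \cext \tau_\Gamma(A)[G\delta]$. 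That coincidence follows from the naturality of $\phi$ applied to the substitution extension $\subst{\delta \p, \q} : \Delta \cext A[\delta] \to \Gamma \cext A$ in $\C$.

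\textbf{The main obstacle.} The coherence square is the delicate point because it entwines three distinct universal constructions — $\theta_{A, \delta}$, $\theta'_{A, \delta}$, and $\psi_{\Gamma, A}$ — each defined as a mediating morphism of its own pullback. My strategy is to keep every computation inside the target pullback $F\Delta \cext \tau_\Delta(A[\delta])[\phi_\Delta]$, establish the required equality by post-composing with the canonical map $\subst{\phi_\Delta \p, \q}$ into $G\Delta \cext \tau_\Delta(A[\delta])$, and peel off the $\rho$'s and $\theta$'s step by step using the projection-preservation identities of Proposition~\ref{properties}. The hypothesis that $G$ preserves finite limits is essential here: it guarantees that the squares obtained by applying $G$ to substitution pullbacks in $\C$ remain pullbacks in $\C'$, which is exactly what is needed to apply the final universal property and conclude.
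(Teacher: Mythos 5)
Your proposal is correct and follows essentially the same route as the paper: you define $\psi_{\Gamma,A}=\subst{\p,\q[\rho'_{\Gamma,A}\,\phi_{\Gamma\cext A}\,\rho_{\Gamma,A}^{-1}]}$ exactly as the paper does, and you verify the coherence square by comparing both paths against the projections of a (composite) pullback whose right-hand square is a pullback because $G$ preserves finite limits, with naturality of $\phi$ at $\subst{\delta\p,\q}$ doing the real work. The only cosmetic differences are that you phrase the unfolding of $\theta$, $\theta'$ via the mediating-arrow characterization of Lemma~\ref{completion_cwfmorphisms_appendix} where the paper invokes Lemma~\ref{equiv_theta_appendix} directly, and that you also spell out naturality in $A$, which the paper leaves implicit.
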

\begin{proof}
We set $\psi_{\Gamma, A} = \subst{\p, \q[\rho'_{\Gamma, A}\phi_{\Gamma\cext A}\rho_{\Gamma, A}^{-1}]}: F\Gamma\cext \sigma_\Gamma A\to F\Gamma\cext \tau_\Gamma(A)[\phi_\Gamma]$.
To check the coherence law, consider the following composition of pullback squares.
\[
\xymatrix@C=40pt{
F\Delta\cext \tau_\Delta(A[\delta])[\phi_\Delta]
       \ar[r]^{\subst{\phi_\Delta \p, \q}}
       \ar[d]^{\p}&
G\Delta\cext \tau_\Delta(A[\delta])
       \ar[rr]^{\rho_{\Gamma, A}' G(\subst{\delta \p, \q}) (\rho_{\Delta, A[\delta]}')^{-1}}
       \ar[d]^\p&&
G\Gamma\cext \tau_\Gamma(A)
       \ar[d]^\p\\
F\Delta        \ar[r]_{\phi_\Delta}&
G\Delta        \ar[rr]_{G\delta}&&
G\Gamma
}
\]
The two paths $\indexed{T'}(\phi_\Delta)(\theta'_{A, \delta})\indexed{T'}(F\delta)(\psi_{\Gamma, A})$ and $\psi_{\Delta, A[\delta]}\theta_{A, \delta}$ of the coherence
diagram behave in the same way with respect to this pullback. Here is the calculation for the first path of the coherence diagram:
\begin{eqnarray*}
&&\rho_{\Gamma, A}' G(\subst{\delta \p, \q}) (\rho_{\Delta, A[\delta]}')^{-1} \subst{\phi_\Delta \p, \q} \indexed{T'}(\phi_\Delta)(\theta'_{A, \delta})\indexed{T'}(F\delta)(\psi_{\Gamma, A})\\
&=& \subst{(G\delta)\p, \q} {\theta'_{A, \delta}}^{-1} \subst{\phi_\Delta \p, \q} \indexed{T'}(\phi_\Delta)(\theta'_{A, \delta})\indexed{T'}(F\delta)(\psi_{\Gamma, A})\\
&=& \subst{(G\delta)\p, \q} {\theta'_{A, \delta}}^{-1} \subst{\phi_\Delta \p, \q} \subst{\p, \q[\theta'_{A, \delta}\subst{\phi_\Delta\p, \q}} \indexed{T'}(F\delta)(\psi_{\Gamma, A})\\
&=& \subst{(G\delta)\p, \q} {\theta'_{A, \delta}}^{-1} \subst{\phi_\Delta p, q[\theta'_{A, \delta}\subst{\phi_\Delta\p, \q}} \indexed{T'}(F\delta)(\psi_{\Gamma, A})\\
&=& \subst{(G\delta)\p, \q} {\theta'_{A, \delta}}^{-1} \subst{\p\theta'_{A, \delta}\subst{\phi_\Delta\p, \q}, q[\theta'_{A, \delta}\subst{\phi_\Delta\p, \q]}} \indexed{T'}(F\delta)(\psi_{\Gamma, A})\\
&=& \subst{(G\delta)\p, \q} \subst{\phi_\Delta\p, \q}\indexed{T'}(F\delta)(\psi_{\Gamma, A}) \\
&=& \subst{(G\delta)\p, \q} \subst{\phi_\Delta\p, \q} \subst{\p, \q[\psi_{\Gamma, A}\subst{(F\delta)\p, \q}]}\\
&=& \subst{(G\delta)\phi_\Delta\p, \q[\rho_{\Gamma, A}'\phi_{\Gamma\cext A}\rho_{\Gamma, A}^{-1}\subst{(F\delta)\p, \q}]}\\
&=& \subst{\phi_\Gamma (F\delta) \p, \q[\rho_{\Gamma, A}'\phi_{\Gamma\cext A}\rho_{\Gamma, A}^{-1}\subst{(F\delta)\p, \q}]}\\
&=& \subst{\phi_\Gamma \p, \q[\rho_{\Gamma, A}'\phi_{\Gamma\cext A}\rho_{\Gamma, A}^{-1}]}\subst{(F\delta)\p, \q}\\
&=& \rho_{\Gamma, A}' \phi_{\Gamma\cext A} \rho_{\Gamma, A}^{-1} \subst{(F\delta)\p, \q}
\end{eqnarray*}
where we use Lemma \ref{equiv_theta_appendix}, then only the definition of $\psi_{\Gamma, A}$, naturality of $\phi$ and manipulation of cwf combinators. The calculation for the
other path follows:
\begin{eqnarray*}
&&\rho_{\Gamma, A}' G(\subst{\delta \p, \q}) {\rho_{\Delta, A[\delta]}'}^{-1} \subst{\phi_\Delta \p, \q} \psi_{\Delta, A[\delta]}\theta_{A, \delta}\\
&=& \rho_{\Gamma, A}' G(\subst{\delta \p, \q}) {\rho_{\Delta, A[\delta]}'}^{-1} \subst{\phi_\Delta \p, \q} \subst{\p, \q[\rho_{\Delta, A[\delta]}'\phi_{\Delta\cext A[\delta]} \rho_{\Delta, A[\delta]}^{-1}]}\theta_{A, \delta}\\
&=& \rho_{\Gamma, A}' G(\subst{\delta \p, \q}) {\rho_{\Delta, A[\delta]}'}^{-1} \subst{\phi_\Delta \p, \q[\rho_{\Delta, A[\delta]}'\phi_{\Delta\cext A[\delta]} \rho_{\Delta, A[\delta]}^{-1}]} \theta_{A, \delta}\\
&=& \rho_{\Gamma, A}' G(\subst{\delta \p, \q}) {\rho_{\Delta, A[\delta]}'}^{-1} \subst{\p \rho_{\Delta, A[\delta]}'\phi_{\Delta\cext A[\delta]} \rho_{\Delta, A[\delta]}^{-1}, \q[\rho_{\Delta, A[\delta]}'\phi_{\Delta\cext A[\delta]} \rho_{\Delta, A[\delta]}^{-1}]} \theta_{A, \delta}\\
&=& \rho_{\Gamma, A}' G(\subst{\delta \p, \q}) {\rho_{\Delta, A[\delta]}'}^{-1} \rho_{\Delta, A[\delta]}'\phi_{\Delta\cext A[\delta]} \rho_{\Delta, A[\delta]}^{-1}\theta_{A, \delta}\\
&=& \rho_{\Gamma, A}' G(\subst{\delta \p, \q}) \phi_{\Delta\cext A[\delta]} \rho_{\Delta, A[\delta]}^{-1}\theta_{A, \delta}\\
&=& \rho_{\Gamma, A}' \phi_{\Gamma\cext A} F(\subst{\delta \p, \q}) \rho_{\Delta, A[\delta]}^{-1}\theta_{A, \delta}\\
&=& \rho_{\Gamma, A}' \phi_{\Gamma\cext A} \rho_{\Gamma, A}^{-1} \rho_{\Gamma, A} F(\subst{\delta \p, \q}) \rho_{\Delta, A[\delta]}^{-1}\theta_{A, \delta}\\
&=& \rho_{\Gamma, A}' \phi_{\Gamma\cext A} \rho_{\Gamma, A}^{-1} \subst{(F\delta)\p, \q}
\end{eqnarray*}
We have used naturality of $\phi$, preservation of the first projection by $(F, \sigma)$ and $(G, \tau)$ and manipulations on cwf combinators.
\end{proof}

\begin{lemma}
Completion of pseudo cwf-transformations commutes with both notions of composition, \emph{i.e.} if $\phi: F \natto G$ and $\phi' : G\natto H$, then
\begin{eqnarray*}
(\phi', \psi_{\phi'})(\phi, \psi_\phi) &=& (\phi'\phi, \psi_{\phi'\phi})\\
(\phi, \psi_\phi)\star 1 &=& (\phi\star 1, \psi_{\phi\star 1})\\
1\star (\phi, \psi_\phi) &=& (1\star \phi, \psi_{1\star \phi})\\
(\phi', \psi_{\phi'})\star(\phi, \psi_\phi)&=& (\phi'\star\phi, \psi_{\phi'\star\phi})
\end{eqnarray*}
whenever these expressions typecheck.
\label{comm_completion_appendix}
\end{lemma}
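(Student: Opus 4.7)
The plan is to verify each of the four equations by a direct calculation, exploiting the explicit formula
\[
(\psi_\phi)_{\Gamma, A} \;=\; \subst{\p,\ \q[\rho'_{\Gamma, A}\,\phi_{\Gamma\cext A}\,\rho_{\Gamma, A}^{-1}]}
\]
provided by Lemma \ref{completion_transformations_appendix}. In each equation the first component is an identity in $\Cat$ (vertical composition of natural transformations is associative; whiskering by the identity is the identity; horizontal composition is defined as usual), so the entire content is to check equality of the second components, which are families of morphisms in $\indexed{T'}(F\Gamma)$ of the shape $\subst{\p, \q[\,\cdot\,]}$.

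For the vertical composition case, I would fix $\phi : F \natto G$ and $\phi' : G \natto H$ with families $\rho^F, \rho^G, \rho^H$, and compute the composite of $\psi_\phi$ and $\psi_{\phi'}$ (the latter suitably re-indexed along $\phi_\Gamma$) using the cwf equations $\p \subst{\gamma, a} = \gamma$ and $\q[\subst{\gamma, a}] = a$. The point is that the two intermediate occurrences of $\rho^G$ and $\rho^{G,-1}$ appear adjacent after simplification and cancel, leaving $\subst{\p, \q[\rho^H (\phi' \phi)_{\Gamma\cext A} (\rho^F)^{-1}]}$, which is exactly $\psi_{\phi'\phi}$. For whiskering with identity, I would exploit that the identity pseudo cwf-morphism has $\rho_{\Gamma, A} = \id$ and $\sigma = \mathrm{Id}$ on types, so one side collapses to the other after trivial simplification.

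The horizontal composition equation is the main obstacle. Here I would first recall from the composition of pseudo cwf-morphisms that the family $\rho$ for a composite $(G,\tau)(F,\sigma)$ is given by $\rho^{GF}_{\Gamma, A} = \rho^G_{F\Gamma, \sigma_\Gamma A}\,G(\rho^F_{\Gamma, A})$. Unfolding $\psi_{\phi' \star \phi}$ along this formula, and unfolding $(\phi' \star \phi)_{\Gamma \cext A}$ as either $\phi'_{G(\Gamma\cext A)} G(\phi_{\Gamma\cext A})$ or $H(\phi_{\Gamma\cext A}) \phi'_{F(\Gamma\cext A)}$ (depending on the direction chosen), I would then unfold the horizontal composite $(\phi', \psi_{\phi'}) \star (\phi, \psi_\phi)$ in $\Ind$ and match term by term. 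Naturality of $\phi'$ applied to the morphism $\rho^F_{\Gamma, A}$ of $\C_1$ will be the crucial ingredient that bridges the two expressions; the rest is basic manipulation of cwf combinators.

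In all four cases the underlying pattern is the same: unfold the definition of $\psi$, reduce the resulting $\subst{\p, \q[-]}$ expressions using the computation rules for $\p$ and $\q$, and observe that either the inner $\rho$s cancel in telescoping fashion (vertical case) or naturality of the natural transformations across the $\rho$ of one of the pseudo cwf-morphisms identifies the two sides (horizontal case). No deep coherence arguments are needed beyond the formulas already established for composition of pseudo cwf-morphisms, which is why the authors simply record this as ``direct calculations''.
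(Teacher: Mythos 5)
Your plan matches the paper's proof: the appendix establishes all four identities by exactly this kind of direct unfolding of $\psi_{\phi}=\subst{\p,\q[\rho'\,\phi_{\Gamma\cext A}\,\rho^{-1}]}$, with the vertical case following from telescoping cancellation of the middle $\rho$'s, the whiskering cases from the observation that completion of an identity $2$-cell is the identity, and the horizontal case from naturality of the outer natural transformation at the $\rho$-isomorphisms together with cwf-combinator manipulation. The only ingredient you leave implicit is that unfolding the horizontal composite of modifications in $\Ind$ produces a reindexing isomorphism $\theta'$ in the middle of the composite, which the paper eliminates via the identity $F(\subst{\delta\p,\q})=\rho_{\Gamma,A}^{-1}\subst{(F\delta)\p,\q}\,\theta_{A,\delta}^{-1}\,\rho_{\Delta,A[\delta]}$ (itself a consequence of preservation of substitution extension), after which the calculation proceeds exactly as you describe.
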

\begin{proof}
The first equality is just a straightforward verification, and the second and third are trivial given the definition of $\psi_1$. The fourth though,
requires a more involved calculation with arguments really similar to those used to prove Lemma \ref{completion_transformations_appendix}.
We only detail the third case. Imagine we have the following situation:
\[
\xymatrix{
(\C, T)	\ar@/^/[r]^{(F, \sigma)}
	\ar@/_/[r]_{(G, \tau)}&
(\C', T')	\ar@/^/[r]^{(F', \sigma')}
	        \ar@/_/[r]_{(G', \tau')}&
(\C'', T'')
}
\]
Let us call $\theta$ and $\rho$ the components of $(F, \sigma)$, $\theta'$ and $\rho'$ the components of $(F', \sigma')$, $t$ and $r$ the components
of $(G, \tau)$ and $t'$ and $r'$ the components of $(G', \tau')$. Let us also consider natural transformations $\phi: F \natto G$ and $\phi': F'\natto G'$.
Let us recall that the vertical composition of pseudo cwf-transformations follow those of $2$-cells in the $2$-category of indexed categories over
arbitrary bases, which means $(\phi, \psi_{\phi})\star(\phi', \psi_{\phi'}) = (\phi\star \phi', m)$, where $m_{\Gamma, A}$ is obtained by:
\[
\xymatrix{
\sigma_{F\Gamma}'(\sigma_\Gamma A)
	\ar[r]^{\indexed{\sigma'}_{F\Gamma}((\psi_\phi)_{\Gamma, A})}&
\sigma_{F\Gamma}'(\tau_\Gamma A[\phi_\Gamma])
	\ar[r]^{{\theta_{\tau_\Gamma A, \phi_\Gamma}'}^{-1}}&
\sigma_{G\Gamma}'(\tau_\Gamma A)[F'\phi_\Gamma]
	\ar[r]^{\indexed{T''}(F'\phi_\Gamma)((\psi_{\phi'})_{G\Gamma, \tau_\Gamma A})}&
\tau_{G\Gamma}'(\tau_\Gamma A)[\phi_{G\Gamma}'F'(\phi_\Gamma)]
}
\]
which the following calculation relates to $(\psi_{\phi\star \phi'})_{\Gamma, A}$:
\begin{eqnarray*}
m_{\Gamma, A} &=& \indexed{T''}(F'\phi_\Gamma)((\psi_{\phi'})_{G\Gamma, \tau_\Gamma A}) {\theta_{\tau_\Gamma A, \phi_\Gamma}'}^{-1} \indexed{\sigma'}_{F\Gamma}((\psi_\phi)_{\Gamma, A})\\
&=& \subst{\p, \q[(\psi_{\phi'})_{G\Gamma, \tau_\Gamma A}\subst{(F'\phi_\Gamma)\p, \q}]} {\theta_{\tau_\Gamma A, \phi_\Gamma}'}^{-1} \rho'_{F\Gamma, \tau_\Gamma A[\phi_\Gamma]} F'((\psi_\phi)_{\Gamma, A}) {\rho'_{F\Gamma, \sigma_\Gamma A}}^{-1}\\
&=& \subst{\p, \q[(\psi_{\phi'})_{G\Gamma, \tau_\Gamma A}\subst{(F'\phi_\Gamma)\p, \q}{\theta_{\tau_\Gamma A, \phi_\Gamma}'}^{-1} \rho'_{F\Gamma, \tau_\Gamma A[\phi_\Gamma]} F'((\psi_\phi)_{\Gamma, A}) {\rho'_{F\Gamma, \sigma_\Gamma A}}^{-1}]}\\
&=& \subst{\p, \q[(\psi_{\phi'})_{G\Gamma, \tau_\Gamma A}\rho_{G\Gamma, \tau_\Gamma A}' F'(\subst{\phi_\Gamma \p, \q}) {\rho'_{F\Gamma, \tau_\Gamma A[\phi_\Gamma]}}^{-1} \rho'_{F\Gamma, \tau_\Gamma A[\phi_\Gamma]} F'((\psi_\phi)_{\Gamma, A}) {\rho'_{F\Gamma, \sigma_\Gamma A}}^{-1}]}\\
&=& \subst{\p, \q[(\psi_{\phi'})_{G\Gamma, \tau_\Gamma A}\rho_{G\Gamma, \tau_\Gamma A}' F'(\subst{\phi_\Gamma \p, \q})F'((\psi_\phi)_{\Gamma, A}) {\rho'_{F\Gamma, \sigma_\Gamma A}}^{-1}]}\\
&=& \subst{\p, \q[\subst{\p, \q[r'_{G\Gamma, \tau_\Gamma A} \phi'_{G\Gamma\cext  \tau_\Gamma A} {\rho'_{G\Gamma, \tau_\Gamma A}}^{-1}\rho_{G\Gamma, \tau_\Gamma A}' F'(\subst{\phi_\Gamma \p, \q})F'((\psi_\phi)_{\Gamma, A}) {\rho'_{F\Gamma, \sigma_\Gamma A}}^{-1}]}]}\\
&=& \subst{\p, \q[r'_{G\Gamma, \tau_\Gamma A} \phi'_{G\Gamma\cext  \tau_\Gamma A}F'(\subst{\phi_\Gamma \p, \q})F'((\psi_\phi)_{\Gamma, A}) {\rho'_{F\Gamma, \sigma_\Gamma A}}^{-1}]}\\
&=& \subst{\p, \q[r'_{G\Gamma, \tau_\Gamma A} \phi'_{G\Gamma\cext  \tau_\Gamma A}F'(\subst{\phi_\Gamma \p, \q} r_{\Gamma, A} \phi_{\Gamma\cext A} \rho_{\Gamma, A}^{-1}) {\rho'_{F\Gamma, \sigma_\Gamma A}}^{-1}]}\\
&=& \subst{\p, \q[r'_{G\Gamma, \tau_\Gamma A} \phi'_{G\Gamma\cext  \tau_\Gamma A}F'(r_{\Gamma, A} \phi_{\Gamma\cext A} \rho_{\Gamma, A}^{-1}){\rho'_{F\Gamma, \sigma_\Gamma A}}^{-1}]}\\
&=& \subst{\p, \q[r'_{G\Gamma, \tau_\Gamma A} G'(r_{\Gamma, A}) \phi'_{G(\Gamma\cext A)} F'(\phi_{\Gamma\cext A}) F'(\rho_{\Gamma, A}^{-1}) {\rho'_{F\Gamma, \sigma_\Gamma A}}^{-1}]}\\
&=& \subst{\p, \q[\rho^{G'G}_{\Gamma, A} (\phi \star \phi')_{\Gamma\cext A} {\rho^{F'F}_{\Gamma, A}}^{-1}]}\\
&=& (\psi_{\phi\star \phi'})_{\Gamma, A}
\end{eqnarray*}
We have first unfolded the action of $\indexed{T''}$ and $\indexed{\sigma'}$, then applied Lemma \ref{equiv_theta_appendix}, unfolded the definition of
${\psi_{\phi'}}$ and $\psi_{\phi}$, then used naturality of $\phi'$ and $\phi$. Of course there are a lot of simplification steps, involving
the preservation of the first projection by all the present pseudo cwf-morphisms and manipulation of cwf combinators.
\end{proof}

\begin{proposition}
There are pseudofunctors $H: \FL \to \CWFFL$ and $H: \LCC \to \CWFLCC$ defined by:
  \begin{eqnarray*}
    H \C &=& (\C, T_\C)\\
    H F &=& (F, \sigma_F)\\
    H \phi &=& (\phi, \psi_\phi)
  \end{eqnarray*}
\end{proposition}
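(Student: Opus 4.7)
The plan is to assemble the data already provided by the three preceding subsections and then verify the two pseudofunctor coherence conditions. First I would appeal to Section 5.1 to see that $H\C$ is a democratic cwf supporting $\Sigma$- and extensional identity types (and $\Pi$-types when $\C$ is locally cartesian closed), so $H\C$ lies in $\CWFFL$ (respectively $\CWFLCC$). Next, for a finite-limit-preserving functor $F : \C \to \D$, Lemma~\ref{completion_cwfmorphisms} uniquely extends the candidate data $(\sigma_F, \rho)$ constructed in Section 5.2 into a pseudo cwf-morphism; the isomorphism $\rho_{\Gamma, \hf{A}}$ is in fact the identity because $F(\Gamma\cext \hf{A}) = F\Gamma \cext \sigma_F(\hf{A})$ on the nose, and preservation of democracy, identity types, and $\Pi$-types is then secured via Proposition~\ref{preservation_structure}. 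Finally, for a natural transformation $\phi : F \natto G$, Lemma~\ref{completion_transformations} canonically lifts it to a pseudo cwf-transformation $(\phi, \psi_\phi)$, so all three assignments of $H$ are well-defined.

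Second, I have to produce the structural isomorphisms making $H$ a pseudofunctor rather than a strict 2-functor. As the authors already remark, strict functoriality fails because $\sigma^{GF}$ regenerates an independent functorial family from the single display map $GF(\hf{A}(\id))$, whereas $\sigma^G \sigma^F$ retains the intermediate Hofmann family built over $\D$. However, the two pseudo cwf-morphisms $H(G) \circ H(F) = (GF, \sigma^G\sigma^F)$ and $H(GF) = (GF, \sigma^{GF})$ share the base functor $GF$, so Lemma~\ref{completion_transformations} applied to the identity natural transformation $\id_{GF}$ yields an invertible comparison $2$-cell $\alpha_{F, G} = (\id_{GF}, \psi_{\id_{GF}}) : H(G)H(F) \tto H(GF)$. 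Similarly, $H(\id_\C)$ and the identity pseudo cwf-morphism $\id_{H\C}$ share the base functor $\id_\C$, yielding a unitor $\iota_\C = (\id, \psi_{\id})$ by the same completion, invertible for the same reason.

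Third, I must verify the pseudofunctor coherence diagrams, namely the associativity pentagon relating $\alpha_{G, H} \circ (H \star \alpha_{F, G})$ with $\alpha_{F, HG} \circ (\alpha_{G, H} \star F)$ for composable $F, G, H$, and the two unit triangles involving $\iota$. In each diagram the first component of every $2$-cell is a whiskered identity natural transformation, so at the level of base functors the diagrams commute trivially. The second components are all produced by the completion operation of Lemma~\ref{completion_transformations}, and Lemma~\ref{comm_completion} tells us that completion is strictly compatible with vertical composition, horizontal composition, and identity $2$-cells. Hence both sides of each coherence square simplify to the completion of the same identity natural transformation on the appropriate boundary, giving the required equality formally.

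The main obstacle is the bookkeeping in the pentagon: several of the $\rho$- and $\sigma$-families involved in the $\psi_{\id}$'s on the two sides of the diagram are \emph{a priori} distinct because they arise from different parenthesisations of the composites. Verifying that completion nonetheless collapses them to the same canonical $2$-cell relies entirely on Lemma~\ref{comm_completion}, so no further calculation on types, terms, or the $\theta$-isomorphisms is needed — once the sources and targets of the completions are correctly identified, pseudofunctoriality follows.
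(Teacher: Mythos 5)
Your proposal is correct and follows essentially the same route as the paper: assemble the 0-, 1-, and 2-cell actions from the Bénabou--Hofmann construction and the two completion lemmas, take the compositor to be $(1_{GF},\psi_{1_{GF}})$, and reduce all coherence diagrams to their (identity) base natural transformations via Lemma~\ref{comm_completion}. The only cosmetic difference is that the paper observes $H(\mathrm{Id}_\C)$ is the identity pseudo cwf-morphism on the nose (so the unitor is trivial), whereas you generate it by completion; both yield the same 2-cell.
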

\begin{proof}
First, note that as proved in Lemma \ref{comm_completion_appendix}, $H$ is functorial on $2$-cells. 

For each $\C$ we need an invertible $2$-cell $H_\C : Id_{(\C, T_\C)} \to H(Id_\C)$, this will be the identity $2$-cell since we have in fact
$H(Id_\C) = (Id_\C, \sigma_{Id_\C}) = id_{\C, T_\C}$ by construction of $\sigma_{Id_\C}$.

For each two functors $F: \C\to \D$ and $G: \D\to \E$ we need an isomorphism $H_{F, G} : HG\circ HF \to H(G\circ F)$, natural in $F$ and $G$.
It is given by $H_{F, G} = (1_{GF}, \psi_{1_{GF}})$. The naturality condition amounts to the fact that the following square commutes:
\[
\xymatrix@C=40pt{
(G, \sigma_G)(F, \sigma_F)
	\ar[r]^{(1_{GF}, \psi_{1_{GF}})}
	\ar[d]^{(\phi, \psi_\phi)\star(\phi', \psi_{\phi'})}&
(GF, \sigma_{GF})
	\ar[d]^{(\phi'\star\phi, \psi_{\phi'\star\phi})}\\
(G', \sigma_{G'})(F', \sigma_{F'})
	\ar[r]^{(1_{GF}, \psi_{1_{GF}})}&
(G'F', \sigma_{G'F'})
}
\]
which is a direct consequence of Lemma \ref{comm_completion_appendix}. The coherence laws w.r.t. associativity of composition and identities also
stems from Lemma \ref{comm_completion_appendix}. In fact, Lemma \ref{comm_completion_appendix} implies that to check the validity of any equation involving vertical and
horizontal compositions of pseudo cwf-transformations built with Lemma \ref{completion_transformations_appendix} and identity pseudo cwf-transformations,
it suffices to check the equality of the corresponding base natural transformation, ignoring the modifications.
\end{proof}

\section{Proofs of Section 6}

\begin{definition}[The pseudo cwf-morphism $\eta_{(\C, T)}$]
For each cwf $(\C, T)$, context $\Gamma$ of $\C$ and type $A\in \Ty(\Gamma)$. Consider:
\begin{itemize}
\item The identity functor $Id_\C : \C \to \C$,
\item For each context $\Gamma$ and type $A\in \Ty(\Gamma)$, the functorial family $\sigma_\Gamma(A)$ defined by:
\begin{eqnarray*}
\sigma_\Gamma(A)(\delta) &=& \p_{A[\delta]}\\
\sigma_\Gamma(A)(\delta, \gamma) &=& \subst{\gamma\p, \q}
\end{eqnarray*}
\item The isomorphism $\rho_{\Gamma, A} = \id_{\Gamma\cext A}$.
\end{itemize}
Then by Lemma \ref{completion_cwfmorphisms_appendix}, it completes in an unique way to a pseudo cwf-morphism $\eta_{(\C, T)} : (\C, T) \to (\C, T_\C) = HU((\C, T))$.
\end{definition}

\begin{lemma}[The pseudonatural transformation $\eta$]
The family $\eta_{(\C, T)} : (\C, T) \to HU((\C, T))$ is pseudonatural in $(\C, T)$.
\end{lemma}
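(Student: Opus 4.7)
The plan is, for each pseudo cwf-morphism $(F, \sigma) : (\C, T) \to (\C', T')$ in $\CWFFL$, to exhibit an invertible pseudo cwf-transformation filling the naturality square
\[
\xymatrix@C=40pt{
(\C, T) \ar[r]^{(F, \sigma)} \ar[d]_{\eta_{(\C, T)}} & (\C', T') \ar[d]^{\eta_{(\C', T')}}\\
HU(\C, T) \ar[r]_{HU(F, \sigma)} & HU(\C', T')
}
\]
and then to verify the two coherence laws of a pseudonatural transformation (with respect to identities and to composition of $1$-cells).

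First I would observe that both composites $\eta_{(\C', T')} \circ (F, \sigma)$ and $HU(F, \sigma) \circ \eta_{(\C, T)} = (F, \sigma_F) \circ \eta_{(\C, T)}$ share the base functor $F$, since the base of each $\eta_{(\C, T)}$ is $\mathrm{Id}_{\C}$. The required naturality datum is therefore a pseudo cwf-transformation whose first component may be taken to be the identity natural transformation $1_F$, and Lemma \ref{completion_transformations_appendix} canonically produces a pseudo cwf-transformation $(1_F, \psi_{1_F})$ extending $1_F$. This 2-cell is invertible: its base is an identity, and $\psi_{1_F}$ is built from the $\rho$ isomorphisms of the two pseudo cwf-morphisms being related, so it is componentwise invertible; its inverse is obtained by applying the completion of $1_F$ with the roles of the two $\rho$ families swapped.

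Next I would verify the two pseudonaturality coherence laws. The unit law amounts to saying that the naturality 2-cell attached to the identity 1-cell coincides with the identity 2-cell on $\eta_{(\C, T)}$; both are completions via Lemma \ref{completion_transformations_appendix} of the identity natural transformation on $\mathrm{Id}_{\C}$, hence are equal by the uniqueness part of that lemma. The composition law states that, for composable 1-cells $(F, \sigma)$ and $(G, \tau)$, the naturality 2-cell for their composite agrees (modulo the coherence isomorphisms of the pseudofunctor $HU$, which are themselves of the form $(1, \psi_1)$) with the horizontal pasting of the individual naturality 2-cells. Since every 2-cell appearing in the pasting diagram is a completion of an identity natural transformation, Lemma \ref{comm_completion_appendix} reduces the coherence condition to an equality between base natural transformations, which is trivial.

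The hard part is genuinely only bookkeeping. Once the naturality squares are drawn and the relevant pasting diagrams written out, the combination of Lemmas \ref{completion_transformations_appendix} and \ref{comm_completion_appendix} ensures that all coherence obligations reduce to identities between manifestly equal identity natural transformations on $F$, $G$, and their composites; no further computation involving the $\theta$ or $\rho$ families of the various pseudo cwf-morphisms is required.
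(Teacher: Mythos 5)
Your proposal is correct and follows essentially the same route as the paper: both legs of the naturality square share the base functor $F$, so the filling $2$-cell is obtained by completing the identity natural transformation on $F$ via Lemma \ref{completion_transformations_appendix}, and the pseudonaturality coherence laws reduce to identities of base natural transformations by Lemma \ref{comm_completion_appendix}. Your additional remarks on invertibility and on the unit law are consistent with (and slightly more explicit than) the paper's argument.
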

\begin{proof}
For each pseudo cwf-morphism $(F, \sigma)$, the pseudonaturality square relates two pseudo cwf-morphisms whose base functor
is $F$. Hence, the necessary invertible pseudo cwf-transformation is obtained using Lemma \ref{completion_transformations_appendix} from
the identity natural transformation on $F$. The coherence conditions are straightforward consequences of Lemma \ref{comm_completion_appendix}.
\end{proof}

\begin{definition}[The pseudo cwf-morphism $\epsilon_{(\C, T)}$]
For each cwf $(\C, T)$, context $\Gamma$ of $\C$ and type $A\in \Ty(\Gamma)$. Consider:
\begin{itemize}
\item The identity functor $Id_\C : \C \to \C$,
\item For each context $\Gamma$ and functorial family $\hf{A}:\C/\Gamma\to \arrow{\C}$, the type $\tau_\Gamma(\hf{A})$ defined by:
\[
\tau_\Gamma(\hf{A}) = \Inverse{(\hf{A}(\id_\Gamma))}
\]
\item The isomorphism $\rho_{\Gamma, A}: \dom(\hf{A}) \to \Gamma\cext \Inverse{(\hf{A}(\id_\Gamma))}$ is the isomorphism between $\hf{A}(\id)$ and 
$\p_{(\Inverse{\hf{A}(\id_\Gamma)})}$ in $\C/\Gamma$.
\end{itemize}
Then by Lemma \ref{completion_cwfmorphisms_appendix}, it completes in an unique way to a pseudo cwf-morphism $\epsilon_{(\C, T)} : HU(\C, T) = (\C, T_\C) \to (\C, T)$.
\end{definition}

\begin{lemma}[The pseudonatural transformation $\epsilon$]
The family $\epsilon_{(\C, T)} : HU(\C, T) \to (\C, T)$ is pseudonatural in $(\C, T)$.
\end{lemma}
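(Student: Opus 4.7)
The plan is to mimic exactly the argument just given for $\eta$. For any pseudo cwf-morphism $(F,\sigma):(\C,T)\to (\C',T')$, I need to construct an invertible pseudo cwf-transformation filling the pseudonaturality square
\[
\xymatrix@C=50pt{
HU(\C,T)\ar[r]^{\epsilon_{(\C,T)}}\ar[d]_{HU(F,\sigma)} & (\C,T)\ar[d]^{(F,\sigma)}\\
HU(\C',T')\ar[r]_{\epsilon_{(\C',T')}} & (\C',T')
}
\]
and then verify the coherence laws with respect to identity and composition of $1$-cells.

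First I would observe that both composites $(F,\sigma)\circ \epsilon_{(\C,T)}$ and $\epsilon_{(\C',T')}\circ HU(F,\sigma)$ are pseudo cwf-morphisms from $(\C,T_\C)$ to $(\C',T')$ whose base functor is $F$, since $\epsilon$ is the identity on base categories and $HU(F,\sigma)=(F,\sigma_F)$. Hence by Lemma \ref{completion_transformations_appendix} the identity natural transformation $1_F:F\natto F$ completes uniquely to a pseudo cwf-transformation between them. Invertibility of the resulting $2$-cell is automatic: its base is the identity, and its type-indexed family of morphisms $\psi_{\Gamma,\hf{A}}$ is built out of the isomorphisms $\rho$ attached to the two composed pseudo cwf-morphisms, hence is an iso in each fibre $\indexed{T'}(F\Gamma)$.

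For the two coherence requirements of a pseudonatural transformation (compatibility with identity $1$-cells and with composition of $1$-cells), I would note that all the relevant invertible pseudo cwf-transformations, together with the coherence data of the pseudofunctor $H$, are obtained via the completion operation of Lemma \ref{completion_transformations_appendix} starting from identity natural transformations. By Lemma \ref{comm_completion_appendix}, completion commutes with vertical and horizontal composition and with identities, so each required equation between pseudo cwf-transformations reduces to the corresponding equation between identity natural transformations on the base functors, which holds on the nose.

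The main obstacle is conceptual rather than computational: one must resist the temptation to unfold the definition $\sigma^{\epsilon}_{\Gamma}(\hf{A})=\Inverse(\hf{A}(\id_\Gamma))$ and its associated $\rho$, and instead recognize that, because all bases agree strictly, the coherence data are forced by the universal property established in Lemma \ref{completion_transformations_appendix}. This is the same move that made the proof of pseudonaturality of $\eta$ a one-line argument, and it carries over verbatim here.
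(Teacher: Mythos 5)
Your proof is correct and follows essentially the same route as the paper, whose own proof of this lemma is literally ``exactly the same reasoning as for $\eta$'': both composites around the pseudonaturality square share the base functor $F$, so the required invertible $2$-cell is generated from $1_F$ by the completion lemma, and the coherence laws reduce via the commutation lemma to trivial equations between identity natural transformations. No gaps.
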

\begin{proof}
Exactly the same reasoning as for $\eta$.
\end{proof}

\begin{theorem}
We have the following biequivalences of $2$-categories.
\[
\begin{array}{lcr}
\xymatrix{
\FL\ 
\ar@<0.5ex>[r]^{H\ \ \ \ }&
\ \CWFFL
\ar@<0.5ex>[l]^{U\ \ \ \ }
}&~~~~~~~~&
\xymatrix{
\LCC\ 
\ar@<0.5ex>[r]^{H\ \ \ \ }&
\ \CWFLCC
\ar@<0.5ex>[l]^{U\ \ \ \ }
}
\end{array}
\]
\end{theorem}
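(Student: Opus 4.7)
The plan is to exploit the identity $UH = \mathrm{Id}$, which holds on the nose by construction: $H$ equips a category with the functorial-family cwf structure and $U$ strips it off. Hence to establish the biequivalences it suffices to exhibit pseudonatural transformations $\eta : \mathrm{Id} \Rightarrow HU$ and $\epsilon : HU \Rightarrow \mathrm{Id}$ of pseudofunctors together with invertible modifications $\epsilon \eta \Rightarrow \id$ and $\eta \epsilon \Rightarrow \id$. Since $HU(\C, T) = (\C, T_\C)$, each component $\eta_{(\C, T)}$ and $\epsilon_{(\C, T)}$ is to be a pseudo cwf-morphism with base functor $\mathrm{Id}_\C$, so only its action on types has to be specified.

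First I would construct $\eta_{(\C, T)}$: take base functor $\mathrm{Id}_\C$, $\rho_{\Gamma, A} = \id$, and translate a syntactic type $A\in \Ty(\Gamma)$ to the functorial family $\sigma^\eta_\Gamma(A)(\delta) = \p_{A[\delta]}$ with $\sigma^\eta_\Gamma(A)(\delta, \gamma) = \ext{\gamma\p, \q}$; this is literally the substitution behaviour already stipulated by $T$, so the pullback conditions are immediate and Lemma~\ref{completion_cwfmorphisms} then produces the remaining cwf-morphism data uniquely. Next I would construct $\epsilon_{(\C, T)}$ in the opposite direction: again with base $\mathrm{Id}_\C$, set $\sigma^\epsilon_\Gamma(\hf{A}) = \Inverse(\hf{A}(\id))$, the syntactic inverse image of the chosen display map, and take $\rho_{\Gamma, \hf{A}}$ to be the isomorphism between $\hf{A}(\id)$ and $\p_{\Inverse(\hf{A}(\id))}$ in $\C/\Gamma$ supplied by Lemma~\ref{fibres_vs_slices}; Lemma~\ref{completion_cwfmorphisms} again extends this to a pseudo cwf-morphism. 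Preservation of identity types, democracy, and (in the $\LCC$ case) $\Pi$-types is handled exactly once by invoking Proposition~\ref{preservation_structure}, since both base functors are identities and therefore preserve all limits and local exponentials.

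For pseudonaturality of $\eta$ and $\epsilon$, a morphism $(F, \sigma)$ produces a square whose two legs are pseudo cwf-morphisms sharing the base functor $F$; the required invertible pseudo cwf-transformation is obtained by applying Lemma~\ref{completion_transformations} to the identity natural transformation on $F$. The coherence conditions reduce to equalities of base natural transformations, which are trivial, by Lemma~\ref{comm_completion}. The invertible modifications $m_{(\C, T)} : (\epsilon\eta)_{(\C, T)} \Rightarrow \id_{(\C, T)}$ and $m'_{(\C, T)} : (\eta\epsilon)_{(\C, T)} \Rightarrow \id_{(\C, T)}$ are constructed in the same way: both compositions have $\mathrm{Id}_\C$ as base, so Lemma~\ref{completion_transformations} applied to $\id_{\mathrm{Id}_\C}$ provides the second components, and the modification coherence laws are another routine consequence of Lemma~\ref{comm_completion}.

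The main obstacle will be proving that these modifications are \emph{invertible}, i.e.\ that at each type the second component is an iso in the indexed category. For $m$ this amounts to showing $\sigma^\epsilon_\Gamma(\sigma^\eta_\Gamma(A)) = \Inverse(\p_A) \cong A$ in $\indexed{T}(\Gamma)$, which follows from the isomorphism $\alpha_\delta$ established (in the appendix's Lemma on the inverse image) for an arbitrary $\delta : \Delta \to \Gamma$ in a democratic cwf with $\Sigma$ and $\mathrm{Id}$. For $m'$ one must show that any functorial family $\hf{A}$ is recovered up to type isomorphism from the inverse image of its display map and then re-extension via $\eta$, which reduces to the freshness clause in the construction of $\epsilon$ together with the uniqueness of chosen pullbacks in $T_\C$ up to unique iso of display maps. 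Once these two isomorphism statements are in place, all remaining verifications are routine diagram chases that bottom out in Lemma~\ref{comm_completion}.
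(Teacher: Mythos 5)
Your proposal follows the paper's proof essentially verbatim: the same reduction via $UH=\mathrm{Id}$, the same definitions of $\eta_{(\C,T)}$ and $\epsilon_{(\C,T)}$ with identity base functors completed by Lemma~\ref{completion_cwfmorphisms}, and the same use of Lemmas~\ref{completion_transformations} and~\ref{comm_completion} for pseudonaturality, the modifications, and all coherence laws. The one point you flag as the ``main obstacle''---invertibility of the modifications---is in fact automatic from the construction, since the $\psi$-component produced by Lemma~\ref{completion_transformations} from an identity natural transformation is a composite of the isomorphisms $\rho$ and $\rho'$ (the latter already encoding $\p_{\Inverse(\hf{A}(\id))}\iso\hf{A}(\id)$ from Proposition~\ref{fibres_vs_slices}), so no separate argument is needed.
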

\begin{proof}
We need to define invertible modifications $m: \eta\epsilon \to 1$ and $m':\epsilon\eta \to 1$. Taken at each cwf $(\C, T)$, 
$m_{(\C, T)}$ and $m'_{(\C, T)}$ are both generated from the identity natural transformation using Lemma \ref{completion_transformations_appendix}.
It is obvious by Lemma \ref{comm_completion_appendix} that they satisfy the required coherence law.
\end{proof}

\end{document}